\documentclass[11pt]{article}
\usepackage[margin=1in]{geometry}
\usepackage[T1]{fontenc}
\usepackage{lmodern}
\usepackage{microtype}
\usepackage{booktabs} 
\usepackage[ruled]{algorithm2e} 
\usepackage{tikz}
\usepackage{bm}
\usepackage{amsmath,amssymb,amsthm}
\usepackage{thmtools}
\usepackage{thm-restate}
\usepackage[authoryear,round]{natbib}
\usepackage[colorlinks=true,citecolor=blue,linkcolor=blue,urlcolor=blue]{hyperref}
\usetikzlibrary{arrows.meta,positioning,calc}

    \SetAlFnt{\small}
\SetAlCapFnt{\small}
\SetAlCapNameFnt{\small}
\SetAlCapHSkip{0pt}
\IncMargin{-\parindent}
\providecommand{\FIGURE}[3]{\centering#1\caption{#2}#3}
\providecommand{\TABLE}[3]{\caption{#1}#2#3}
\providecommand{\Description}[1]{}
\newcommand{\R}{\mathbb{R}}
\newcommand{\N}{\mathbb{N}}
\newcommand{\Q}{\mathbb{Q}}
\newcommand{\inter}{\operatorname{int}}
\newcommand{\rank}{\operatorname{rank}}
\newcommand{\ev}{\operatorname{ev}}

\newcommand{\mR}{\mathcal{R}}

\newcommand{\TC}{\operatorname{TC}}
\newcommand{\QTC}{\vQ^{\Pi_{\TC_X}}}
\newcommand{\QTCP}{\vQ^{\Pi_{\TC_X}}_{|\partial X}}

\newcommand{\PP}{\mathbb{P}}
\newcommand{\EE}{\mathbb{E}}
\newcommand{\va}{\bm{a}}
\newcommand{\vc}{\bm{c}}
\newcommand{\vn}{\bm{n}}
\newcommand{\vp}{\bm{p}}
\newcommand{\vbasin}{\bm{s}}
\newcommand{\vu}{\bm{u}}
\newcommand{\vvec}{\bm{v}}
\newcommand{\vx}{\bm{x}}
\newcommand{\vy}{\bm{y}}
\newcommand{\vtheta}{\bm{\theta}}
\newcommand{\vomega}{\bm{\omega}}
\newcommand{\vzeta}{\bm{\zeta}}
\newcommand{\vF}{\bm{F}}
\newcommand{\vG}{\bm{G}}
\newcommand{\vQ}{\bm{Q}}
\newcommand{\BFnabla}{\bm{\nabla}}

\theoremstyle{plain}
\newtheorem{theorem}{Theorem}[section]

\newtheorem{proposition}[theorem]{Proposition}
\newtheorem{lemma}[theorem]{Lemma}
\newtheorem{corollary}[theorem]{Corollary}
\newtheorem{example}[theorem]{Example}
\newtheorem{definition}[theorem]{Definition}
\newtheorem{assumption}{Assumption}[section]
\newtheorem{remark}{Remark}[section]
\newenvironment{proofsketch}{\begin{proof}[Proof sketch]}{\end{proof}}

\AtBeginDocument{%
}

\title{Inference From Random Restarts}
\author{%
  Moeen Nehzati\thanks{Department of Economics, New York University. Email: \texttt{moeen.nehzati@nyu.edu}.}
  \and
  Diego Cussen\thanks{Department of Economics, New York University. Email: \texttt{dc5004@nyu.edu}.}
}
\date{\today}

\begin{document}

\maketitle

\begin{abstract}
  Random-restart heuristics are widely used in nonconvex optimization and equilibrium computation: practitioners run a local algorithm from many initial conditions and interpret repeated convergence to the same output as evidence that the result is robust, dominant, or even unique. Despite its widespread use, this reasoning is usually informal. We provide a probabilistic framework for interpreting restart evidence. We give broad, easy-to-verify sufficient conditions under which repeated runs of a solver can be treated as independent draws from a categorical distribution induced by random initial conditions. Within this framework, we develop Bayesian inference from repeated identical outputs. We derive posterior concentration rates for basin size and uniqueness. These rates demonstrate that uniqueness is inherently harder than learning basin size: posterior concentration for uniqueness is polynomial, whereas basin size concentrates exponentially fast. We also provide a verification protocol for checking whether a given problem fits our framework. We demonstrate the protocol on a widely used equilibrium solver for mixed-logit demand with multi-product firms, and complement the verification exercise with posterior tables that apply to any restart experiment satisfying the protocol. We conclude by delineating limits of restart-based inference, including failures induced by solver--problem mismatch and limited visibility of alternative outcomes.

\end{abstract}

\section{Introduction}
In many optimization problems, we do not have access to a solver that is guaranteed to find a global optimum. Instead, we often rely on local methods whose outcomes can depend on the choice of initial conditions. A common response is to run the solver from many initializations. When all runs converge to the same terminal outcome, this is often interpreted as evidence for something more. The following are three common interpretations:

\begin{itemize}
    \item The solution is not driven by the particular initial conditions chosen: it is a robust feature of the problem.
    \item The solution is reproducible and dominant: any run is likely to find the same solution.
    \item The solution is unique or globally optimal: there are no other solutions to the problem.
\end{itemize}

These three interpretations are related but distinct. The first states that the solution is not an artifact of a particular initialization, the second is a stronger claim about the commonality of that solution, and the third takes the additional step to rule out any other solutions.

Each of these interpretations has both a solver-level and a problem-level counterpart. Restart evidence speaks directly to the solver under the restart distribution; upgrading it to a claim about the underlying problem requires additional assumptions linking solver terminal outcomes to genuine solutions.

Though commonplace, these are typically heuristics and lack a formal inferential interpretation: it is not obvious when they are valid, or how many runs are required before the conclusions are credible. This paper provides a probabilistic framework for interpreting repeated convergence under different initial conditions, a pattern that appears across optimization, dynamical systems, and equilibrium computation.

In nonconvex optimization, random restarts are a standard response to limitations of local methods. The multistart heuristic samples trial points, launches a local routine from each point, and reports the best solution found; related metaheuristics such as iterated local search also build on repeated calls to local search routines \citep[Ch.~1, Sec.~1.2]{bertsekas2016nonlinear, torn1989global, lourenco2019iterated}. In these settings, repeated convergence to the same terminal outcome is often read as evidence that the reported solution is robust under the restart distribution.

Call the set of initial conditions that lead to a solution its basin of attraction, with size measured under the law governing initial conditions. Repeatedly observing the same outcome is direct evidence that this outcome has a large basin under the restart distribution and the solver. This connects restart evidence to stability interpretations in dynamical systems: \citet{menck2013basin} advocates basin size as a measure of stability of solutions to nonlinear systems, and \citet{schafer2015decentral,kim2016building,menck2014dead} use repeated convergence to the same solution in power systems as evidence for stability.

In certain applications like equilibrium analysis, a solution having a dominant basin may not be enough, since the existence of another solution with a small basin can weaken the predictive content of the model. Traffic models provide one example: \citet{wang2021optimal, engelson2006congestion} use repeated convergence to the same traffic flow pattern as evidence for uniqueness, a key property for predictive content. Similar concerns arise in applied economics and industrial organization, where formal uniqueness conditions are often unavailable even in well-studied environments such as the differentiated-products demand framework of Berry, Levinsohn, and Pakes \citep{BerryLevinsohnPakes1995}.

Restart-based diagnostics in industrial organization illustrate both the appeal and the danger of this reasoning. In applied demand estimation, \citet{knittel2014estimation} document substantial sensitivity of BLP estimates to starting values and optimization routines even under extensive multi-start exercises, while \citet{dube2012improving} show that careful implementation and many starting values can lead to repeated convergence to the same minimum in benchmark applications. The uniqueness interpretation is explicit in \citet{PakesMcGuire1994}, who report computing equilibria from multiple initial conditions and always converging to the same fixed point, while acknowledging that convergence does not theoretically guarantee uniqueness. Subsequent work shows that such evidence can be misleading: \citet{besanko2010learning} find up to nine distinct equilibria in a model closely related to the Ericson--Pakes \citep{ericson1995markov} framework despite using similar computational methods.

The examples above demonstrate the need to use repeated solver outcomes to speak about global properties of a solver, such as basin size, dominance, and multiplicity of solutions. In contrast to the related approaches reviewed in the next section, this paper restricts attention to a particular restart event: many random initializations have all converged to the same terminal outcome. We ask how this event should change beliefs about two distinct objects: the basin size of the observed terminal outcome and the multiplicity of reachable terminal outcomes. We provide a Bayesian framework for this restart experiment, give conditions under which the resulting evidence is statistically well posed, and characterize posterior concentration as the number of restarts grows. The main message is that basin-size inference and multiplicity inference have different statistical difficulty: posterior belief that the observed basin is large can concentrate exponentially fast, while posterior belief in uniqueness generally concentrates only polynomially because it requires ruling out additional terminal outcomes with small basins.

The rest of the paper is organized as follows. Section~\ref{section:literature} reviews methodological work close to restart-based inference. Section~\ref{sec:strategy} explains why restart evidence is about the solver, not the problem, and motivates our central categorical reduction. Section~\ref{sec:model} gives conditions under which solver behavior is statistically well structured, and Section~\ref{sec:inference} develops the Bayesian framework. Section~\ref{sec:applications} shows how to check whether a concrete application fits the framework, with a mixed-logit equilibrium solver as the worked example, and reports posterior concentration under several priors. Section~\ref{sec:limits} discusses limitations, and Section~\ref{sec:conclusion} concludes.

\section{Related Methods}\label{section:literature}

This section focuses on methodological work close to our setting: learning from repeated runs of a randomly initialized solver, rather than surveying application-specific uses of restarts. The paper sits at the intersection of Bayesian inference for discrete latent structure, geometric analysis of basins of attraction, and the interpretation of algorithmic randomness as data. While each literature contains relevant tools, few directly address the inferential problem induced by repeated random initializations of a general solver. Our contribution is to clarify this gap and provide a framework that is both probabilistically coherent and compatible with algorithmic geometry.

\paragraph{Bayesian Inference For Discrete Latent Distributions}
At a formal level, repeated runs of a randomly initialized solver produce categorical observations drawn from an unknown discrete distribution over outcomes. A natural starting point is therefore Bayesian inference for discrete distributions, including Dirichlet and Dirichlet process priors \citep{ferguson1973bayesian,ghosal2017fundamentals}. These priors are widely used to model uncertainty over unknown probability masses and to allow for an unbounded number of latent categories.

However, directly applying nonparametric Bayesian machinery in our setting is not straightforward. Dirichlet process priors place mass on infinitely many potential atoms, most of which are never observed. In the present context, solver outcomes are not arbitrary labels: they correspond to basins of attraction induced by the geometry of an objective function and the dynamics of an algorithm. As a result, the parameters of a nonparametric prior are not free, but constrained by geometric and dynamical considerations. Our framework makes explicit which aspects of the outcome distribution are identifiable from solver outputs and what assumptions need to be imposed on the solver for obtaining them.

For scalar minimization, a related literature studies optimal stopping rules for random restarts of local solvers \citep{boender1987bayesian, zielinski1981statistical, Boender1995}. It assumes an unknown finite number of local optima, each with positive basin size, and chooses a stopping rule based on the observed local-optimum values. Our work applies to more general problems by establishing when solver outcomes can be interpreted as random variables and when there are finitely many solutions.

\paragraph{Basins of Attraction and Algorithmic Geometry}
A separate literature studies basins of attraction in dynamical systems and optimization. Classical results characterize the stability, regularity, and measure-theoretic properties of basins under gradient flows and related dynamics \citep{hirsch2013differential}. In nonconvex optimization, recent work has analyzed how properties of the objective function and algorithm—such as smoothness, step size, and curvature—determine convergence behavior and basin structure \citep[e.g.,][]{lee2016gradient, ge2015escaping, bhojanapalli2016global}.

This literature provides essential intuition for why basins exist and why their sizes may vary dramatically across solutions. However, it is almost entirely deterministic in nature. Basin geometry is treated as an object to be characterized given full knowledge of the function and dynamics, rather than inferred from finite observations. In contrast, our setting assumes that the researcher observes only the terminal outputs of a solver under random initialization, not the underlying geometry itself. Our contribution is not to refine geometric characterizations of basins, but to use the minimal geometric fact—that basins induce a probability distribution over outcomes under random initialization—as the basis for statistical inference.

\paragraph{The Gap Between Geometry and Inference}
Bridging the two perspectives above is nontrivial. While geometry determines basin sizes, solver outputs alone do not reveal the geometric features that generated them. Conversely, standard Bayesian nonparametric priors require specifying or learning infinitely many parameters corresponding to both observed and unobserved outcomes, without reference to the constraints imposed by algorithmic dynamics. This mismatch creates a gap: many prior specifications do not agree with the geometry of the problem.

Our approach can be viewed as addressing precisely this gap. Rather than attempting to infer full geometric structure or to endow the outcome distribution with unrestricted nonparametric flexibility, we prove when finiteness of solutions holds and use that to build small yet expressive parametric models. We focus on inferential questions that are well posed given the data—such as posterior beliefs about dominance or uniqueness—and characterize how evidence accumulates as the number of solver runs grows. In this sense, our framework can be seen as deliberately intermediate: richer than purely heuristic arguments, but more limited and thus more disciplined than unconstrained nonparametric modeling.




\section{Statistical strategy}
\label{sec:setup}\label{sec:strategy}
\subsection{Problems and Solvers}
Because the data used for inference are generated by a particular solver, the inference necessarily concerns that solver, even when the ultimate goal is to learn about the underlying problem. Extending inference from solver to problem therefore requires understanding the relationship between the two.

While a problem has a set of solutions, a solver produces terminal outcomes over the space of initial conditions, including possible failure outcomes. We call the set of terminal outcomes that the solver can produce its reachable terminal outcomes (RTOs). Ideally, the solver's RTOs coincide with the problem's solution set, but in general they need not. Hence speaking about a problem's solutions based on a solver's observed behavior is not automatic. A solver may produce terminal outcomes that are not genuine solutions, or it may fail to produce some genuine solutions. We call it \emph{sound} if it only produces genuine solutions of the underlying problem, and \emph{complete} if it can produce every genuine solution.

Completeness is integral to the interpretation of restart-based inference as inference about the underlying problem. If a solver is not complete, then some solutions of the problem are systematically excluded from the solver's outcome and invisible to restart-based inference; \citet{besanko2010learning}, for example, show that the Pakes--McGuire algorithm can fail to compute certain classes of equilibria. In contrast, if a solver is complete, then every solution of the problem is represented among the solver's possible outcomes, and hence potentially visible to restart-based inference.

We argue that soundness, in contrast, is not as important for interpretation. Take a complete but unsound solver. Then the solver's RTOs contain all genuine solutions, along with additional spurious outcomes. If a particular genuine solution occurs frequently across runs even in the presence of these spurious outcomes, then after discarding them it will still occur frequently among the remaining solutions.

Additionally, even if a solver is not sound, making it sound is straightforward when we can verify whether a candidate solution is genuine or not. For example, for a solver that produces fixed points of a function, we can plug in the terminal outcome to check whether it is indeed a fixed point. If it is, return it as a solution; if not, return some default failure outcome.

The rest of this subsection discusses completeness in more detail. Completeness is a natural property for many standard iterative methods, in the sense that every genuine solution is a reachable terminal outcome. For example, gradient descent on a smooth function is complete for the problem of finding stationary points, and hence local minima: if initialized at a stationary point or minimum, the gradient vanishes and the algorithm terminates. Similarly, fixed-point iterations are complete for the problem of finding fixed points, since initializing at a fixed point causes the iteration to terminate immediately. Thus, for many commonly used local methods, every solution can in principle be recovered from a suitable initialization.

Nevertheless, completeness is not automatic. It is easiest to see this when the solver's initialization space and outcome space do not coincide with the problem's solution space. For example, consider solving a linear integer programming problem by first solving a linear relaxation and then applying a rounding procedure. Even when this procedure always returns an integer feasible solution (is sound), the rounding step may systematically exclude some feasible or optimal integer solutions. To follow the intuition of the last paragraph, unlike gradient descent or fixed-point iterations, initializing the rounding procedure at a particular integer solution does not necessarily cause the procedure to return that solution.

A different example arises in continuation and homotopy methods. Such methods typically begin from a known solution of an auxiliary problem and then track a solution path as the auxiliary problem is continuously deformed into the target problem. If the target problem admits multiple solutions, some of them may not be connected to any admissible initial solution of the homotopy. Others may lie on branches that bifurcate from or disconnect from the branch being tracked. These solutions are part of the underlying problem, but they can never be produced by the continuation procedure regardless of initialization. In such cases, the solver is not complete with respect to the full solution set of the problem.

Since this paper focuses on inference from solver outputs, our formal results are solver-level. When using them to speak about the underlying problem, completeness is the key additional requirement: methods that can exhaust all points satisfying a necessary condition, like gradient descent or fixed-point iterations, are more likely to be complete, while rounding and path-following methods can systematically miss solutions.

Still, completeness alone is not sufficient to interpret restart-based evidence as evidence about the underlying problem. Even when every solution can be produced by some initialization, certain solutions may remain effectively invisible to random restarts because they require very special initial conditions. For example, when gradient descent is used to search for critical points of a smooth function, saddle points can be reachable in principle but rarely observed under typical random restarts. We return to this connection between solver structure and problem structure later; Section~\ref{sec:limits} formalizes a stronger condition that rules out such failures of visibility.

\subsection{Statistical Reduction}
We begin with a simple analogy. Consider a biologist exploring a newly discovered island. Suppose every mammal observed so far belongs to the same species. The biologist wants to know whether this is the only mammal species on the island, or whether other species exist but have not yet been encountered. While nontrivial, this is a familiar statistical problem: the observations can be modeled as draws from a categorical distribution over the island's finitely many mammal species, and one can infer the prevalence of the observed species or the number of latent categories in various ways.

\begin{figure}[t]
\FIGURE
{\scalebox{0.95}{\begin{tikzpicture}[
  font=\small,
  >=Latex,
  dot/.style={circle,draw=black!25,fill=black,inner sep=0pt,minimum size=3pt},
  dotB1/.style={dot,fill=blue!70},
  dotB2/.style={dot,fill=teal!70},
  dotB3/.style={dot,fill=orange!85!black},
  dotB4/.style={dot,fill=gray!80},
  dotU/.style={dot,fill=black!35},
  box/.style={draw,rounded corners=2pt,thick,fill=black!2},
  part/.style={draw=black!35,line width=0.6pt},
  arr/.style={->,line width=0.45pt,draw=black!40},
]
  \node[box,minimum width=4.6cm,minimum height=4.4cm] (X) {};

  \node[box,minimum width=4.0cm,minimum height=3.3cm, anchor=south west] (Y) at ($(X.south east)+(2.7cm,0.1cm)$) {};

  \node[font=\bfseries] at ($ (X.north)!0.5!(Y.north) + (0,0.80cm) $) {};

  \path[fill=blue!6]   (X.north west) rectangle ($(X.south east)!0.75!(X.north east)$);
  \path[fill=teal!6]   ($(X.south west)!0.75!(X.north west)$) rectangle ($(X.south east)!0.50!(X.north east)$);
  \path[fill=orange!7] ($(X.south west)!0.50!(X.north west)$) rectangle ($(X.south east)!0.25!(X.north east)$);
  \path[fill=gray!7]   ($(X.south west)!0.25!(X.north west)$) rectangle (X.south east);

  \draw[part] ($(X.south west)!0.25!(X.north west)$) -- ($(X.south east)!0.25!(X.north east)$);
  \draw[part] ($(X.south west)!0.50!(X.north west)$) -- ($(X.south east)!0.50!(X.north east)$);
  \draw[part] ($(X.south west)!0.75!(X.north west)$) -- ($(X.south east)!0.75!(X.north east)$);

  \node[
    font=\small\bfseries,
    text=black!85,
    fill=none,
    rounded corners=1pt,
    inner xsep=6pt,
    inner ysep=2pt
  ] at ($(X.north)+(0,-0.45cm)$) {Initial Conditions};
  \node[
    font=\small\bfseries,
    text=black!85,
    fill=white,
    fill opacity=0.85,
    text opacity=1,
    rounded corners=1pt,
    inner xsep=6pt,
    inner ysep=2pt
  ] at ($(Y.north)+(0,-0.45cm)$) {RTO};

  \coordinate (B1c) at ($(X.north)!0.125!(X.south)$);
  \coordinate (B2c) at ($(X.north)!0.375!(X.south)$);
  \coordinate (B3c) at ($(X.north)!0.625!(X.south)$);
  \coordinate (B4c) at ($(X.north)!0.875!(X.south)$);
  \coordinate (xL)  at ($(X.west)!0.04!(X.east)$);

  \node[dotB1] (x11) at ($(B1c)+(-1.45cm, 0.35cm)$) {};
  \node[dotB1] (x12) at ($(B1c)+(-0.10cm,-0.24cm)$) {};
  \node[dotB1] (x14) at ($(B1c)+( 0.90cm,-0.32cm)$) {};

  \node[dotB2] (x21) at ($(B2c)+(-1.30cm, 0.18cm)$) {};
  \node[dotB2] (x22) at ($(B2c)+( 0.20cm,-0.12cm)$) {};

  \node[dotB3] (x31) at ($(B3c)+(-1.40cm, 0.10cm)$) {};
  \node[dotB3] (x32) at ($(B3c)+( 0.05cm,-0.12cm)$) {};
  \node[dotB3] (x33) at ($(B3c)+( 1.25cm, 0.16cm)$) {};


  \node[anchor=west,text=black!90] at (xL |- B1c) {\small $B_0$};
  \node[anchor=west,text=black!90] at (xL |- B2c) {\small $B_1$};
  \node[anchor=west,text=black!90] at (xL |- B3c) {\small $B_2$};
  \node[anchor=west,text=black!90,font=\bfseries] at (xL |- B4c) {\small $\vdots$};

  \node[dot,draw=teal!55,fill=teal!70] (s1) at ($(Y.north)!0.42!(Y.south)$) {};
  \node[dot,draw=orange!65!black,fill=orange!85!black] (s2) at ($(Y.north)!0.62!(Y.south)$) {};
  \node[anchor=west,text=black!90] at ($(s1)+(0.12cm,0)$) {\small $\vy_1$};
  \node[anchor=west,text=black!90] at ($(s2)+(0.12cm,0)$) {\small $\vy_2$};

  \node[dotU] at ($(Y.center)+(-1.10cm, 0.95cm)$) {};
  \node[dotU] at ($(Y.center)+( 1.15cm, 0.75cm)$) {};
  \node[dotU] at ($(Y.center)+(-0.95cm,-1.00cm)$) {};
  \node[dotU] at ($(Y.center)+( 1.05cm,-0.70cm)$) {};
  \node[dotU] at ($(Y.center)+( 0.20cm, 1.15cm)$) {};

  \node[circle,draw=black!55,line width=0.7pt,fill=black!2,inner sep=2.0pt] (na) at ($(Y |- B1c)+(0,0.1cm)$) {\small $\dagger$};

  \draw[arr,draw=blue!55] (x11) -- (na);
  \draw[arr,draw=blue!55] (x12) -- (na);
  \draw[arr,draw=blue!55] (x14) -- (na);

  \draw[arr,draw=teal!55] (x21) -- (s1);
  \draw[arr,draw=teal!55] (x22) -- (s1);

  \draw[arr,draw=orange!65!black] (x31) -- (s2);
  \draw[arr,draw=orange!65!black] (x32) -- (s2);
  \draw[arr,draw=orange!65!black] (x33) -- (s2);

  \draw[rounded corners=2pt,thick] (X.north west) rectangle (X.south east);
  \draw[rounded corners=2pt,thick] (Y.north west) rectangle (Y.south east);
\end{tikzpicture}}}
{A solver $g$ maps an initial condition into a reachable terminal outcome (RTO) or to $\dagger$, which denotes non-convergence or an invalid terminal outcome. Equivalence of initial conditions under $g$ partitions the initial-condition space into basins of attraction, potentially infinitely many of them, denoted by $\vdots$ in the figure.\label{fig:basins-to-rto}}
{}
\end{figure}

Figure~\ref{fig:basins-to-rto} motivates our central reduction. When the terminal-outcome map is measurable, its output becomes a discrete random variable. If it takes only finitely many RTOs, the solver geometry induces a categorical distribution over the terminal outcomes. We call this the \emph{categorical reduction}.

Restart-based inference would be similarly straightforward if solver outcomes automatically had the same structure: draws from a categorical distribution over a finite outcome space. Unfortunately, this is not automatic. First, in our setting, the random object with a known law is the initialization $\vx\sim\mu$. To interpret restart outcomes statistically, the terminal-outcome map must be measurable so that $\vy=g_{\vQ}(\vx)$ is a well-defined random variable and the induced pushforward law on outcomes exists.

Second, even if $g_{\vQ}$ is measurable, the latent outcome structure induced by random restarts can still be complicated: the image $g_{\vQ}(X)$ may be infinite, and the induced basin partition of $X$ can have rich geometry. In such settings, even formulating an ``outcome count'' is delicate, and inference from finitely many runs generally does not reduce to standard categorical models.

In particular, repeated convergence to a single observed outcome certifies only that this outcome has positive restart probability; it does not rule out infinitely many additional outcomes with small (possibly total) probability mass.
This motivates why we separate inference about the basin mass of an observed outcome (which becomes straightforward once $Y$ is a random variable) from inference about multiplicity or outcome counts (which requires additional structural control such as finiteness).

The following two examples illustrate these pathologies.

\begin{example}[Nonmeasurable terminal-outcome map]
Let $X=[0,1]$ equipped with its Borel $\sigma$-algebra and let $\mu$ be a Borel probability measure. Let $S\subseteq[0,1]$ be a non-Borel set and define
\[
g:X\to\{0,1\},\qquad g(x)=\mathbb{I}(x\in S).
\]
If $x\sim\mu$, then the terminal outcome $Y=g(x)$ is not a Borel random variable, since $g^{-1}(\{1\})=S$ is not Borel measurable. Hence there is no probability law on $\{0,1\}$ induced by the original experiment $x\sim\mu$ followed by $Y=g(x)$: such a law $\nu$ would have to satisfy $\nu(A)=\mu(g^{-1}(A))$ for every Borel set $A\subseteq\{0,1\}$, but this expression is not defined for $A=\{1\}$.

One might try to ignore the map $g$ and instead infer a law for $Y$ from repeated evaluations. However, the convergence of empirical frequencies is itself a probabilistic assertion normally guaranteed by measurability.

Since $Y$ is not measurable, the law of large numbers does not apply to $\frac1n\sum_{i=1}^n \mathbb{I}\{g(x_i)=1\}$.
Thus repeated evaluations of the terminal-outcome map need not determine a consistent probability law on $\{0,1\}$. Any distribution placed directly on $Y$ would therefore be an additional modeling assumption, not a distribution justified by the original stochastic system.
\end{example}

\begin{example}[Infinitely many outcomes]
Let $\va\in\R^k$ be nonzero and $b\in\R$. Define
\[
f_{\va,b}:\R^k\to\R,\qquad f_{\va,b}(\vx)=(\va^\top \vx-b)^2.
\]
Then $\arg\min f_{\va,b}=\{\vx\in\R^k:\;\va^\top \vx=b\}$ is an affine hyperplane of dimension $k-1$, hence uncountable. In particular, any solver whose terminal outcome records which minimizer on this set is reached can have continuum-many possible terminal outcomes under random restarts.
\end{example}

Our approach is therefore intentionally front-loaded.

We methodically simplify the solver geometry by giving sufficient conditions under which restart outcomes admit the categorical reduction: a measurable terminal-outcome map with a finite (or effectively finite) set of terminal outcomes. Once these structural conditions are verified, restart outcomes can be treated as categorical data, and inference about basin sizes and solution multiplicity can be carried out using standard Bayesian machinery. Section~\ref{sec:model} develops the structural results; Section~\ref{sec:inference} develops the corresponding inference conditional on them.

\section{Model}\label{sec:model}
Section~\ref{sec:strategy} explained why restart-based inference first needs a categorical reduction from solver geometry to categorical data: the terminal-outcome map must be measurable, and the set of reachable terminal outcomes must be finite, or effectively finite at the relevant observational resolution. This section studies solver-map properties that deliver those requirements.

Practitioners who can verify these assumptions directly may skip to Section~\ref{sec:inference}. The remainder of this section gives our route for dynamic solvers. As discussed in Section~\ref{sec:strategy}, these are solver-level properties; Section~\ref{sec:limits} discusses links to problem-level conclusions.

\subsection{Measurable Dynamic Solvers}
We first formalize dynamic solvers. Intuitively, dynamic solvers start with an initial guess and update it continuously according to a fixed updating rule. Many common solvers, including gradient descent and fixed-point iterations, fall into this category. Their fixed dynamic structure lets us verify the measurability part of the categorical reduction under mild conditions.

We model ideal solvers without numerical imprecision; Subsection~\ref{subsec:numerical} then adds observation maps for numerical error.

Dynamic solvers are modeled as projected autonomous dynamical systems: a system starting from an initial guess evolves continuously according to the projection of a fixed rule $\vQ$. This continuous-time idealization is a tool: the inference in Section~\ref{sec:inference} uses only the terminal map $g_{\vQ}$, its reachable terminal outcomes (RTO), and their basins of attraction, not whether the solver is implemented in discrete or continuous time.

Formally, let $X\subseteq \mathbb{R}^d$ be the space of possible solutions and let $\mu$ be a probability measure on $(X,\mathcal{B}_X)$, where $\mathcal{B}_X$ is the Borel $\sigma$-algebra on $X$. Initial conditions are drawn i.i.d.\ from $\mu$. Let $\lambda_X$ denote $d$-dimensional Lebesgue measure restricted to $X$. We assume $\lambda_X \ll \mu$. Equivalently, for any measurable $A\subseteq X$, if $\lambda_X(A)>0$ then $\mu(A)>0$. In particular, any set of initial conditions with positive Lebesgue measure has positive probability under $\mu$, and therefore will be visited almost surely under infinitely many random restarts.

\begin{assumption}\label{ass:convex-closed}
    $X \subseteq \R^d$ is closed and convex with non-empty interior.
\end{assumption}
As algorithms may not converge to any valid solution (they may oscillate, etc.), we introduce an absorbing graveyard state $\dagger$ to represent non-convergence/non-existence of a valid terminating solution. Thus the space of all possible outputs of a solver is $\tilde X = X\sqcup\{\dagger\}$ (equipped with the disjoint union $\sigma$-algebra). Before finally defining a solver, we need the following definitions:

\begin{definition}[Tangent and Normal Cones]
For closed and convex $X$ (Assumption~\ref{ass:convex-closed}), the tangent cone to $X$ at $\vx$ is defined by:
\[
\TC_X(\vx)
:= \overline{
\Bigl\{
\lambda (\vy-\vx) \mid \vy \in X,\ \lambda \ge 0
\Bigr\}.
}
\]
As this is a closed convex cone, projection onto it is well defined. Let $\Pi_{\TC_X(\vx)}$ denote the projection operator onto $\TC_X(\vx)$.

Dually, we can define the normal cone to $X$ at $\vx$ by:
\[
N_X(\vx) := \{ \vvec \in \R^d \mid \forall \vy \in X:\; \langle \vvec, \vy - \vx \rangle \leq 0 \}
\]

For convex sets, normal cones and tangent cones are polars of each other: $N_X(\vx) = \TC_X(\vx)^\circ$
\end{definition}

One can think of $\TC_X(\vx)$ as the set of directions that point "inside" $X$ at $\vx$ and $N_X(\vx)$ as the normal vectors of all the possible supporting hyperplanes at $\vx$.

Using this, we can now define projected dynamics:

\begin{definition}[Projected Autonomous Dynamics]
    Let $X\subset\mathbb{R}^d$ be closed and convex (Assumption~\ref{ass:convex-closed}), and let
    $\vQ:X\to\mathbb{R}^d$ be a continuous vector field capturing the solver dynamics. For $\vx\in X$, define the
    tangent-cone projected vector field
    \[
        \QTC(\vx) \;:=\;\Pi_{\TC_X(\vx)}\,\vQ(\vx),
    \]
    where $\TC_X(\vx)$ denotes the tangent cone of $X$ at $\vx$.
        
    Given an initial condition $\vx_0\in X$, a projected autonomous dynamical system $\vx$ satisfies
    \[
        \dot \vx(t) \;=\; \QTC(\vx(t)), \qquad \vx(0)=\vx_0.
    \]
    Dually, this can be stated as a differential inclusion concerning the normal cone:
    \[
        \vQ(\vx(t)) \in N_X(\vx(t)) + \dot \vx(t) , \qquad \vx(0)=\vx_0.
    \]
\end{definition}
\begin{remark}
    The projection operator $\Pi_{\TC_X(\vx)}$ is there to ensure the dynamics never leave the space of possible solutions $X$.
\end{remark}

\begin{remark}[$\QTC(\vx)=\vQ(\vx)$ on $\inter X$]
    On the interior of $X$, the projection operator $\Pi_{\TC_X(\vx)}$ is the identity map. Hence we can write:
    \[\QTC(\vx) = \begin{cases}
        \vQ(\vx) & \vx \in \inter X \\
        \Pi_{\TC_X(\vx)} \vQ(\vx) & \vx \in \partial X
    \end{cases}
    \]
    So if $X$ is open (e.g., $\mathbb{R}^d$), then $\QTC = \vQ$.
\end{remark}

Our end goal is to work with maps that are limits of such solutions as $t \to \infty$. Since some dynamics may be locally well defined but fail to exist for all time, we allow maximal local solution maps and record finite-time failure as a non-convergent solver outcome.
\begin{assumption}[Well-Behaved Projected Dynamics]\label{ass:well-behaved-proj}
    For the projected system induced by $\vQ$, each $\vx_0\in X$ has a
    maximal existence time $\tau(\vx_0)\in(0,\infty]$ and a solution map
    $\vG:D\to X$, continuous on
    $D:=\{(\vx_0,t)\in X\times[0,\infty):t<\tau(\vx_0)\}$, with
    $\tau:X\to(0,\infty]$ lower semicontinuous, such that
    for each $\vx_0\in X$ the trajectory $t\mapsto \vG(\vx_0,t)$ is absolutely
    continuous on compact subintervals of $[0,\tau(\vx_0))$, satisfies
    $\vG(\vx_0,0)=\vx_0$, and
    \[
    \frac{d}{dt}\vG(\vx_0,t) \;=\; \QTC\!\big(\vG(\vx_0,t)\big)
    \quad\text{a.e.},
    \]
    and is maximal: no solution in $X$ with initial condition $\vx_0$ extends it
    beyond $\tau(\vx_0)$, and any other such solution agrees on its common
    domain.
\end{assumption}
The following is a sufficient condition for this to hold:
\begin{restatable}{lemma}{lemLipschitz}\label{lemma:Lipschitz}
    If $X$ is closed and convex (Assumption~\ref{ass:convex-closed}) and $\vQ$ is locally Lipschitz on $X$, \ref{ass:well-behaved-proj} is satisfied. If $\vQ$ is globally Lipschitz on $X$, then the solution map is global: $\tau(\vx_0)=\infty$ for every $\vx_0\in X$.
\end{restatable}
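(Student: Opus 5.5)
The plan is to recast the projected dynamics as a differential inclusion and use the standard theory of monotone operators and sweeping processes for convex sets, rather than treating the discontinuous field $\QTC$ as an ODE. Write the system as
\[
\dot\vx(t)\in \vQ(\vx(t)) - N_X(\vx(t)),
\]
the dual form recorded in the definition of projected autonomous dynamics. Since $X$ is closed and convex (Assumption~\ref{ass:convex-closed}), $N_X=\partial I_X$ is maximal monotone. Existence and uniqueness of absolutely continuous solutions then follow from Brezis' theory for Lipschitz perturbations of maximal monotone operators, applied first with $\vQ$ globally Lipschitz. I would also use the known fact (Brezis; Nagurney--Zhang for projected dynamical systems) that the solution has right derivative equal to the minimal-norm element of $\vQ(\vx)-N_X(\vx)$. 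That element equals $\Pi_{\TC_X(\vx)}\vQ(\vx)$ by the Moreau decomposition, using the polarity $N_X=\TC_X^\circ$. So solutions of the inclusion solve $\dot\vx=\QTC(\vx)$ a.e., and conversely.

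The key estimate is the one-sided Lipschitz bound, obtained from monotonicity of $N_X$. For two solutions $\vx,\vy$ we have $\langle \dot\vx-\dot\vy,\vx-\vy\rangle\le L\|\vx-\vy\|^2$, so Gronwall gives $\|\vx(t)-\vy(t)\|\le e^{Lt}\|\vx_0-\vy_0\|$. With equal initial data this gives uniqueness, and in general it gives Lipschitz dependence on $\vx_0$. Joint continuity of $\vG$ in $(\vx_0,t)$ follows by combining this with continuity in $t$. In the globally Lipschitz case, $\|\vQ(\vx)\|\le a+L\|\vx\|$ together with the same monotonicity computation against a fixed point of $X$ gives a linear-growth bound. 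Hence there is no blow-up and $\tau\equiv\infty$, which is the second claim.

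For the locally Lipschitz case I would localize. For $R>0$, let $\vQ_R$ be a globally Lipschitz extension of $\vQ$ restricted to $X\cap \bar B_R$; for instance compose $\vQ$ with the radial retraction onto $\bar B_R$, which is Lipschitz. Let $\vG_R$ be the resulting global flow, and define
\[
\tau(\vx_0)=\sup_R \inf\{t:\|\vG_R(\vx_0,t)\|\ge R\}.
\]
By uniqueness, the flows $\vG_R$ agree before exiting $B_R$, so they patch together into a maximal solution $\vG$. The usual argument shows that if $\tau<\infty$ the trajectory leaves every ball. Maximality and uniqueness then follow directly: any other solution stays in some ball on compact subintervals, so it coincides with $\vG_R$ there.

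The main obstacle is proving lower semicontinuity of $\tau$ and continuity of $\vG$ on $D$. Fix $(\vx_0,t_0)$ with $t_0<\tau(\vx_0)$. The trajectory on $[0,t_0']$, for some $t_0<t_0'<\tau(\vx_0)$, stays in a ball $B_{R-1}$. By the Gronwall estimate for $\vG_R$, for $\vy_0$ within $e^{-Lt_0'}$ of $\vx_0$ the trajectory $\vG_R(\vy_0,\cdot)$ stays in $B_R$ on $[0,t_0']$. It is therefore a true solution, so $\tau(\vy_0)>t_0'$, which proves lower semicontinuity and shows $D$ is relatively open. Continuity of $\vG$ near $(\vx_0,t_0)$ is then inherited from $\vG_R$. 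The delicate point is checking that the Brezis-type results apply with the conventions here, namely a closed convex $X$ that may be unbounded and a possibly empty boundary. If $X=\R^d$ the result reduces to Picard--Lindelöf, which serves as a sanity check.
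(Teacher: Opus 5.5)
Your proposal is correct and has the same skeleton as the paper's proof. Both use globally Lipschitz extensions on $X\cap\overline B_R$, exit times, and agreement of the localized flows before exit (by uniqueness). Both define $\tau$ as a supremum of exit times and get maximality by trapping any competing solution in a ball. Both inherit lower semicontinuity of $\tau$ and continuity of $\vG$ from a single localized flow.

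The differences lie in the base case and in how continuity is quantified. The paper simply cites Theorem~2.5 of Nagurney--Zhang for the globally Lipschitz flow. It then uses only continuity of $\widehat{\vG}_{K_R}$ that is uniform on $[0,T]$. You instead rederive the global case from Brezis' theory for $\partial I_X$ together with Moreau's decomposition. This makes explicit that the lazy solution satisfies $\dot\vx=\Pi_{\TC_X(\vx)}\vQ(\vx)$, and that every a.e.\ solution of the projected ODE solves the inclusion, so uniqueness transfers. It also yields the explicit estimate $\|\vx(t)-\vy(t)\|\le e^{Lt}\|\vx_0-\vy_0\|$, which makes your lower-semicontinuity step quantitative. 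So your route is more self-contained, while the paper's is shorter.

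One detail needs fixing. The extension $\vQ\circ r_R$, with $r_R$ the radial retraction onto $\overline B_R$, is not defined in general, because $r_R(\vx)$ need not lie in $X$ when $0\notin X$. Examples are the half-plane $\{x_1\ge 1\}\subset\R^2$ and the domain $X_p\subset(0,\infty)^d$ used in the application. Any of the following repairs it:
\begin{itemize}
\item use $\vQ\circ\Pi_{X\cap\overline B_R}$, which is Lipschitz because the projection is nonexpansive and lands in $X\cap\overline B_R$ (nonempty for every $R>\|\vx_0\|$ entering the supremum);
\item center the retraction at a point of $X$;
\item use the coordinatewise McShane extension, as the paper does.
\end{itemize}
Nothing else in your argument changes.
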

\begin{proof}
    The global Lipschitz statement is implied by Theorem 2.5 of
    \citet{nagurney2012projected}. The locally Lipschitz case follows by
    localizing the vector field and patching the resulting global flows; see
    Appendix~\ref{proof:lipschitz-local-flow}.
\end{proof}
\begin{definition}[Dynamic solver]
Given a continuous vector field $\vQ$ whose projected dynamics satisfy \ref{ass:well-behaved-proj}, let $\vG$ be the associated maximal solution map and $\tau$ its maximal existence time. A dynamic solver is $g_{\vQ}:X\to \tilde X$ defined by
\[
    g_{\vQ}(\vx_0)\;:=\;
    \begin{cases}
        \displaystyle \lim_{t\to\infty} \vG(\vx_0,t)
        & \substack{\text{if }\tau(\vx_0)=\infty\\
        \text{and the limit exists in }X,}\\
        \dagger & \text{otherwise}
    \end{cases}
\]
\end{definition}

\begin{definition}[Reachable terminal outcomes (RTO) and $\mu$-observable RTO]\label{def:RTO}
Let $g:X\to\tilde X$ be any solver map. We call the image $g(X)\subseteq \tilde X$ the set of \emph{reachable terminal outcomes (RTO)} (i.e., possible long-run outcomes) induced by $g$.

For $\vy\in \tilde X$, define its basin of attraction as $B_{\vy}^g := g^{-1}(\vy)$ and its size as $s_{\vy}^g := \mu(B_{\vy}^g)=\mu(g^{-1}(\vy))$. We say that an RTO $\vy$ is \emph{$\mu$-observable} if $s_{\vy}^g>0$ (in particular, for a dynamic solver $g_{\vQ}$, this means $s_{\vy}^{g_{\vQ}}=\mu(g_{\vQ}^{-1}(\vy))>0$). An RTO need not be $\mu$-observable: it can exist as a limiting outcome but never appear under random restarts drawn from $\mu$. When $g$ is clear from context, we drop the superscript and write $B_{\vy}$ and $s_{\vy}$.
\end{definition}

\begin{restatable}{proposition}{propMeasurable}\label{prop:measurable}
    Dynamic solvers are measurable.
\end{restatable}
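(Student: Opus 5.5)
We must show $g_{\vQ}:X\to\tilde X$ is Borel measurable, i.e.\ $g_{\vQ}^{-1}(A)\in\mathcal{B}_X$ for every Borel $A\subseteq X$, and $g_{\vQ}^{-1}(\dagger)\in\mathcal{B}_X$. The plan is to write $g_{\vQ}$ as a pointwise limit of measurable maps restricted to a Borel domain of convergence. We work with Assumption~\ref{ass:well-behaved-proj}: $\vG$ is continuous on $D$, and $\tau$ is lower semicontinuous.

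\textbf{Step 1 (the domain of global existence is Borel).} Since $\tau$ is lower semicontinuous, $\{\tau>t\}$ is open in $X$ for each $t$. Hence
\[
E_\infty:=\{\tau=\infty\}=\bigcap_{n\in\N}\{\tau>n\}
\]
is a $G_\delta$ set, so it is Borel. On $E_\infty$ the maps $\vx_0\mapsto \vG(\vx_0,t)$ are defined for every $t\ge 0$. Each such map is continuous on the open set $\{\tau>t\}\supseteq E_\infty$ by the joint continuity of $\vG$ on $D$. It is therefore Borel measurable on $E_\infty$.

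\textbf{Step 2 (the set where the limit exists is Borel).} This is the main obstacle, because it involves a limit over uncountably many times $t$. Trajectories are continuous in $t$, so the Cauchy criterion can be discretized. The limit $\lim_{t\to\infty}\vG(\vx_0,t)$ exists in $\R^d$ if and only if
\[
\forall k\ \exists N\ \forall s,t\in\Q\cap[N,\infty):\ \|\vG(\vx_0,s)-\vG(\vx_0,t)\|\le 1/k .
\]
Restricting to rational times loses nothing: the supremum of $\|\vG(\vx_0,s)-\vG(\vx_0,t)\|$ over real $s,t\ge N$ equals its supremum over rational ones, by continuity of $t\mapsto\vG(\vx_0,t)$. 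Every condition inside the quantifiers defines a Borel subset of $E_\infty$ by Step 1. The countable unions and intersections therefore give a Borel set $C\subseteq E_\infty$. Because $X$ is closed (Assumption~\ref{ass:convex-closed}) and the trajectory lies in $X$, any such limit automatically lies in $X$. So $C$ is exactly the set where $g_{\vQ}(\vx_0)\in X$, and $g_{\vQ}^{-1}(\dagger)=X\setminus C$ is Borel.

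\textbf{Step 3 (measurability of the limit map).} On $C$ we have $g_{\vQ}(\vx_0)=\lim_{n\to\infty}\vG(\vx_0,n)$. This is a pointwise limit of the Borel maps $\vx_0\mapsto\vG(\vx_0,n)$ from Step 1, valued in $\R^d$. A pointwise limit of Borel maps into a metric space is Borel, so $g_{\vQ}|_C$ is Borel measurable. For a Borel set $A\subseteq X$, we then have $g_{\vQ}^{-1}(A)=(g_{\vQ}|_C)^{-1}(A)$, which is a Borel subset of $C$ and hence of $X$.

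Finally, the $\sigma$-algebra on $\tilde X=X\sqcup\{\dagger\}$ is the disjoint union $\sigma$-algebra. Its measurable sets are therefore of the form $A$ or $A\cup\{\dagger\}$ with $A$ Borel in $X$. Combining Steps 2 and 3 gives measurability of $g_{\vQ}$. The delicate point is Step 2: the reduction to rational times needs continuity of each trajectory, and the Borel property of $E_\infty$ needs the lower semicontinuity of $\tau$.
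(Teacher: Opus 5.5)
Your proof is correct and follows essentially the same route as the paper's. Both arguments get a Borel survival set $\{\tau=\infty\}$ from lower semicontinuity of $\tau$, use continuity of trajectories to reduce the tail behaviour to rational times, and then show that the convergence set and the limit map are Borel. The only difference is cosmetic: you use a discretized Cauchy criterion and a limit along integer times, whereas the paper works coordinatewise with $\limsup$ and $\liminf$ over rational times and reads the limit map directly off those functions.
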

\begin{proofsketch}
Lower semicontinuity of $\tau$ makes each survival set
$A_q=\{\vx:\tau(\vx)>q\}$ Borel. On $A_q$, the time-$q$ flow is continuous;
extending it by a dummy point outside $A_q$ gives a Borel map on $X$. On
$A_\infty=\{\vx:\tau(\vx)=\infty\}$, continuity in $t$ reduces $\limsup$ and
$\liminf$ as $t\to\infty$ to countable suprema and infima over $q\in\Q$. Hence
the convergence set is measurable, and defining $g_{\vQ}=\dagger$ elsewhere
yields a measurable map into $\R^{d}\sqcup\{\dagger\}$. See
Appendix~\ref{proof:measurable}.
\end{proofsketch}

Thus measurability of $g$ follows from well-posed projected dynamics; local Lipschitzness of $\vQ$ is a convenient sufficient condition.

\begin{remark}
    Let $\vQ(\vx) = -\BFnabla f(\vx)$ with $f:X \to \R$ having Lipschitz gradient. Then $g_{\vQ}$ gives us a dynamic solver that corresponds to projected gradient flows (idealized gradient descent) on $f$ constrained to $X$.
\end{remark}

\begin{remark}\label{picard-gq}
    Let $\vQ(\vx) = \vF(\vx) - \vx$ for some Lipschitz operator $\vF:X \to X$. Then $g_{\vQ}$ gives us a dynamic solver that corresponds to projected Picard flows (idealized fixed point iterations) on $\vF$ constrained to $X$.
\end{remark}

\subsection{Finitely Many Reachable Terminal Outcomes}
Now that we have addressed measurability, we turn to finiteness of reachable terminal outcomes. As we only know a solver $g_{\vQ}$ through its dynamics $\vQ$, an ideal condition for finiteness of reachable terminal outcomes should be stated in terms of properties of $\vQ$.

Let $\mR(f)$ be the roots of $f$ on its domain. The following proposition gives a necessary and sufficient condition for finiteness of reachable terminal outcomes in terms of roots of the projected dynamics.
\begin{proposition}\label{prop:finite-rto-roots}
    A dynamic solver $g_{\vQ}$ has finitely many reachable terminal outcomes ($|g_{\vQ}(X)|<\infty$) if and only if $\QTC$ has finitely many roots.
\end{proposition}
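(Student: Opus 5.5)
The proof reduces to showing that the reachable terminal outcomes other than $\dagger$ are exactly the roots of $\QTC$:
\[
g_{\vQ}(X)\setminus\{\dagger\} \;=\; \mR(\QTC).
\]
Since $g_{\vQ}(X)\subseteq \mR(\QTC)\cup\{\dagger\}$ and $\dagger$ contributes at most one extra element, the equivalence $|g_{\vQ}(X)|<\infty \iff |\mR(\QTC)|<\infty$ follows at once. The argument therefore has two inclusions.

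\emph{Roots are reachable.} Let $\vx^*$ satisfy $\QTC(\vx^*)=0$. The constant trajectory $\vx(t)\equiv \vx^*$ is absolutely continuous and satisfies $\dot\vx = 0 = \QTC(\vx(t))$ for all $t\ge 0$. It is a solution in $X$ defined on all of $[0,\infty)$. By the maximality and uniqueness clause of Assumption~\ref{ass:well-behaved-proj}, we get $\tau(\vx^*)=\infty$ and $\vG(\vx^*,t)=\vx^*$ for all $t$. Hence $g_{\vQ}(\vx^*)=\vx^*$, so $\mR(\QTC)\subseteq g_{\vQ}(X)$. This gives the ``only if'' direction.

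\emph{Limits are roots.} This is the main obstacle. The projected field $\QTC$ is typically discontinuous on $\partial X$, so the usual argument (the limit of a convergent trajectory is an equilibrium by continuity of the field) does not apply directly. I would work with the normal-cone form instead.

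1. By Moreau's decomposition with $N_X(\vy)=\TC_X(\vy)^\circ$, we have $\Pi_{\TC_X(\vy)}\vQ(\vy)=0$ if and only if $\vQ(\vy)\in N_X(\vy)$. So it suffices to show $\langle \vQ(\vy), \vec z-\vy\rangle\le 0$ for every $\vec z\in X$.

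2. Let $\vx(t)=\vG(\vx_0,t)\to\vy\in X$ with $\tau(\vx_0)=\infty$. The inclusion $\vQ(\vx(s))-\dot\vx(s)\in N_X(\vx(s))$ gives, for a.e.\ $s$ and every $\vec z\in X$,
\[
\langle \vQ(\vx(s)),\vec z-\vx(s)\rangle \;\le\; \langle \dot\vx(s),\vec z-\vx(s)\rangle \;=\; -\tfrac12\tfrac{d}{ds}\|\vec z-\vx(s)\|^2 .
\]

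3. Integrate over $[t,t+1]$ using absolute continuity. The right-hand side becomes
\[
\tfrac12\bigl(\|\vec z-\vx(t)\|^2-\|\vec z-\vx(t+1)\|^2\bigr),
\]
which tends to $0$ as $t\to\infty$, because $\vx(t)\to\vy$.

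4. The trajectory converges, so $\vx$ is bounded on $[T,\infty)$ for some $T$ and lies in a compact subset of $X$ there. Continuity of $\vQ$ then gives uniform convergence of $\langle \vQ(\vx(s)),\vec z-\vx(s)\rangle$ to $\langle \vQ(\vy),\vec z-\vy\rangle$ on $[t,t+1]$. So the left-hand integral tends to $\langle \vQ(\vy),\vec z-\vy\rangle$.

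5. Combining steps 3 and 4 yields $\langle \vQ(\vy),\vec z-\vy\rangle\le 0$ for all $\vec z\in X$, i.e.\ $\vQ(\vy)\in N_X(\vy)$. By step 1, $\vy\in\mR(\QTC)$.

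Hence $g_{\vQ}(X)\subseteq \mR(\QTC)\cup\{\dagger\}$, which gives the ``if'' direction and completes the proof.
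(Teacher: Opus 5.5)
Your proof is correct and follows essentially the paper's argument. The constant trajectory shows every root of $\QTC$ is an RTO; you invoke the maximality and uniqueness clause of Assumption~\ref{ass:well-behaved-proj} explicitly to get $\tau(\vx^*)=\infty$, where the paper leaves this implicit. The normal-cone inequality integrated along a convergent trajectory then shows every non-$\dagger$ limit satisfies $\vQ(\vx^*)\in N_X(\vx^*)$, i.e.\ $\QTC(\vx^*)=0$. The only difference is cosmetic: you integrate over sliding windows $[t,t+1]$ and use continuity of $\vQ$ at the limit, whereas the paper divides by $T$ on $[0,T]$ and passes to the limit with a Ces\`aro/dominated-convergence step.
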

\begin{proofsketch}
The nontrivial direction is that every convergent projected trajectory must converge to a root of $\QTC$. This resembles standard results such as Barbalat's lemma, but those results do not apply directly because $\QTC$ need not be continuous at the boundary. If $\vx(t)\to\vx^*$, then for every $\vy\in X$ and for a.e.\ $t$,
\[
\langle \vQ(\vx(t)),\vy-\vx(t)\rangle
\le -\frac12\frac{d}{dt}\|\vy-\vx(t)\|^2.
\]
Averaging this inequality and letting $t\to\infty$ gives
$\langle \vQ(\vx^*),\vy-\vx^*\rangle\le 0$ for all $\vy\in X$, hence
$\vQ(\vx^*)\in N_X(\vx^*)$, which is equivalent to $\QTC(\vx^*)=0$.
Therefore $g_{\vQ}(X)\subseteq\mR(\QTC)\cup\{\dagger\}$. See
Appendix~\ref{proof:containment} for the full proof.
Conversely, each root $\vx^*$ of $\QTC$ is an RTO because the constant trajectory $\vG^*(\vx^*,t)=\vx^*$ solves the projected autonomous ODE with initial condition $\vx^*$.
\end{proofsketch}

For example, when $X=\R^d$ and $\vQ=-\BFnabla f$, the RTOs of the gradient flow are finite iff $f$ has finitely many critical points.

\subsection{Prevalence of Finitely Many Reachable Terminal Outcomes}

Consider a complete metric vector space $\mathcal Q$ of $C^1$ dynamics on $X$,
and suppose each $\vQ\in\mathcal Q$ induces a dynamic solver $g_{\vQ}$. The
corresponding class of dynamic solvers is
\[
\mathcal G_{\mathcal Q}:=\{g_{\vQ}:\vQ\in\mathcal Q\}.
\]
Whenever we use such a family as a class of dynamic solvers, we assume the
regularity conditions from Section~\ref{sec:model} needed for each
$\vQ\in\mathcal Q$ to induce a well-defined solver $g_{\vQ}$. In particular, by
Lemma~\ref{lemma:Lipschitz}, this is automatic when $X$ is closed and convex and
the dynamics in $\mathcal Q$ are locally Lipschitz. If the dynamics are
Lipschitz, the corresponding solution maps are global; on compact domains,
$C^1$ dynamics are Lipschitz.
\begin{remark}
Examples of classes of dynamic solvers include projected gradient flows of
$C^2$ functions by setting $\mathcal{Q}=\{-\BFnabla f:f\in C^2(X,\R)\}$ and
projected Picard flows of $C^1$ functions by setting
$\mathcal{Q}=\{\vx\mapsto \vF(\vx)-\vx:\vF\in C^1(X,X)\}$.
\end{remark}

For a fixed dynamics $\vQ$, finiteness of $g_{\vQ}(X)$ can sometimes be checked
directly by analyzing the roots of $\QTC$. In many applications, however, this
root structure is difficult to verify for a particular $\vQ$. We therefore ask
for a generic statement: conditions under which almost all dynamics in
$\mathcal Q$, in a sense made precise below, induce finitely many reachable
terminal outcomes.

To preview the proof mechanism, suppose $F:\mathbb R^d\to\mathbb R^d$ is a $C^1$ function whose roots all lie in a compact set $K\subseteq \mathbb R^d$. Since $F$ is continuous, its root set $\mathcal R(F)=F^{-1}({0})$ is closed in $\mathbb R^d$; because $\mathcal R(F)\subseteq K$, it is also closed in $K$. If $F\pitchfork 0$ ($F$ is transverse to $0$), then $\mathcal R(F)$ is a zero-dimensional submanifold of $\mathbb R^d$, hence discrete. Thus $\mathcal R(F)$ is closed and discrete inside the compact set $K$, and therefore $\mathcal R(F)$ is finite.

Applying this mechanism to $\QTC$ has two separate hurdles. First, the relevant
root sets must be closed inside compact sets. Second, they must be discrete. The
first hurdle is not automatic because $\QTC$ is not globally continuous or
$C^1$. We therefore reduce its roots to two better-behaved pieces:
\[
\mR(\QTC)\subseteq \mR(\vQ)\cup \mR(\QTCP).
\]
The map $\vQ$ is continuous by definition. If $X$ has $C^2$ boundary, then the
outward normal field $\vn:\partial X\to\R^d$ is $C^1$, so
\[
\QTCP(\vx)=\vQ(\vx)-\langle \vQ(\vx),\vn(\vx)\rangle_+\vn(\vx)
\]
is continuous on $\partial X$. Thus $\mR(\vQ)$ and $\mR(\QTCP)$ are closed in
their respective domains. What remains for the closed-in-compact part is compact
containment of these root sets.

\begin{assumption}[Root compactness]\label{ass:kkq}
\begin{enumerate}
    \item $X$ has $C^2$ boundary.
    \item For every $\vQ\in\mathcal Q$, there exist compact sets
    $K_1(\vQ)\subseteq X$ and $K_2(\vQ)\subseteq \partial X$ such that
    \[
    \mR(\vQ)\subseteq K_1(\vQ),
    \qquad
    \mR(\QTCP)\subseteq K_2(\vQ).
    \]
\end{enumerate}
\end{assumption}

\begin{remark}[A simple sufficient condition]\label{rem:compact-x-sufficient}
The second part of Assumption~\ref{ass:kkq} is automatically satisfied if $X$
is compact. In that case, one may take $K_1(\vQ)=X$ and
$K_2(\vQ)=\partial X$ for every $\vQ\in\mathcal Q$. Thus compactness of the
whole state space is a convenient sufficient condition for root containment, but
the argument only uses compactness to contain the relevant roots.
\end{remark}

When $X$ is not compact, Assumption~\ref{ass:kkq} can sometimes be verified
by problem-specific bounds. For example, in pricing models, the natural price
space is often the positive orthant, which is not compact. If the relevant
equilibria are known to be bounded above and below, then the corresponding roots
lie in a compact region even though the ambient state space is noncompact.

We emphasize that Assumption~\ref{ass:kkq} is used only for results of this
subsection: prevalence-based finiteness of RTOs. If finiteness of
$g_{\vQ}(X)$, equivalently finiteness of roots of $\QTC$, is established by
other means in a given application, Assumption~\ref{ass:kkq} is not needed for
the inferential results in Section~\ref{sec:inference}.

Before moving on to the discreteness requirement, we make the notion of ``almost
all'' precise. When $\mathcal{Q}$ is finite-dimensional, the standard notion of
almost all is full Lebesgue measure. In infinite-dimensional spaces, Lebesgue
measure is not informative, as it is either zero or infinite on all open sets.
Prevalence \citep{hunt1992prevalence} and Haar-null sets
\citep{christensen1972sets} are two generalizations of almost all that agree
with full Lebesgue measure in finite dimensions and still apply in infinite
dimensions. We use prevalence because it is directly adapted to vector spaces.

\begin{definition}
    A subset $S$ of a complete metric vector space $\mathcal{Q}$ is prevalent if there exists a finite-dimensional subspace $P \subseteq \mathcal{Q}$ such that for every $\vQ \in \mathcal{Q}$, the set $\{\vp \in P : \vQ + \vp \in S\}$ has full Lebesgue measure in $P$.
\end{definition}
\begin{remark}
    The standard interpretation of this definition is to take $P$ to be a finite-dimensional vector space of perturbations. $S$ is prevalent if it is stable under perturbations: for any $\vQ \in S$, perturbed $\vQ$ stays in $S$ for almost all perturbations (in the Lebesgue sense).
\end{remark}

We want sufficient conditions on $\mathcal{Q}$ such that, for prevalent
$\vQ\in\mathcal Q$, the dynamic solver has finitely many reachable terminal
outcomes: $|g_{\vQ}(X)|<\infty$

Having handled the closed-in-compact part, the remaining hurdle is discreteness.
We derive structural properties of $\mathcal Q$ that guarantee discreteness of
$\mR(\vQ)$ and $\mR(\QTCP)$ for prevalent $\vQ$.

\begin{definition}
    For a finite-dimensional subspace $P$ of $\mathcal{Q}$, for any $\vQ \in \mathcal{Q}$, define the following perturbed evaluation maps $e^{\vQ}:P\times X\to\R^{d}$ and $e^{\tilde{\vQ}}:P \times \partial X \times [0,\infty) \to \R^{d}$:
    \[
    \begin{aligned}[t]
        e^{\vQ}(\vp,\vx)&:=(\vQ+\vp)(\vx)\\
        e^{\tilde{\vQ}}(\vp,\vx,\alpha)&:=(\vQ+\vp)(\vx) - \alpha \vn(\vx)
    \end{aligned}
    \]
\end{definition}

These perturbed evaluation maps encode the perturbation directions used to
obtain transversality for the two root problems.

\begin{restatable}{proposition}{propESubmersionFiniteAttractors}\label{prop:e-submersion-finite-attractors}
Assume $X$ satisfies Assumption~\ref{ass:convex-closed}, and let $\mathcal Q$ be a complete metric vector space of continuously differentiable functions from $X$ to $\R^d$ satisfying Assumption~\ref{ass:kkq}. Suppose there exists a finite-dimensional subspace $P\subseteq\mathcal Q$ such that, for every $\vQ\in\mathcal Q$, the perturbed evaluation maps $e^{\vQ}:P\times X\to\R^d$ and $e^{\tilde{\vQ}}:P\times\partial X\times[0,\infty)\to\R^d$ are $C^1$ submersions. Then for prevalent $\vQ \in \mathcal{Q}$, there are finitely many RTOs: $|g_{\vQ}(X)| < \infty$.
\end{restatable}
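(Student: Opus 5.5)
Via Proposition~\ref{prop:finite-rto-roots}, it suffices to show that for prevalent $\vQ$ the projected field $\QTC$ has finitely many roots. We use the decomposition $\mR(\QTC)\subseteq \mR(\vQ)\cup\mR(\QTCP)$ displayed above. On $\partial X$ with $C^2$ boundary, $\QTC(\vx)=0$ iff $\vQ(\vx)=\alpha\vn(\vx)$ for some $\alpha\ge 0$, since the normal cone there is the ray spanned by $\vn(\vx)$. Hence the boundary roots are exactly the $\vx$-projections of the zero set of $e^{\tilde\vQ}(\vp,\cdot,\cdot)$ at $\vp=0$.

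Fix the subspace $P$ from the hypothesis and any $\vQ\in\mathcal Q$. Since $e^{\vQ}$ is a $C^1$ submersion on $P\times X$, the parametric transversality theorem applies. Because the dimensions work out (the map goes from $\dim P+d$ to $d$), we need $C^1$ regularity, and the hypothesis suffices when $\dim P\ge 1$; Sard needs $C^k$ with $k>\max(0,\dim M-\dim N)$. The preimage $W=(e^{\vQ})^{-1}(0)$ is then a $C^1$ submanifold of dimension $\dim P$. The projection $\pi:W\to P$ is $C^1$, so by Sard its regular values have full Lebesgue measure in $P$, and $\vp$ is a regular value iff $(\vQ+\vp)\pitchfork 0$ on $X$. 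The same argument applies to $e^{\tilde\vQ}$ on the $d$-dimensional manifold-with-boundary $\partial X\times[0,\infty)$. For almost every $\vp$, the map $(\vx,\alpha)\mapsto(\vQ+\vp)(\vx)-\alpha\vn(\vx)$ is transverse to $0$, both in the interior and restricted to the boundary face $\alpha=0$. This face is handled by applying the same Sard argument to the restriction, which is $e^{\vQ}|_{P\times\partial X}$. Its zero set is then $0$-dimensional, i.e.\ discrete. Intersecting two full-measure sets leaves a full-measure set. Thus the set $S$ of $\vQ$ for which both root problems are transverse satisfies the prevalence definition with the same $P$ for every $\vQ$. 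We should also check that the relevant sets are Borel or at least contain Borel full-measure sets; this is standard.

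For $\vQ\in S$, $\mR(\vQ)$ is a $0$-dimensional submanifold, hence discrete. Points on $\partial X$ need care because $X$ has boundary. The remedy is either to treat interior and boundary roots separately, or to extend $\vQ$ locally in $C^1$ across $\partial X$. The boundary roots of $\vQ$ are already covered, since they are roots of $\QTCP$ with $\alpha=0$. By Assumption~\ref{ass:kkq}, $\mR(\vQ)\subseteq K_1(\vQ)$ and it is closed, so closed plus discrete inside a compact set gives finite. Similarly, the zero set $Z$ of $(\vx,\alpha)\mapsto(\vQ+0)(\vx)-\alpha\vn(\vx)$ is discrete in $\partial X\times[0,\infty)$. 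Its projection to $\partial X$ is $\mR(\QTCP)\subseteq K_2(\vQ)$. Because $\vn$ is a unit vector, $\alpha=\langle\vQ(\vx),\vn(\vx)\rangle$ is a continuous function of $\vx$. So $Z$ is the graph of a continuous function over a closed subset of $K_2$, hence compact. Being compact and discrete, $Z$ is finite, and therefore $\mR(\QTCP)$ is finite.

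\textbf{Main obstacle.} The delicate step is the boundary analysis: transversality on a manifold with boundary ($\alpha\ge0$ and possibly $\partial X$ inside $X$), together with the low-regularity Sard requirement for $C^1$ maps. I would first handle the boundary strata by restricting the submersion to each stratum. Restricting a submersion onto $\R^d$ to $P\times\partial X$ remains a submersion provided the $P$-directions alone span $\R^d$. I would check whether the hypothesis implies this; if it does not, I would add the natural reading that $e^{\vQ}$ is a submersion already in the $\vp$-variable.
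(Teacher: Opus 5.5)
Your proof follows the paper's route. It uses the containment $\mR(\QTC)\subseteq\mR(\vQ)\cup\mR(\QTCP)$, then parametric transversality by applying Sard to the projection onto $P$ of the zero set of $e^{\vQ}$ (resp.\ $e^{\tilde{\vQ}}$), then closed $+$ discrete $+$ compact $\Rightarrow$ finite. Your compactness argument for the lifted zero set $Z$ is equivalent to the paper's Lemma~\ref{lemma:discrete-preserves}.

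The step you leave open, whether the boundary restrictions are submersions, follows from the stated hypothesis, so you do not need to strengthen it. $\mathcal Q$ is a linear space of functions, with the pointwise operations implicit in the definition of $e^{\vQ}$, so the zero field lies in $\mathcal Q$. Surjectivity of $De^{0}_{(0,\vx)}(\dot\vp,\dot\vx)=\dot\vp(\vx)$ then says exactly that $\ev_{\vx}(P)=\R^d$ for every $\vx\in X$. Hence, for every $\vQ$, the restriction of $e^{\vQ}$ to $P\times\partial X$ is a submersion, and this is the same map as the restriction of $e^{\tilde{\vQ}}$ to the face $\alpha=0$.

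Sard then gives more than you state. Since $\dim\partial X=d-1<d$, transversality of $(\vQ+\vp)|_{\partial X}$ to $0$ means it has \emph{no} zeros, not a $0$-dimensional zero set. So for a.e.\ $\vp$, all roots of $\vQ+\vp$ lie in $\inter X$ and all zeros of $(\vx,\alpha)\mapsto(\vQ+\vp)(\vx)-\alpha\vn(\vx)$ have $\alpha>0$. This makes your later discussion of roots on $\partial X$ unnecessary.

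Alternatively, you need no separate treatment of boundary strata at all. A local $C^1$ extension of $e^{\vQ}$ (or $e^{\tilde{\vQ}}$) across the boundary remains a submersion near each point, since surjectivity is an open condition. Sard on the extended zero set then yields invertibility of $D(\vQ+\vp)_{\vx}$ at every root, including boundary roots, and the inverse function theorem gives isolation. The paper glosses over this issue by asserting that the preimage is a manifold ``possibly with boundary,'' so your explicit handling of the boundary is the more careful version of the same argument.
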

\begin{proofsketch}
Lemma~\ref{lemma:containment} reduces finiteness of reachable terminal outcomes to
finiteness of roots of $\QTC$. Appendix~\ref{appendix:prevalence} then shows
that
\[
\mR(\QTC)\subseteq \mR(\vQ)\cup \mR(\QTCP),
\]
where $\mR(\vQ)$ and $\mR(\QTCP)$ are closed in their respective domains under
Assumption~\ref{ass:kkq}. The submersion hypotheses on $e^{\vQ}$ and
$e^{\tilde{\vQ}}$ are exactly the hypotheses used in the appendix's parametric
transversality argument: Sard's theorem applied to the zero set of the
perturbed evaluation map implies that almost every perturbation makes the fiber
map transverse to zero. Hence, for prevalent $\vQ\in\mathcal Q$, $\mR(\vQ)$ is
discrete in $X$ and $\mR(\QTCP)$ is discrete in $\partial X$. Assumption~\ref{ass:kkq}
places these closed discrete root sets inside compact sets, so they are finite.
The containment above then implies $|g_{\vQ}(X)|<\infty$.
Appendix~\ref{proof:e-submersion-finite-attractors} provides a formal proof.
\end{proofsketch}

As the evaluation-map submersion requirements in Theorem~\ref{theorem:prevalence_submersion_probe} can feel intangible, we provide the following sufficient condition that is easier to grasp. This condition will be used in Section~\ref{sec:applications} to verify prevalence of finitely many RTO.

\begin{proposition}\label{prop:ev-submersion}
Take $P \subseteq \mathcal{Q}$ to be a finite-dimensional subspace such that for every $\vx\in X$,
the evaluation map $\ev_{\vx}:P\to\R^d$ is a submersion.
Then for every $\vQ\in\mathcal{Q}$, the perturbed evaluation maps $e^{\vQ},e^{\tilde{\vQ}}$ are $C^1$ submersions.
\end{proposition}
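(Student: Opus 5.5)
The approach is to compute the differentials of both perturbed evaluation maps directly and observe that the derivative in the $\vp$-direction alone is already onto $\R^d$.

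\textbf{Step 1: structure of the maps.} Since $P$ is a finite-dimensional subspace of $\mathcal Q$, fix a basis $\vp_1,\dots,\vp_m$ and identify $\vp=\sum_i c_i\vp_i$ with $c\in\R^m$. Then
\[
e^{\vQ}(\vp,\vx)=\vQ(\vx)+\sum_i c_i\vp_i(\vx).
\]
This is linear in $c$ and $C^1$ in $\vx$, because $\vQ$ and each $\vp_i$ lie in $\mathcal Q\subseteq C^1$. Hence $e^{\vQ}$ is $C^1$ on $P\times X$; at boundary points differentiability is understood one-sidedly or via $C^1$ extensions, as elsewhere in the paper. Similarly,
\[
e^{\tilde{\vQ}}(\vp,\vx,\alpha)=\vQ(\vx)+\sum_i c_i\vp_i(\vx)-\alpha\vn(\vx).
\]
This is $C^1$ on $P\times\partial X\times[0,\infty)$, because $\vn$ is $C^1$ on the $C^2$ boundary by Assumption~\ref{ass:kkq}.

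\textbf{Step 2: differentials.} At a point $(\vp,\vx)$, the partial differential of $e^{\vQ}$ with respect to $\vp$ in the direction $\vvec\in P$ is $\vvec(\vx)=\ev_{\vx}(\vvec)$. That is, $D_{\vp}e^{\vQ}(\vp,\vx)=\ev_{\vx}$, which is a linear map and so equal to its own derivative. The same holds for $e^{\tilde{\vQ}}$, since the term $-\alpha\vn(\vx)$ does not depend on $\vp$. So $D_{\vp}e^{\tilde{\vQ}}(\vp,\vx,\alpha)=\ev_{\vx}$ for all $\vx\in\partial X$ and all $\alpha\ge0$.

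\textbf{Step 3: surjectivity.} The full differential restricts to $D_{\vp}e$ on the subspace $T_{\vp}P\times\{0\}$, so its image contains $\ev_{\vx}(P)$. By hypothesis $\ev_{\vx}$ is a submersion, i.e.\ surjective onto $\R^d$, for every $\vx\in X$, and in particular for every $\vx\in\partial X\subseteq X$ (as $X$ is closed). Therefore both full differentials are onto $\R^d$ at every point, so both maps are submersions. This holds for arbitrary $\vQ$, because $\vQ$ enters only through a term with zero $\vp$-derivative.

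\textbf{Main obstacle.} The only delicate point is formal rather than substantive: making sense of $C^1$ and of ``submersion'' on domains with boundary, namely $X$ itself and the half-line factor $[0,\infty)$. I will handle this by working with one-sided derivatives, or with local $C^1$ extensions, which is how Proposition~\ref{prop:e-submersion-finite-attractors} uses these notions. The surjectivity argument never needs tangent directions in $\vx$ or $\alpha$, so boundary effects do not interfere with it.
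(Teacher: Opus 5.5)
Your proof is correct and follows the paper's argument: you restrict the differential to parameter directions $(\dot{\vp},0)$, where it equals $\ev_{\vx}$, which is onto $\R^d$ by hypothesis, and you note that the $-\alpha\vn(\vx)$ term does not depend on $\vp$. You also check joint $C^1$ regularity explicitly via basis coordinates and observe that $\partial X\subseteq X$; the paper takes both of these for granted.
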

\begin{proof}
We have already established that $e^{\vQ}$ and $e^{\tilde{\vQ}}$ are $C^1$. It remains to show they are submersions.
Fix $\vQ\in\mathcal Q$ and $(\vp,\vx)\in P\times X$.

Since $P$ is a finite-dimensional vector space, we identify $T_{\vp}P\simeq P$ by translation.
For $(\dot \vp,\dot \vx)\in T_{\vp}P\times T_{\vx}X\simeq P\times T_{\vx}X$,
\[
D e^{\vQ}_{(\vp,\vx)}(\dot \vp,\dot \vx)=\dot \vp(\vx)+D(\vQ+\vp)_{\vx}(\dot \vx).
\]
In particular, restricting to parameter directions,
\[
D e^{\vQ}_{(\vp,\vx)}(\dot \vp,0)=\dot \vp(\vx)=D(\ev_{\vx})_{\vp}(\dot \vp).
\]
By assumption $D(\ev_{\vx})_{\vp}:T_{\vp}P\to\R^d$ is surjective, hence $D e^{\vQ}_{(\vp,\vx)}$ is surjective. Therefore $e^{\vQ}$ is a submersion. The same argument applies to $e^{\tilde{\vQ}}$ since the term $-\alpha \vn(\vx)$ does not depend on $\vp$.
\end{proof}

\begin{remark}\label{remark:finite-picard-gradient}
    Let $\mathcal{C}$ be the class of constant scalar-valued functions:
    $\mathcal{C}=\{\vF \mid \vF:X \to \R^d \land \exists \vu \in \R^d \;\forall \vx \in X:\; \vF(\vx)=\vu\}$.
    On $\mathcal{C}$, for all $\vx \in X$, the evaluation map $\ev_{\vx}:\mathcal{C}\to\R^d$ is a diffeomorphism hence a submersion. Therefore, if $\mathcal{C} \subseteq \mathcal{Q}$,  Proposition~\ref{prop:ev-submersion} applies with $P=\mathcal{C}$ and the conditions of Theorem~\ref{theorem:prevalence_submersion_probe} are satisfied. For gradient descent of $C^2$ functions, under Assumption~\ref{ass:convex-closed}, $\mathcal{Q}=\{\BFnabla f|f \in C^2(X, \R^d)\}$ makes a complete metric vector space containing $\mathcal{C}$. For Picard flows of $C^1$ functions, $\mathcal{Q}=\{\vx \mapsto \vF(\vx) - \vx \mid \vF \in C^1(X, \R^d)\}=C^1(X, \R^d)$ also makes a complete metric vector space containing $\mathcal{C}$. Hence for prevalent gradient flows and Picard flows, there are finitely many reachable terminal outcomes.
\end{remark}

\subsection{Numerical Inaccuracies and Observed Partitions}
\label{subsec:numerical}
Solvers in practice may not produce or record the exact output of $g_{\vQ}$ due to numerical inaccuracies. Moreover, the notion of ``same solution'' is operational: two outputs are identified if they are indistinguishable at a prescribed numerical tolerance. To model this, we introduce a measurable \emph{observation map}
$\mathcal O_\varepsilon:\tilde X \to \tilde X$ where $\varepsilon>0$ represents the effective numerical resolution (e.g., floating-point precision,
stopping criteria, etc). The observed terminal map is $g_{\vQ}^{\varepsilon}:=\mathcal O_\varepsilon\circ g_{\vQ}$.

It induces the \emph{observed partition} of initial conditions into basins of observed terminal outcomes. Decreasing $\varepsilon$ may refine this observed partition by splitting sets of initial conditions whose terminal points agreed at coarser resolution.

With measurable $g_{\vQ}$ that has finitely many reachable terminal outcomes, $g_{\vQ}^{\varepsilon}$
is also measurable and has finitely many reachable terminal outcomes. Hence the analysis of this paper
applies verbatim to $g_{\vQ}^{\varepsilon}$.

The statistical analysis of the next section depends only on the partition of $X$ into basins induced
by $g_{\vQ}^{\varepsilon}$. Thus $\mathcal O_\varepsilon$ matters only insofar as it changes
this partition. Varying $\varepsilon$ corresponds to coarsenings/refinements of the observed partition,
which is exactly the setting of Assumption~\ref{ass:refine_coarsen}. In practice, small numerical errors do not significantly change the induced partition on $X$. Hence we work with $\mathcal O_\varepsilon=\mathrm{Id}$.
When numerical imprecision changes the partition, the results of the next section should be read as inference about the
observed partition induced by $g_{\vQ}^{\varepsilon}$ rather than the idealized one induced by $g_{\vQ}$.

Notice that the finiteness of numerically recorded outcomes should not be treated as a substitute for the RTO finiteness conditions of the previous sections. If the idealized problem has infinitely many terminal outcomes, then the observed partition induced by $g_{\vQ}^{\varepsilon}$ will not coincide with the partition induced by $g_{\vQ}$, which is our object of interest. Section~\ref{sec:limits} discusses how numerical inaccuracies affect the interpretation of restart evidence.

\section{Inference}\label{sec:inference}
Once the categorical reduction is in place, repeated convergence to a single RTO supports two distinct inferential tasks. The first is \emph{basin-size inference}: how much restart mass is assigned to the observed RTO. The second is \emph{multiplicity inference}: whether other RTOs can also occur under random restarts.

The two tasks have different statistical difficulty. Repeated convergence directly supports the claim that the observed RTO has a large basin under the restart distribution. By contrast, uniqueness or multiplicity claims require ruling out other RTOs whose basins may be very small. The results and rates derived below formalize this distinction.

Let $\vx_1, \vx_2, \dots, \vx_n$ be i.i.d.\ samples from $\mu$ and define $\vy_i := g_{\vQ}(\vx_i)$ for some dynamic solver $g_{\vQ}$. Define the event of getting the same RTO $\vx^* \in \tilde{X}$ for all $n$ runs as:
\[
H_n(\vx^*) = \{\vy_1 = \vy_2 = \dots = \vy_n=\vx^*\}.
\]

We show how conditioning on $H_n$ changes our beliefs about the size of $\vx^*$'s basin of attraction $s_{\vx^*}$ and the number of RTO. We start with basin-size inference and then turn to the harder problem of multiplicity inference.

\subsection{Basins of Attraction}
Given a prior $\Pi$ on $s_{\vx^*}$, we are interested in computing the posterior:
\[
    \Pi(s_{\vx^*} \mid H_n(\vx^*)) = \frac{\PP(H_n(\vx^*) \mid s_{\vx^*}) \Pi(s_{\vx^*})}{\PP(H_n(\vx^*))}.
\]
Note that the likelihood $\PP(H_n(\vx^*) \mid s_{\vx^*}) = s_{\vx^*}^n$ as each independent run has probability $s_{\vx^*}$ of converging to $\vx^*$.

We are interested in how this posterior concentrates around 1 as $n$ increases. Dually, that is how $\Pi(A_\varepsilon\mid H_n(\vx^*))$ concentrates around $0$ where $A_\varepsilon = \{s_{\vx^*} \le 1 - \varepsilon\}$ for some $\varepsilon > 0$.

A simple choice is to use a Beta prior $\operatorname{Beta}(\alpha, \beta)$ for some $\alpha, \beta > 0$. This prior has full support on $[0,1]$ and is conjugate to the Bernoulli likelihood that will arise from our observations. Appendix~\ref{proof:beta_prior} proves the following result using standard properties of the Beta distribution.

\begin{restatable}{proposition}{propBetaPrior}\label{prop:beta_prior}
    Under a $\operatorname{Beta}(\alpha, \beta)$ prior on $s_{\vx^*}$, the posterior after observing $H_n(\vx^*)$ is $\operatorname{Beta}(\alpha + n, \beta)$. Additionally, the posterior concentrates around 1 exponentially fast in $n$:
    \begin{equation}\label{eq:basin_tail}
    \Pi\!\left(A_\varepsilon \mid H_n(\vx^*)\right)
    \;\le\;
    \frac{(\alpha+n+\beta)^{\beta}}{\Gamma(\beta+1)}\,
    (1-\varepsilon)^{\alpha+n-1}
    \;
\end{equation}
\end{restatable}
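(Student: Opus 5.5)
The conjugacy claim follows by multiplying densities. The likelihood of $H_n(\vx^*)$ given $s=s_{\vx^*}$ is $s^n$, and the prior density is proportional to $s^{\alpha-1}(1-s)^{\beta-1}$. The unnormalized posterior is therefore $s^{\alpha+n-1}(1-s)^{\beta-1}$, which is the $\operatorname{Beta}(\alpha+n,\beta)$ kernel. The normalizing constant $\PP(H_n(\vx^*))=B(\alpha+n,\beta)/B(\alpha,\beta)$ is finite and positive because $\alpha,\beta>0$.

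For the tail bound, write
\[
\Pi(A_\varepsilon\mid H_n(\vx^*))=\frac{1}{B(\alpha+n,\beta)}\int_0^{1-\varepsilon} s^{\alpha+n-1}(1-s)^{\beta-1}\,ds .
\]
I would bound the integral and the Beta function separately.

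\textbf{The integral.} The factor $s^{\alpha+n-1}$ is increasing in $s$, so on $[0,1-\varepsilon]$ it is at most $(1-\varepsilon)^{\alpha+n-1}$. What remains is $\int_0^{1-\varepsilon}(1-s)^{\beta-1}\,ds\le\int_0^1(1-s)^{\beta-1}\,ds=1/\beta$. This uses $\alpha+n-1\ge 0$, which holds for $n\ge1$ since $\alpha>0$. The numerator is therefore at most $(1-\varepsilon)^{\alpha+n-1}/\beta$.

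\textbf{The Beta function (main step).} I need $1/B(\alpha+n,\beta)\le(\alpha+n+\beta)^\beta/\Gamma(\beta)$. Together with $\beta\Gamma(\beta)=\Gamma(\beta+1)$, this gives exactly the stated constant. Writing $a=\alpha+n$, it suffices to show
\[
\frac{\Gamma(a+\beta)}{\Gamma(a)}\le (a+\beta)^\beta .
\]
For this I would use log-convexity of $\Gamma$ on the points $a<a+\beta<a+\beta+1$. With $\theta=\beta/(\beta+1)$, we have $a+\beta=(1-\theta)a+\theta(a+\beta+1)$. Log-convexity then gives
\[
\Gamma(a+\beta)\le\Gamma(a)^{1/(\beta+1)}\,\Gamma(a+\beta+1)^{\beta/(\beta+1)}=\Gamma(a)^{1/(\beta+1)}\,\bigl((a+\beta)\Gamma(a+\beta)\bigr)^{\beta/(\beta+1)} .
\]
Rearranging yields $\Gamma(a+\beta)^{1/(\beta+1)}\le\Gamma(a)^{1/(\beta+1)}(a+\beta)^{\beta/(\beta+1)}$, which is the required inequality. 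This is a Wendel/Gautschi-type bound and is the only nonroutine step.

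Combining the two bounds gives \eqref{eq:basin_tail}. Exponential decay follows because $(\alpha+n+\beta)^\beta$ grows only polynomially in $n$, while $(1-\varepsilon)^{n}$ decays geometrically.
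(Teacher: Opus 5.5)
Your proof is correct, and its skeleton matches the paper's. Both arguments do the following:
\begin{itemize}
\item establish conjugacy;
\item bound $s^{\alpha+n-1}$ by its value at $s=1-\varepsilon$ (the paper does this after substituting $u=1-s$);
\item bound the leftover integral by $1/\beta$;
\item control the Gamma ratio $\Gamma(\alpha+n+\beta)/\Gamma(\alpha+n)$.
\end{itemize}

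The real difference is that last step. The paper invokes Gautschi's inequality in the form $\Gamma(\alpha+n+\beta)/\Gamma(\alpha+n)\le C_\beta(\alpha+n)^\beta$, with an unspecified constant. It then writes the constant-free bound with base $(\alpha+n+\beta)^\beta$ without explaining how $C_\beta$ disappears. Moreover, Gautschi's inequality in its usual form is stated for shifts in $(0,1)$, so for $\beta\ge 1$ it would have to be iterated.

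Your log-convexity argument proves $\Gamma(a+\beta)\le (a+\beta)^\beta\,\Gamma(a)$ for all $a,\beta>0$ in two lines. That is exactly what is needed to get the constant $1/\Gamma(\beta+1)$ and the base $\alpha+n+\beta$ in \eqref{eq:basin_tail}. Your version is therefore self-contained and justifies the inequality exactly as stated, whereas the paper's citation only delivers it up to an unquantified constant.

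Bounding $\int_0^{1-\varepsilon}(1-s)^{\beta-1}\,ds$ by $1/\beta$ instead of computing $(1-\varepsilon^\beta)/\beta$ is purely cosmetic. Your remark that $\alpha+n-1\ge 0$ is needed for the monotonicity step is also right; the paper uses the same fact implicitly.
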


This result formalizes the intuition that observing repeated convergence to the same RTO increases our belief that its basin of attraction is large. The exponential concentration rate shows that this increase in belief is quite rapid, hence hundreds of repeated runs can provide strong evidence for a dominant basin of attraction.

Though this result is derived under a Beta prior, we generalize this to a wide class of priors. Essentially, we need to impose some regularity conditions on the prior near 1, ensuring that the prior does not vanish as we approach 1. This yields the following generalization of the exponential concentration result:
\begin{restatable}{proposition}{propEtaBound}\label{prop:eta-bound}
    If $1 \in \operatorname{supp}(\Pi)$, then for any $0 < \varepsilon < 1$, $\Pi(A_\varepsilon \mid H_n(\vx^*))$ converges to 0 exponentially fast in $n$. In particular, there exists some $0 < \eta < \varepsilon$ such that
    \[
        \Pi(A_\varepsilon \mid H_n(\vx^*))
        \leq
        \frac{(1-\varepsilon)^n\Pi(A_\varepsilon)}
        {(1-\eta)^n(1-\Pi(A_\eta))}
    \]
\end{restatable}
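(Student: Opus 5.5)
We write the posterior probability as a ratio of prior integrals and bound the numerator from above and the denominator from below. Writing $s=s_{\vx^*}$, Bayes' rule with likelihood $\PP(H_n(\vx^*)\mid s)=s^n$ gives
\[
\Pi(A_\varepsilon\mid H_n(\vx^*))=\frac{\int_{A_\varepsilon} s^n\,\Pi(ds)}{\int_{[0,1]} s^n\,\Pi(ds)}.
\]
This expression is well defined as soon as the denominator is positive, which the support assumption will guarantee.

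For the numerator, on $A_\varepsilon=\{s\le 1-\varepsilon\}$ we have $s^n\le(1-\varepsilon)^n$. Hence
\[
\int_{A_\varepsilon}s^n\,\Pi(ds)\le(1-\varepsilon)^n\,\Pi(A_\varepsilon).
\]

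For the denominator, fix any $\eta\in(0,\varepsilon)$, for instance $\eta=\varepsilon/2$. We restrict the integral to the complement $A_\eta^c=\{s>1-\eta\}$. There $s^n\ge(1-\eta)^n$, so
\[
\int s^n\,\Pi(ds)\ge(1-\eta)^n\,\Pi(A_\eta^c)=(1-\eta)^n\bigl(1-\Pi(A_\eta)\bigr).
\]
The key point, and the only place the hypothesis enters, is that $1-\Pi(A_\eta)>0$. Since $1\in\operatorname{supp}(\Pi)$, every open neighbourhood of $1$ in $[0,1]$ has positive prior mass. The set $(1-\eta,1]$ is such a neighbourhood, so $\Pi(A_\eta^c)>0$. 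This also makes the denominator strictly positive, so the posterior is well defined.

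Combining the two bounds gives exactly the displayed inequality,
\[
\Pi(A_\varepsilon\mid H_n(\vx^*))\le\frac{\Pi(A_\varepsilon)}{1-\Pi(A_\eta)}\left(\frac{1-\varepsilon}{1-\eta}\right)^n .
\]
Because $\eta<\varepsilon$, the ratio $(1-\varepsilon)/(1-\eta)$ lies strictly in $(0,1)$. The prefactor $\Pi(A_\varepsilon)/(1-\Pi(A_\eta))$ is a finite constant that does not depend on $n$. The bound therefore decays geometrically, which gives exponential convergence to $0$.

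There is no serious obstacle in this argument. The only points needing care are the interpretation of support, so that the closed-at-$1$ neighbourhood $(1-\eta,1]$ receives positive mass, and the requirement that $\eta$ be chosen strictly below $\varepsilon$ so that the ratio is less than one. Any $\eta\in(0,\varepsilon)$ works, so the existence claim in the statement is immediate.
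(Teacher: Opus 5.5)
Your proposal is correct and follows essentially the same route as the paper. The paper bounds the numerator by $(1-\varepsilon)^n\Pi(A_\varepsilon)$, lower-bounds the denominator on $A_\eta^c$ (Lemma~\ref{lem:eta_bound}), and uses $1\in\operatorname{supp}(\Pi)$ to get $\Pi\bigl((1-\eta,1]\bigr)>0$ for any $\eta\in(0,\varepsilon)$ (Lemma~\ref{lem:supp_tail}), exactly as you do.
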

\begin{proofsketch}
The proof uses two elementary facts, proved in Appendix~\ref{proof:eta-bound}. First, Bayes' rule gives the following bound by upper bounding $\PP(H_n(\vx^*)\cap A_\varepsilon)$ and lower bounding $\PP(H_n(\vx^*))$ on $A_\eta^c$: for any $0<\eta<\varepsilon<1$ with $\Pi(A_\eta)<1$,
\[
\Pi(A_\varepsilon\mid H_n(\vx^*))
\le
\frac{(1-\varepsilon)^n\Pi(A_\varepsilon)}
{(1-\eta)^n(1-\Pi(A_\eta))}.
\]
Second, $1\in\operatorname{supp}(\Pi)$ guarantees the existence of such an $\eta$. Combining these facts gives exponential decay because $(1-\varepsilon)/(1-\eta)<1$.
\end{proofsketch}

So under the mild condition that the prior has support arbitrarily close to 1, we again obtain exponential concentration bounds of the posterior around 1. That result only needs support near 1 to choose some $\eta<\epsilon$.
If we want an explicit bound independent of $\eta$, we need more: the lower bound on the prior density near 1 from Assumption~\ref{ass:poly_density} is exactly what produces the polynomial prefactor in the next result.

\begin{assumption}[Polynomial lower bound of prior density near $1$]\label{ass:poly_density}
The prior $\Pi$ on $s_{\vx^*}\in[0,1]$ admits a density $f_s$ and there exist constants
$c>0$, $\delta\in(0,1)$, and $\kappa\ge 0$ such that
\[
f_s(s)\ge c(1-s)^{\kappa}\qquad \forall\, s\in(1-\delta,1).
\]
\end{assumption}

\begin{restatable}[Dropping $\eta$ under Assumption~\ref{ass:poly_density}]{proposition}{propBasinPosteriorSize}\label{prop:basin-posterior-size}
Under Assumption~\ref{ass:poly_density}, for any $\varepsilon\in(0,1)$ and all $n \geq 2$
large enough so that $1/n<\delta$,
\begin{equation}\label{eq:drop_eta_poly_bound}
\begin{aligned}
\Pi(A_\varepsilon\mid H_n(\vx^*)) &\le \frac{4(\kappa+1)}{c}\,(n)^{\kappa+1}(1-\varepsilon)^{n} \Pi(A_\varepsilon) \\
&\le \frac{4(\kappa+1)}{c}\,(n)^{\kappa+1}e^{-(n)\varepsilon} \Pi(A_\varepsilon).
\end{aligned}
\end{equation}
\end{restatable}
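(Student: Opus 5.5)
We bound the numerator and the denominator of the posterior separately. By Bayes' rule with likelihood $s^n$,
\[
\Pi(A_\varepsilon\mid H_n(\vx^*))=\frac{\int_{A_\varepsilon} s^n\,d\Pi(s)}{\int_0^1 s^n\,d\Pi(s)}.
\]
The numerator is at most $(1-\varepsilon)^n\Pi(A_\varepsilon)$, since $s\le 1-\varepsilon$ on $A_\varepsilon$. All the work is therefore in a lower bound on the marginal likelihood $\int_0^1 s^n\,d\Pi(s)$ that is polynomial in $n$ and free of $\eta$. For this we use Assumption~\ref{ass:poly_density} only on the window $(1-1/n,1)$, which lies inside $(1-\delta,1)$ because $1/n<\delta$.

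On that window $f_s(s)\ge c(1-s)^\kappa$, so
\[
\int_0^1 s^n f_s(s)\,ds\;\ge\; c\int_{1-1/n}^1 s^n(1-s)^\kappa\,ds\;\ge\; c\,(1-1/n)^n\int_0^{1/n}u^\kappa\,du=\frac{c\,(1-1/n)^n}{(\kappa+1)\,n^{\kappa+1}}.
\]
Here we used that $s^n$ is increasing, so it is bounded below by its value at the left endpoint, and we substituted $u=1-s$.

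It remains to bound $(1-1/n)^n$ from below by an absolute constant. For $n\ge 2$ the sequence $(1-1/n)^n$ is increasing, so it is at least its value at $n=2$, which is $1/4$. This is where the hypothesis $n\ge 2$ enters, and it produces the factor $4$. Combining gives
\[
\int_0^1 s^n\,d\Pi(s)\ \ge\ \frac{c}{4(\kappa+1)n^{\kappa+1}}.
\]
Dividing the numerator bound by this yields the first inequality in \eqref{eq:drop_eta_poly_bound}. The second inequality follows from $1-\varepsilon\le e^{-\varepsilon}$.

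The only delicate point is the choice of window. It must have width of order $1/n$: a wider window would lose the constant lower bound on $s^n$, and a narrower one would worsen the polynomial factor. With width $1/n$ the window fits inside $(1-\delta,1)$ exactly when $1/n<\delta$, which is the stated condition on $n$. Everything else is routine.
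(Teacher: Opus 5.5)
Your proposal is correct and is essentially the paper's argument. The paper applies Lemma~\ref{lem:eta_bound} with $\eta=1/n$, lower-bounds $1-\Pi(A_{1/n})$ by $\frac{c}{\kappa+1}n^{-(\kappa+1)}$ using Assumption~\ref{ass:poly_density}, and uses $(1-1/n)^{-n}\le 4$ for $n\ge 2$; this is exactly your window computation on $(1-1/n,1)$, split into two steps.
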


\begin{proofsketch}
Lemma~\ref{lem:eta_bound} yields an upper bound for $\Pi(A_\varepsilon \mid H_n(\vx^*))$ whose denominator $(1-\eta)^n (1-\Pi(A_\eta))$ depends on $\eta$. We choose $\eta$ as a function of $n$ and lower bound $1-\Pi(A_\eta)$ using Assumption~\ref{ass:poly_density} to obtain an explicit (i.e.\ $\eta$-free) bound.
Set $\eta = \frac{1}{n}$ and assume $1/n<\delta$ so that Assumption~\ref{ass:poly_density} applies. For $n\ge 2$, the map $n\mapsto (1-\frac1n)^n$ is increasing, hence $(1-\frac1n)^n\ge (1-\frac12)^2=\frac14$ and equivalently $(1-\frac1n)^{-n}\le 4$. Under Assumption~\ref{ass:poly_density}, for $\eta<\delta$ we have
\[
\begin{aligned}
1-\Pi(A_\eta)=\Pi(s_{\vx^*}>1-\eta) \\
\ge \int_{1-\eta}^1 c(1-s)^\kappa\,ds=\frac{c}{\kappa+1}\eta^{\kappa+1}
\end{aligned}
\]
With $\eta=1/n$, this gives $(1-\Pi(A_{1/n}))^{-1}\le \frac{\kappa+1}{c}n^{\kappa+1}$, and combining with the previous bounds yields \eqref{eq:drop_eta_poly_bound}.
Appendix~\ref{proof:basin-posterior-size} provides a detailed proof.
\end{proofsketch}

\subsection{Number of Reachable Terminal Outcomes}
A necessary condition for uniqueness of an RTO is that its basin of attraction has size one. Yet under any prior $s_{\vx}$ that admits a density, this event has zero prior probability. The reason is that under the prior, the event $s_{\vx}=1$ has probability zero, and Bayesian updating will not give a positive probability to an event that has zero prior probability. So to talk about uniqueness we need more. We introduce two approaches: a simpler spike and slab prior and a heavier Mixture of Finite Models prior.

\subsubsection{Spike-and-Slab Prior}
So far, we have focused on the sizes of basins of attraction. Definition~\ref{def:RTO} gives the relevant visibility notion: an RTO is $\mu$-observable exactly when its basin has positive restart mass. If all RTO are $\mu$-observable, then having a unique RTO is equivalent to the observed RTO having basin size one.

In practice, however, this assumption may be too strong. For example, we may want to know whether the local minima of a Morse function $f$ found by GD are global. The Morse condition rules out flatness, hence all local minima have basins of positive size. The problem is that saddle points are also RTO but they have basins of size zero.

To overcome this, we reformulate our question about uniqueness. Instead of asking whether there is a unique RTO, we ask whether there is a unique $\mu$-observable RTO. In the GD example, this corresponds to asking whether there is a unique local minimum rather than whether there is a unique RTO.

\begin{lemma}
	    $\vx^* \in \tilde{X}$ is the unique $\mu$-observable RTO iff its basin of attraction has size one.
\end{lemma}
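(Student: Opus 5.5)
The argument rests on one fact: the basins $\{B_{\vy}\}_{\vy\in g_{\vQ}(X)}$ are the fibres of $g_{\vQ}$, so they partition $X$. Each is Borel because $g_{\vQ}$ is measurable (Proposition~\ref{prop:measurable}) and singletons in $\tilde X$ are measurable. I would first fix the reading of ``unique $\mu$-observable RTO'': $\vx^*\in g_{\vQ}(X)$, $s_{\vx^*}>0$, and $s_{\vy}=0$ for every other $\vy\in g_{\vQ}(X)$. I would also allow $\vx^*=\dagger$, consistent with $\vx^*\in\tilde X$. Both directions then reduce to countable additivity of $\mu$, with one point of care about how many basins there are.

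($\Leftarrow$) Suppose $s_{\vx^*}=\mu(B_{\vx^*})=1$. Then $s_{\vx^*}>0$, so $B_{\vx^*}\neq\emptyset$ and $\vx^*$ is an RTO that is $\mu$-observable. For any other RTO $\vy\neq\vx^*$, the basin $B_{\vy}$ is disjoint from $B_{\vx^*}$, so $B_{\vy}\subseteq X\setminus B_{\vx^*}$. Hence $s_{\vy}\le 1-s_{\vx^*}=0$, and no other RTO is $\mu$-observable.

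($\Rightarrow$) Suppose $\vx^*$ is the unique $\mu$-observable RTO. Then
\[
X\setminus B_{\vx^*}=\bigcup_{\vy\in g_{\vQ}(X)\setminus\{\vx^*\}} B_{\vy},
\]
and each $B_{\vy}$ in the union is $\mu$-null. When $g_{\vQ}(X)$ is finite or countable, which covers the setting of Proposition~\ref{prop:finite-rto-roots} and Section~\ref{sec:inference}, countable subadditivity gives $\mu(X\setminus B_{\vx^*})=0$, and therefore $s_{\vx^*}=1$.

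The main obstacle is the $\Rightarrow$ direction when $g_{\vQ}(X)$ is uncountable. There an uncountable union of null fibres can carry positive, or even full, mass. A concrete case is a solver whose fibres are individual points under Lebesgue-dominated $\mu$, so every RTO is unobservable and yet the fibres cover $X$. I would therefore state the lemma under the standing finiteness (or countability) of RTOs from the categorical reduction and note that the $\Leftarrow$ direction holds unconditionally.

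If the uncountable case must be treated without that restriction, the result should be read as ``$\vx^*$ is the unique outcome carrying positive mass and the remaining outcomes jointly carry zero mass.'' That joint condition is exactly $\mu(X\setminus B_{\vx^*})=0$, which closes the equivalence trivially.
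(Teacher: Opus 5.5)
Your proof is the paper's argument: the basins are disjoint fibres of $g_{\vQ}$, so $s_{\vx^*}=1$ forces every other basin to be null, and conversely null basins for all other RTOs force $s_{\vx^*}=1$. Your restriction of the $\Rightarrow$ direction to countably many RTOs improves on the paper, whose step ``$s_{\vx}=0$ for all $\vx\ne\vx^*$, hence $s_{\vx^*}=1$'' silently uses countable additivity and fails without it: take $X=[0,1]$, $\mu$ uniform, $\vQ(x)=\max(\tfrac12-x,0)$, where $\tfrac12$ is the unique $\mu$-observable RTO yet has basin size $\tfrac12$. Your all-singleton example, by contrast, only illustrates the obstacle and is not a counterexample, since there both sides of the equivalence are false.
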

\begin{proof}
If $\vx^*$ is the unique $\mu$-observable RTO, then $s_{\vx}=0$ for all $\vx\in\tilde X\setminus\{\vx^*\}$, hence
$s_{\vx^*}=1$.
Conversely, if $s_{\vx^*}=1$, then $s_{\vx}=0$ for all $\vx\in\tilde X\setminus\{\vx^*\}$, so no other RTO is
$\mu$-observable.
\end{proof}

Now we can go back to our previous analysis about basin sizes.

Remember that any event that has probability zero under the prior will have probability zero under the posterior. So any prior over a basin size that admits density such as the ones discussed so far will never give a positive probability for uniqueness of a $\mu$-observable RTO. Hence we need a prior that assigns positive probability to the event $s_{\vx^*} = 1$.

Spike and slab priors are a popular choice for such situations where we want to assign positive probability to a specific value (the spike) while allowing for uncertainty around other values (the slab). They are a mixture of a point mass at a specific value and a continuous distribution over the remaining values.

Let $\delta_{x}$ be the Dirac delta measure centered at $x$, and let $W$ be a measure over $[0,1]$ that admits density. Define a random variable $Z$ that determines whether the spike is true or the slab. The following is our spike and slab prior:
\[
    s_{\vx^*} | Z=0 \sim W \quad , \quad
    s_{\vx^*} | Z=1 = 1
\]
If $\Pi(Z=1) = p$, then the marginal prior on $s_{\vx^*}$ is $\Pi = p \delta_{1} + (1-p) W$. Throughout this subsection we assume $p>0$.

Now we are interested in $\Pi(s_{\vx^*}=1\mid H_n(\vx^*))$. Notice that $\Pi(s_{\vx^*} = 1\mid H_n(\vx^*)) = \Pi(Z=1\mid H_n(\vx^*))$. Using Bayes' rule, we have:
\begin{align}
    \Pi(s_{\vx^*} &= 1\mid H_n(\vx^*)) = \Pi(Z=1\mid H_n(\vx^*)) \notag \\
    &= \frac{p}{p + (1-p)\,\EE_W[s_{\vx^*}^{n}]}\label{eq:z=1_posterior}
\end{align}

As $W$ admits density over $[0,1]$, $\EE_W[s_{\vx^*}^{n}] \to 0$ as
$n \to \infty$. Hence \eqref{eq:z=1_posterior} implies that the posterior
probability of uniqueness of a $\mu$-observable RTO tends to $1$.

To talk about the rate of convergence, we impose a tail condition on the slab near $1$.

\begin{restatable}[Spike-and-slab posterior rate]{proposition}{propSpikeSlabRate}\label{prop:spike_slab_rate}
Suppose the slab $W$ satisfies
\[
W([1-u,1]) \le C u^\gamma \qquad \forall\,u\in(0,\delta]
\]
for some $C>0$, $\gamma>0$, and $\delta>0$. Then, under the spike-and-slab prior defined above,
\[
1-\Pi(s_{\vx^*}=1\mid H_n(\vx^*))
=O(n^{-\gamma}).
\]
\end{restatable}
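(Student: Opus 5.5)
Starting from the closed form \eqref{eq:z=1_posterior}, we have
\[
1-\Pi(s_{\vx^*}=1\mid H_n(\vx^*)) = \frac{(1-p)\,\EE_W[s_{\vx^*}^n]}{p+(1-p)\,\EE_W[s_{\vx^*}^n]} \le \frac{1-p}{p}\,\EE_W[s_{\vx^*}^n].
\]
Since $p>0$ is fixed, it suffices to show $\EE_W[s^n]=O(n^{-\gamma})$ under the tail hypothesis on $W$. The remaining work is therefore a bound on the moments of the slab.

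For the moment bound I will use the layer-cake formula. For $s\in[0,1]$,
\[
\EE_W[s^n]=\int_0^1 W(s^n>t)\,dt.
\]
Substituting $t=(1-u)^n$ gives
\[
\EE_W[s^n]=\int_0^1 W(s>1-u)\, n(1-u)^{n-1}\,du .
\]
I split this integral at $u=\delta$ (taking $\delta<1$ without loss of generality).

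On $(0,\delta]$, I bound $W(s>1-u)\le W([1-u,1])\le Cu^\gamma$ and $(1-u)^{n-1}\le e^{-(n-1)u}$. This leaves
\[
Cn\int_0^\infty u^\gamma e^{-(n-1)u}\,du = Cn\,\Gamma(\gamma+1)(n-1)^{-\gamma-1}=O(n^{-\gamma}).
\]
On $(\delta,1]$, I use $W\le 1$, so the integrand is at most $n(1-u)^{n-1}$. Its integral is $(1-\delta)^n$, which decays exponentially and is therefore $o(n^{-\gamma})$.

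Adding the two pieces gives $\EE_W[s^n]=O(n^{-\gamma})$, and the reduction in the first paragraph yields the claim. The constant works out to roughly $\frac{1-p}{p}\bigl(C\,\Gamma(\gamma+1)\,2^{\gamma+1}+o(1)\bigr)$ for $n\ge 2$.

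The only delicate step is the change of variables in the layer-cake representation, together with getting the exact exponent $\gamma$ rather than a weaker one. If the layer-cake route becomes awkward, the fallback is to split $\EE_W[s^n]$ directly at the threshold $1-u_n$ with $u_n=(\gamma\log n)/n$. The mass above the threshold is at most $Cu_n^\gamma$, and the contribution below it is at most $(1-u_n)^n\le n^{-\gamma}$. However, this only gives $O((\log n)^\gamma n^{-\gamma})$, so the integral version is preferred because it removes the logarithm.
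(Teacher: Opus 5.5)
Your proof is correct and follows the paper's argument essentially step for step. It uses the same reduction $1-\Pi(s_{\vx^*}=1\mid H_n(\vx^*))\le\frac{1-p}{p}\EE_W[s_{\vx^*}^n]$, the same layer-cake identity $\EE_W[s^n]=n\int_0^1(1-u)^{n-1}W(s\ge 1-u)\,du$, and the same split at $\delta$ into a Gamma-integral term bounded by $C\,\Gamma(\gamma+1)2^{\gamma+1}n^{-\gamma}$ for $n\ge 2$ plus an exact $(1-\delta)^n$ remainder (Lemma~\ref{lem:master_moment} and Corollary~\ref{cor:poly_tail}).
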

\begin{proofsketch}
Let $S=s_{\vx^*}$ under the slab $W$. By the layer-cake representation,
\[
\begin{aligned}
\EE_W[S^n]
&=
n\int_0^\delta (1-u)^{n-1}W(S\ge 1-u)\,du \\
&\qquad{}
+n\int_\delta^1 (1-u)^{n-1}W(S\ge 1-u)\,du .
\end{aligned}
\]
The first term is $O(n^{-\gamma})$ by the tail assumption
$W(S\ge 1-u)\le Cu^\gamma$ and the Gamma-integral bound. The second term is
exponentially small because $W(S\ge 1-u)\le1$ and
$(1-u)^{n-1}\le(1-\delta)^{n-1}$ for $u\ge\delta$. Thus
$\EE_W[S^n]=O(n^{-\gamma})$, and substituting this into
\eqref{eq:z=1_posterior} gives the result. See
Appendix~\ref{proof:spike_slab_rate} for details.
\end{proofsketch}

\subsubsection{Mixture of Finite Models}
The spike and slab prior is simple and intuitive. However, it only lets us test whether there is a unique RTO. One way to extend this is to model $(\vy_1,\dots,\vy_n)$ as i.i.d.\ draws from a categorical distribution whose number of components is unknown. This is a Mixture of Finite Models (MFM) \citep{miller2018mixture} specialized to categorical data (component labels observed), with no component-specific parameters.

MFMs are a class of Bayesian models that allow for uncertainty in the number of components in a mixture model. They are particularly useful when the true number of components is unknown and we want to infer it from the data.

In our context, an MFM constitutes a prior over the number of components $\PP(K)$ and a distribution of basin sizes given $K$:
\[
K\sim P_K,\qquad
s\mid K=k\sim \operatorname{Dirichlet}(\alpha_k^1,\dots,\alpha_k^k).
\]

The model is still underspecified: both $P_K$ and the Dirichlet parameters
$\alpha_k^i$ must be chosen. Exchangeability of the RTO labels suggests the
symmetric choice $\alpha_k^i=\alpha_k$. To tie the remaining sequence
$(\alpha_k)_{k\ge1}$ together, we impose the following refinement--coarsening
invariance condition.

\begin{assumption}[Refinement--coarsening invariance]\label{ass:refine_coarsen}
For any $K\ge1$ and $m\ge2$, let
$\vbasin^{mK}=(s_{j,\ell})_{1\le j\le K,\ 1\le \ell\le m}\sim \Pi_{mK}$ be the basin-mass vector
for $mK$ RTO. Define the coarsening map
\[
\mathcal C_{K,m}(\vbasin^{mK}) \;:=\; \Bigl(\sum_{\ell=1}^m s_{1,\ell},\ \dots,\ \sum_{\ell=1}^m s_{K,\ell}\Bigr)\in\Delta_{K-1}
\]
We assume $\mathcal C_{K,m}(\vbasin^{mK})\sim \Pi_K$ for all $K,m$.
\end{assumption}
This mirrors the observed-partition logic of Section~\ref{subsec:numerical}: an observation map $\mathcal O_\varepsilon$ can coarsen the basins by replacing $g_{\vQ}$ by $g_{\vQ}^{\varepsilon}=\mathcal O_\varepsilon\circ g_{\vQ}$. In our setting, we impose invariance under \emph{refinement--coarsening}: if each RTO is split into $m$ indistinguishable sub-RTO and we then coarsen by summing the $m$ sub-basins, the induced law on the coarsened basin sizes should agree with the $K$-RTO prior.
This is the assumption that lets the Dirichlet concentration parameter be tied across different values of $K$; without it, the family $(\alpha_K)_{K\ge1}$ would remain underdetermined.

\begin{proposition}[Coarsening invariance pins $\alpha_k$]
Assume that for each $K\ge1$ the prior on basin masses is symmetric Dirichlet. Then under Assumption~\ref{ass:refine_coarsen}, $a_K=\alpha/K$ for some constant $\alpha>0$.
\end{proposition}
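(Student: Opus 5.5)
The plan is to use the aggregation property of the Dirichlet distribution and then match concentration parameters. The statement asserts that symmetric Dirichlet priors $\Pi_K=\operatorname{Dirichlet}(a_K,\dots,a_K)$ compatible with Assumption~\ref{ass:refine_coarsen} must satisfy $a_K=\alpha/K$.

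First, recall the aggregation property. If $(s_1,\dots,s_N)\sim\operatorname{Dirichlet}(\beta_1,\dots,\beta_N)$ and the coordinates are summed over a partition into blocks $I_1,\dots,I_K$, then
\[
\Bigl(\sum_{i\in I_1}s_i,\dots,\sum_{i\in I_K}s_i\Bigr)\sim \operatorname{Dirichlet}\Bigl(\sum_{i\in I_1}\beta_i,\dots,\sum_{i\in I_K}\beta_i\Bigr).
\]
This follows from the Gamma representation $s_i=G_i/\sum_j G_j$ with independent $G_i\sim\operatorname{Gamma}(\beta_i,1)$, because sums of independent Gammas with a common scale are Gamma with the summed shape. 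I will cite this as standard or include the one-line Gamma argument.

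Next, apply it to $\vbasin^{mK}\sim\operatorname{Dirichlet}(a_{mK},\dots,a_{mK})$ with the block partition used by $\mathcal C_{K,m}$. This gives $\mathcal C_{K,m}(\vbasin^{mK})\sim\operatorname{Dirichlet}(m a_{mK},\dots,m a_{mK})$. Assumption~\ref{ass:refine_coarsen} says this law equals $\operatorname{Dirichlet}(a_K,\dots,a_K)$.

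The step that needs care is identifiability: two Dirichlet laws on $\Delta_{K-1}$ must coincide only when their parameters coincide. For $K\ge2$, I would compare first moments, which are both $1/K$ and so uninformative, and then second moments. The variance of a coordinate is
\[
\frac{(K-1)}{K^2(K b+1)}
\]
under $\operatorname{Dirichlet}(b,\dots,b)$. This is strictly decreasing in $b$, so equal laws force $m a_{mK}=a_K$.

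The case $K=1$ is the subtle one, since $\Delta_0$ is a point and the assumption says nothing there. I handle it by never using $K=1$ as the coarse index. Instead, apply the identity with $K\ge2$ only and define everything relative to $K=2$:
\begin{itemize}
\item For any $K\ge2$, use $m=2$ on the pair $(K,2)$ and $m=K$ on the pair $(2,K)$. This gives $2a_{2K}=a_K$ and $K a_{2K}=a_2$.
\item Hence $K a_K = 2K a_{2K}=2a_2$.
\item Setting $\alpha:=2a_2>0$ gives $a_K=\alpha/K$ for all $K\ge2$.
\end{itemize}
For $K=1$, the value $a_1$ is immaterial because $\Pi_1=\delta_1$ regardless of $a_1$. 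I will adopt the convention $a_1=\alpha$, which is consistent with the formula, and note it explicitly.

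I expect the identifiability step to be the only real obstacle, and it is settled by the monotone variance formula.
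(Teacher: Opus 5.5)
Your proof is correct. It uses the same engine as the paper (Dirichlet aggregation, then matching parameters), but it differs at the anchoring step, and that difference matters.

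The paper derives $a_K = m\,a_{mK}$ for all $K,m$, takes $K=1$ to get $a_m=a_1/m$, and sets $\alpha:=a_1$. As you observe, the assumption is vacuous at $K=1$: $\mathcal C_{1,m}(\vbasin^{m})\equiv 1$ and $\Pi_1=\delta_1$ whatever $a_1$ is. So equality of laws on $\Delta_0$ gives no relation between $a_1$ and $a_m$, and the paper's step silently treats a parameter as identified from a law that does not determine it.

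Your route uses two factorizations of $2K$: coarse index $K$ with $m=2$, and coarse index $2$ with $m=K$. Both stay within $K\ge 2$ and give $a_K=2a_2/K$. Along the way you make explicit the identifiability that the paper uses without comment, namely that the coordinate variance $(K-1)/(K^2(Kb+1))$ is strictly decreasing in $b$.

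The paper's version buys brevity and the literal conclusion $a_1=\alpha$. Yours is a derivation that actually follows from Assumption~\ref{ass:refine_coarsen}, at the cost of fixing $a_1$ by convention. That cost is harmless, since neither $\Pi_1$ nor $L_1(n)=1$ depends on $a_1$.
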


\begin{proof}
By the aggregation property of the Dirichlet distribution, if $(s_{j,\ell})\sim
\operatorname{Dirichlet}(a_{mK},\dots,a_{mK})$, then the block sums
$S_j=\sum_{\ell=1}^m s_{j,\ell}$ satisfy $(S_1,\dots,S_K)\sim
\operatorname{Dirichlet}(m a_{mK},\dots,m a_{mK})$.
Refinement--coarsening invariance therefore implies $a_K=m a_{mK}$ for all $m,K$.
Taking $K=1$ gives $a_m=a_1/m$, hence $a_K=a_1/K$. Writing $\alpha:=a_1$ yields
$a_K=\alpha/K$.
\end{proof}

Thus far, we have reduced infinitely many parameters $\alpha_k^i$ to a single parameter $\alpha$. $\alpha$ captures how concentrated the distribution of basins of attraction is: smaller $\alpha$ gives spikier distributions while larger $\alpha$ gives more uniform distributions. Hence, if $\alpha$ is small, $H_n(\vx^*)$ is not as informative about $K$ compared to when $\alpha$ is large. So our inference is contingent on the choice of $\alpha$.

Since the prior depends on $P_K$ and $\alpha$, we can either adopt conservative defaults
(e.g.\ a heavy-tailed $P_K$ and small $\alpha$), or calibrate these hyperparameters on benchmark instances from the same problem class.

To do so, parametrize $P_K(\cdot;\vtheta)$ by a finite-dimensional parameter $\vtheta$. Sample independent instances $\vQ_i\in\mathcal Q$ from an application-specific instance
distribution, and for each instance run the solver from $m$ i.i.d.\ initial conditions $\vx^i_t\sim\mu$ to obtain terminal outcomes $\vy^i_t := g_{\vQ_i}(\vx^i_t)$, $t=1,\dots,m$. We then choose $(\hat{\vtheta},\hat\alpha)$ by empirical Bayes, i.e.\ by maximizing the marginal likelihood of the observed outcomes $\{(\vy^i_1,\dots,\vy^i_m)\}_{i=1}^R$ under the model (integrating out basin sizes and mixing over $K$).

For each $k\ge 1$ and $n\ge 1$, define $L_k(n)\;:=\;\PP(H_n(\vx^*)\mid K=k)$.
Under the symmetric Dirichlet prior, this quantity does not depend on the
particular label assigned to $\vx^*$. Appendix~\ref{proof:Lk_twosided_fixed}
proves explicit two-sided bounds; in particular, for fixed $k$ and
$\delta_k=\alpha(1-\tfrac1k)$, $L_k(n)\asymp n^{-\delta_k}$.

Using these bounds, we can now derive rate-tight concentration bounds for $\PP(K=1\mid \alpha,H_n(\vx^*))$.

\begin{restatable}[Rate-tight concentration of $\PP(K=1\mid H_n(\vx^*))$]{theorem}{thmKOneTightRateFixed}\label{thm:K1_tight_rate_fixed}
Assume $\pi_1=\PP(K=1)>0$ and $\pi_2=\PP(K=2)>0$ (under the fixed hyperparameter $\alpha>0$
built into $\PP$). Then for all $n\ge 1$,
\begin{equation*}
\begin{aligned}
\frac{\pi_2}{2}\left(\frac{\alpha/2}{\,n-1+\alpha/2\,}\right)^{\alpha/2}
&\le 1-\PP(K=1\mid H_n(\vx^*))\\
&\le \frac{1-\pi_1}{\pi_1} \frac12\left(\frac{1+\alpha}{\,n+\alpha\,}\right)^{\alpha/2}
\end{aligned}
\end{equation*}
Consequently, $1-\PP(K=1\mid H_n(\vx^*))=\Theta\!\left(n^{-\alpha/2}\right)$, hence $\PP(K=1\mid H_n(\vx^*))\to 1$ at rate $n^{-\alpha/2}$.
\end{restatable}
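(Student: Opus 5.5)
We will write the posterior explicitly through the marginal likelihoods $L_k(n)=\PP(H_n(\vx^*)\mid K=k)$ and then sandwich it. Bayes' rule gives
\[
1-\PP(K=1\mid H_n(\vx^*))=\frac{\sum_{k\ge2}\pi_k L_k(n)}{\pi_1L_1(n)+\sum_{k\ge2}\pi_kL_k(n)} .
\]
For $K=1$ the single RTO has basin size one, so $L_1(n)=1$.

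For $k\ge2$, with $a=\alpha/k$, the aggregation property of the Dirichlet (as in the coarsening proposition) shows that $s_{\vx^*}\sim\operatorname{Beta}(a,(k-1)a)$. Hence
\[
L_k(n)=\EE[s_{\vx^*}^n]=\prod_{j=0}^{n-1}\frac{a+j}{\alpha+j}=\frac1k\prod_{j=1}^{n-1}\Bigl(1-\frac{\alpha-a}{\alpha+j}\Bigr).
\]
Here the $j=0$ factor equals $a/\alpha=1/k\le 1/2$. This is where the constant $\tfrac12$ in both bounds comes from. I will use the two-sided bounds of Appendix~\ref{proof:Lk_twosided_fixed} where convenient, but will derive the specific constants from this product.

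\textbf{Lower bound.} Since every $L_k(n)\le1$ and $\sum_k\pi_k=1$, the denominator is at most $1$. Dropping all numerator terms except $k=2$ gives $1-\PP(K=1\mid H_n)\ge \pi_2L_2(n)$. For $k=2$ we have $a=\alpha/2$, and each factor with $j\ge1$ equals $(a+j)/(2a+j)=1-a/(2a+j)$. I will use $1-x\ge e^{-x/(1-x)}$, which here reads $(a+j)/(2a+j)\ge \exp(-a/(a+j))$. Comparing the sum with an integral (the integrand is decreasing) gives
\[
\sum_{j=1}^{n-1}\frac{a}{a+j}\le\int_0^{n-1}\frac{a}{a+x}\,dx=a\log\frac{n-1+a}{a}.
\]
Exponentiating yields $L_2(n)\ge\tfrac12\bigl(\tfrac{a}{n-1+a}\bigr)^{a}$, which is the claimed lower bound with $a=\alpha/2$.

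\textbf{Upper bound.} The denominator is at least $\pi_1$. The numerator is at most $(1-\pi_1)\sup_{k\ge2}L_k(n)$. It therefore suffices to show, uniformly in $k\ge2$,
\[
L_k(n)\le\tfrac12\Bigl(\tfrac{1+\alpha}{n+\alpha}\Bigr)^{\alpha/2}.
\]
The prefactor satisfies $1/k\le1/2$. For the remaining product, use $1-x\le e^{-x}$ together with
\[
\sum_{j=1}^{n-1}\frac1{\alpha+j}\ge\int_1^{n}\frac{dx}{\alpha+x}=\log\frac{n+\alpha}{1+\alpha}.
\]
This bounds the product by $\bigl(\tfrac{1+\alpha}{n+\alpha}\bigr)^{\alpha-a}$. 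Since $\alpha-a=\alpha(1-1/k)\ge\alpha/2$ and the base lies in $(0,1]$, the power is at most the exponent-$\alpha/2$ version. This gives the claimed uniform bound.

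Both sides are then of order $n^{-\alpha/2}$, which gives the $\Theta(n^{-\alpha/2})$ conclusion. The main obstacle is getting the constants in exactly the stated form. That requires isolating the $j=0$ factor, choosing the integral endpoints correctly, and checking that the upper bound is uniform over all $k\ge2$; the monotonicity in $k$ of the exponent $\alpha-a$ is what delivers that uniformity.
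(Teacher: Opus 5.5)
Your proposal is correct and follows the paper's proof. It uses the same Bayes decomposition with $L_1(n)=1$, the same reduction of the lower bound to $\pi_2 L_2(n)$, and the same sum--integral comparisons for the two-sided bounds on $L_k(n)$. The only variation is in the upper bound. There you take the supremum over $k\ge 2$ of the per-$k$ envelope $\frac1k\bigl(\frac{1+\alpha}{n+\alpha}\bigr)^{\delta_k}$, instead of invoking the paper's monotonicity of $L_k(n)$ in $k$ (Lemma~\ref{lem:Lk_monotone_fixed}, proved via the digamma function). This reaches the identical constant slightly more elementarily.
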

\begin{proofsketch}
    Write $N_n:=\sum_{k\ge 2}\pi_k L_k(n)$. By Lemma~\ref{lem:Lk_monotone_fixed},
    $L_k(n)$ is decreasing in $k$, and $1-\PP(K=1\mid H_n(\vx^*))=\frac{N_n}{\pi_1+N_n}$, so:
    \[
    \begin{aligned}
        \pi_2 L_2(n) &\leq N_n \leq 1-\PP(K=1\mid H_n(\vx^*))\\
        &\leq \frac{N_n}{\pi_1} \leq \frac{1-\pi_1}{\pi_1} L_2(n)
    \end{aligned}
    \]
    We can then apply the bounds on $L_2(n)$ from Lemma~\ref{lem:Lk_twosided_fixed}.
    The key mechanism is that the two-sided bounds in Lemma~\ref{lem:Lk_twosided_fixed}
    scale like $n^{-\delta_k}$ with $\delta_k=\alpha(1-\tfrac1k)$, which is minimized at $k=2$.
    Thus the tail mixture over $k\ge2$ is controlled by the $k=2$ term and yields the rate
    $n^{-\alpha/2}$. See Appendix~\ref{proof:K1_tight_fixed} for details.
\end{proofsketch}

The polynomial rate in Theorem~\ref{thm:K1_tight_rate_fixed} reflects the fact that observing $H_n(\vx^*)$ provides direct evidence that the basin mass of $\vx^*$ is large, but only indirect evidence against the existence of \emph{other} $\mu$-observable RTO with small basin masses.

Under the MFM prior, configurations with $K\ge 2$ can still assign a very large share of mass to a single outcome and spread the remaining mass thinly across additional outcomes. Such alternatives are difficult to rule out from finitely many restarts, so posterior concentration on $K=1$ is necessarily much slower than the exponential concentration obtained for basin-size inference.

\section{Application}\label{sec:applications}

This section illustrates how a practitioner can verify the assumptions of Sections~\ref{sec:model}--\ref{sec:inference} in a concrete restart experiment, after which the posterior calculations in Section~\ref{sec:inference} apply without modification. We demonstrate this through the fixed-point equilibrium solver of \citet{morrow2011fixed} for markets with mixed logit demand and multi-product firms.

This choice is not driven by any intimate connection between discrete-choice models and our methods, but by the model's familiarity, widespread use, and the importance of both uniqueness and dominance of its equilibria. It is also a good application for our method because many runs can be generated in reasonable time: implementations such as \texttt{pyblp} make repeated Morrow--Skerlos fixed-point solves realistic in applied mixed-logit work \citep{conlon2020best}. Our goal is methodological: to show how one takes a problem and solver, identifies the structures and assumptions of Section~\ref{sec:model}, and interprets repeated convergence using Section~\ref{sec:inference}.

\subsection{The Morrow--Skerlos Solver}
Empirical industrial organization provides a useful set of examples because it studies flexible equilibrium models for which tractable uniqueness guarantees are often unavailable. Random-restart and equilibrium-selection concerns arise across static supply and demand \citep{BerryLevinsohnPakes1995}, vertical relations \citep{crawford2012welfare, cussen2026nash}, and dynamic oligopoly \citep{PakesMcGuire1994,besanko2010learning}.

We focus on markets with mixed logit demand and multi-product firms, a workhorse setting in static supply and demand estimation. In this setting, multiple equilibria need not prevent estimation, but equilibrium uniqueness matters for counterfactual analysis, prediction, and the construction of optimal instruments \citep{BerryLevinsohnPakes1995, conlon2020best}. Because uniqueness is rarely guaranteed analytically, applied researchers often rely on iterative solvers and restart-based diagnostics. We use the fixed-point solver of \citet{morrow2011fixed} as a concrete example of how the assumptions in Section~\ref{sec:model} can be verified.

We now set up the problem \citet{morrow2011fixed} solves. Let there be $d$ inside goods and an outside option (with price normalized to $0$) in the market, and assume firms have constant returns to scale. $\vc \in (0,\infty)^d$ denotes the vector of marginal costs and $\vx \in (0,\infty)^{d}$ the vector of prices. The vector field constructed from firms' first-order conditions can be rewritten as
\begin{equation}
\vF(\vx) = \vx - \vc - \vzeta(\vx)\label{eqn:ms_zeta_vf}
\end{equation}
where $\vzeta$ is the vector of markups whose expression is provided by \citet{morrow2011fixed} in terms of market shares without closed form. They propose to solve \eqref{eqn:ms_zeta_vf} by doing fixed-point iterations on $\vx \mapsto \vc + \vzeta(\vx)$. Although the roots of \eqref{eqn:ms_zeta_vf} solve all firms' first-order conditions, they need not be equilibria; the solver therefore generates equilibrium candidates whose validity must be checked separately. Morrow and Skerlos point out that, when multiple solutions may exist, it is important to restart computations from different points in the strategy space. We take this restart experiment as the object to which our inference methods are applied.

\subsection{Verifying the Structural Conditions}

We now verify the structural conditions from Section~\ref{sec:model} for the Morrow--Skerlos solver. In the language of Section~\ref{sec:strategy}, this means justifying the categorical reduction by showing that the solver corresponds to a dynamic solver, that the induced terminal-outcome map is measurable, and that there are finitely many RTOs, at least prevalently in the relevant class of dynamics.

\subsubsection{Dynamic Solver}
We first pin down the space of initial guesses and possible outcomes of the solver. Here, this is naturally $X = \R^d_+$, since the objects of interest are prices and they are non-negative. This $X$ satisfies Assumption~\ref{ass:convex-closed} as it is closed and convex.

We next identify the updating rule and its idealized dynamic version. The solver performs fixed-point iterations on $\vx \mapsto \vc + \vzeta(\vx)$, so its idealized version is the Picard flow generated by
\[
\vQ(\vx) = \vc + \vzeta(\vx) - \vx.
\]
\citet{skerlos2010fixed} show that this map is $C^1$ and consequently locally Lipschitz. Hence it corresponds to a dynamic solver $g_{\vQ}$, which is measurable by Proposition~\ref{prop:measurable}.

\subsubsection{Finiteness}
Here, we would need to verify that $\mR(\QTC)$ is finite. In this case, this seems daunting as we don't even have a closed form expression for $\vzeta$. Instead, we aim to show it holds prevalently over the classes of dynamics. To do so, we need to verify three conditions:
\begin{enumerate}
    \item Assumption~\ref{ass:kkq}, which gives compact containment of roots;
    \item a complete vector-space structure on the parametrized class $\mathcal Q$;
    \item the submersion condition needed for Proposition~\ref{prop:e-submersion-finite-attractors}, or equivalently the simpler sufficient condition in Proposition~\ref{prop:ev-submersion}.
\end{enumerate}
As closed and discrete subsets of compact sets are finite, these conditions imply that prevalently $|g_{\vQ}(X)| < \infty$. Remark~\ref{remark:finite-picard-gradient} gives an easier route when $\vzeta$ can vary freely over $C^1(X,\R^d)$. In that case, the Picard dynamics $\vQ(\vx)=\vc+\vzeta(\vx)-\vx$ also range over $C^1(X,\R^d)$, so $\mathcal Q=C^1(X,\R^d)$ and the remark implies prevalent finiteness of RTOs. We give a more explicit argument here because many applications restrict $\vzeta$ to a parametrized family induced by primitives. Prevalence in the ambient $C^1$ class need not imply prevalence inside such a restricted family, so the argument below verifies the conditions directly for the parametrized class.

\citet{skerlos2010fixed} prove that for a wide class of models, for any fixed primitive, the equilibrium prices are bounded below and above by positive constants. Hence they belong to a compact box. As this box sits strictly inside the positive orthant, we can replace it with a slightly larger smooth compact set; Appendix~\ref{appendix:application-smooth-domain} gives an explicit construction. This verifies Assumption~\ref{ass:kkq}.

Moving to the class of dynamics, the natural choice is one obtained by varying the primitives of the problem, i.e. the marginal costs and the markups. Canonically, marginal costs are parameterized by $\vtheta \in \R^{d_C}$ and markups by $\vomega \in \R^{d_\zeta}$ \citep{conlon2020best}. Under an injective parametrization $(\vtheta,\vomega)\mapsto \vQ_{(\vtheta,\vomega)}$, Appendix~\ref{appendix:application-banach} endows $\mathcal{Q}$ with the Banach structure transported from $\R^{d_C}\times\R^{d_\zeta}$. Therefore $\mathcal{Q}$ is indeed a complete metric vector space. This prevalence statement is relative to the chosen parametrization of primitives and the transported linear structure on $\mathcal Q$.

Consider the perturbations $P=\{\vQ_{(\vtheta,0)}|\vtheta \in \R^{d_C}\}$, that is, perturbations that keep markups as is and move marginal costs. If the structure of $\vc_{\vtheta}$ is rich enough to be a submersion, i.e., we can move the marginal cost in any direction by changing $\vtheta$, then we can invoke Proposition~\ref{prop:ev-submersion} to show evaluation maps are submersions and then Proposition~\ref{prop:e-submersion-finite-attractors}. A straightforward choice is to set $d_C=d$ and $\vc_{\vtheta} = e^{\vtheta}$ where $e^{\vtheta}$ is the element-wise exponential of $\vtheta$. (This happens to be a standard function for marginal costs in mixed logit models \citep{conlon2020best}.) In this case, the Jacobian of $\vc_{\vtheta}$ is a diagonal matrix with non-zero entries, so it is a submersion. Hence, having finitely many reachable terminal outcomes is prevalent in $\mathcal{Q}$.

Having verified the structural conditions, the categorical reduction applies: repeated random restarts can be treated as categorical data over the RTOs induced by $g_{\vQ}$.

\subsection{Illustrative Posterior Calculations}

Once the structural conditions are verified, Tables~\ref{table:basin-posterior}, \ref{table:spike-slab}, and~\ref{table:MFM-bound} show how the posteriors of Section~\ref{sec:inference} concentrate after observing $H_n$, the event that $n$ restarts converge to the same RTO.

\begin{table}[t]
\TABLE
{Posterior probability $\Pi(s_{\vx^*}\ge 1-\varepsilon\mid H_n)$ under $s_{\vx^*}\sim \operatorname{Beta}(1,1)$, from Proposition~\ref{prop:beta_prior}\label{table:basin-posterior}}
{
\setlength{\tabcolsep}{4pt}
\makebox[\textwidth][c]{%
\begin{tabular*}{\textwidth}{@{\extracolsep{\fill}}lccccc}
\toprule
 & $\varepsilon=10^{-4}$ & $\varepsilon=10^{-3}$ & $\varepsilon=10^{-2}$ & $\varepsilon=5\times 10^{-2}$ & $\varepsilon=10^{-1}$ \\
\midrule
$n=10$   & $0.0011$ & $0.0109$ & $0.1047$ & $0.4312$ & $0.6862$ \\
$n=100$   & $0.0100$ & $0.0961$ & $0.6376$ & $0.9944$ & $1-2.39\times 10^{-5}$ \\
$n=1000$   & $0.0953$ & $0.6327$ & $1-4.27\times 10^{-5}$ & $1-5.03\times 10^{-23}$ & $1-1.57\times 10^{-46}$ \\
$n=10000$   & $0.6322$ & $1-4.51\times 10^{-5}$ & $1-2.23\times 10^{-44}$ & $1-1.64\times 10^{-223}$ & $1-2.40\times 10^{-458}$ \\
\bottomrule
\end{tabular*}
}
}
{}
\end{table}

\begin{table}[t]
\TABLE
{\small Posterior probability $\Pi(s_{\vx^*}=1\mid H_n)$ under spike-and-slab prior with $W=\operatorname{Beta}(1,1)$ and varying $p$, from \eqref{eq:z=1_posterior}\label{table:spike-slab}}
{
\setlength{\tabcolsep}{4pt}
\makebox[\textwidth][c]{%
\begin{tabular*}{\textwidth}{@{\extracolsep{\fill}}lcccc}
\toprule
 & $p=10^{-3}$ & $p=10^{-2}$ & $p=10^{-1}$ & $p=0.5$ \\
\midrule
$n=10$   & $0.0109$ & $0.100$ & $0.550$ & $0.917$ \\
$n=100$   & $0.0918$ & $0.505$ & $0.918$ & $0.990$ \\
$n=1000$   & $0.501$ & $0.910$ & $0.991$ & $0.999$ \\
$n=10000$   & $0.909$ & $0.990$ & $0.999$ & $1.000$ \\
\bottomrule
\end{tabular*}
}
}
{}
\end{table}

\begin{table}[t]
\TABLE
{\small Lower bounds for $\PP(K=1\mid H_n)$ under geometric and zero-truncated Poisson MFM priors over $K$, from Theorem~\ref{thm:K1_tight_rate_fixed}\label{table:MFM-bound}}
{
\footnotesize
\setlength{\tabcolsep}{3pt}
\makebox[\textwidth][c]{%
\begin{tabular*}{\textwidth}{@{\extracolsep{\fill}}lcccccccc}
\toprule
 & \multicolumn{4}{c}{Heavier tail: $\PP(K=k)=2^{-k}$} & \multicolumn{4}{c}{Lighter tail: $\PP(K=k)=\frac{e^{-1}}{(1-e^{-1})k!}$} \\
\cmidrule(lr){2-5}\cmidrule(lr){6-9}
$n$ & $\alpha=0.1$ & $\alpha=0.5$ & $\alpha=1$ & $\alpha=5$ & $\alpha=0.1$ & $\alpha=0.5$ & $\alpha=1$ & $\alpha=5$ \\
\midrule
$10$ & $0.552$ & $0.693$ & $0.787$ & $0.949$ & $0.679$ & $0.779$ & $0.847$ & $0.964$ \\
$100$ & $0.601$ & $0.825$ & $0.930$ & $1-3.90\times 10^{-4}$ & $0.713$ & $0.874$ & $0.949$ & $1-2.80\times 10^{-4}$ \\
$1000$ & $0.644$ & $0.902$ & $0.978$ & $1-1.38\times 10^{-6}$ & $0.745$ & $0.929$ & $0.984$ & $1-9.89\times 10^{-7}$ \\
$10000$ & $0.683$ & $0.945$ & $0.993$ & $1-4.40\times 10^{-9}$ & $0.772$ & $0.960$ & $0.995$ & $1-3.16\times 10^{-9}$ \\
\bottomrule
\end{tabular*}
}
}
{}
\end{table}

\section{Limits of Restart-Based Inference}\label{sec:limits}

While Sections~4--6 provide conditions under which restart-based inference is statistically coherent, the framework also clarifies intrinsic limitations of what restart evidence can reveal. This section highlights several such limitations.

\subsection{Interpreting the evidence}
As discussed in the Introduction, restart-based evidence is leveraged to derive a variety of conclusions. Some are positive conclusions, e.g.\ reproducibility or stability of the observed terminal outcome. These can be stated in terms of the size of the observed basin of attraction. As we saw in Section~\ref{sec:inference}, restart-based evidence provides strong evidence for such claims. On the other hand, some conclusions also have a negative component, e.g.\ that there are no other terminal outcomes, or that the terminal outcome is unique. Such claims are trickier to support with restart-based evidence.

The reason is the visibility limitation formalized by $\mu$-observable RTOs in Definition~\ref{def:RTO}. If an RTO has basin size zero under $\mu$, then it will almost surely never be observed under repeated random restarts, regardless of the number of runs performed. Consequently, repeated convergence to a single observed outcome cannot statistically rule out additional non-$\mu$-observable RTO. This is a limitation of the solver-restart experiment, not a sampling error that disappears with more data.

Additionally, while basins of attraction themselves are properties of the solver, their size depends on the law according to which initial conditions are randomized. When $X$ is compact, there is a natural choice of $\mu$ as the uniform distribution over $X$, but if $X$ is unbounded, there is no such natural choice. So basin sizes should be interpreted while keeping in mind the particular restart distribution $\mu$. Thus, even a complete solver supports problem-level claims only for solutions visible under the restart distribution.

\subsection{Numerical Inaccuracies and Observed Partitions}
Section~\ref{subsec:numerical} anchors numerical inaccuracies through the observed terminal map $g_{\vQ}^{\varepsilon}=\mathcal O_\varepsilon\circ g_{\vQ}$. The observation map can combine distinct idealized terminal outcomes into the same observed outcome. When this happens, restart evidence supports claims about the observed partition, not necessarily the idealized one.

The simplest example is when the idealized solver admits a continuum of terminal outcomes. Any realistic numerical approximation will coarsen this continuum into finitely many observed terminal outcomes. Hence the partition of initial conditions into basins of attraction is very different under $g_{\vQ}^{\varepsilon}$ compared to $g_{\vQ}$.

We are eschewing the question of whether two outcomes should be treated as ``the same'' or ``different'' beyond checking equality under the observation map. In practice, we treat extremely close solutions as the same, while completely different solutions are treated as separate. Practitioners have to draw the line of distinction somewhere, and we do not provide a guideline for how this must be done.
 
\subsection{Dependence Between Restarts}

Section~\ref{sec:applications} formalizes a particular experiment: independent randomized restarts, under which observed terminal outcomes behave approximately as i.i.d.\ samples from the induced distribution over RTO.

There are also ``restart-like'' data, both designed by practitioners (e.g.\ warm starts, adaptive perturbations, random walks over initial conditions) and generated by nature (e.g.\ repeated market or traffic outcomes under evolving fundamentals). Such data fit our framework only to the extent that they satisfy the effective independence assumption. When initial conditions are correlated, the nominal number of observations can substantially overstate the effective sample size, so repeated observations of the same outcome need not provide independent evidence about multiplicity.

\section{Conclusion}
\label{sec:conclusion}
Random restarts are often used to support several claims at once: robustness to initialization, dominance under the solver, and sometimes uniqueness. This paper separates these interpretations. Repeated convergence directly informs basin-size claims about the observed reachable terminal outcome (RTO); it is weaker evidence for uniqueness because uniqueness also requires ruling out other RTOs with small or zero restart probability.

We identify structural assumptions under which restart evidence reduces solver geometry to categorical data over terminal outcomes. Under this reduction, posterior beliefs about the observed basin size concentrate exponentially fast, whereas posterior beliefs about uniqueness concentrate only polynomially under the priors studied here.

The application demonstrates this reduction for the Morrow--Skerlos fixed-point solver by connecting solver geometry to categorical data over terminal outcomes.

As emphasized throughout, the conclusions remain solver-level unless additional structure links the solver's RTOs to the problem's solutions; completeness, $\mu$-observability, numerical resolution, and dependence across restarts all affect that upgrade. We have focused on the event that all restarts produce the same outcome. Conditioning on richer events, such as observing multiple RTOs, requires nontrivial choices about priors on latent outcome structure and is left for future work.

\bibliographystyle{plainnat}
\bibliography{references}

\appendix
\section{Dynamic-Solver Arguments}

This appendix supplies the dynamic-solver details used in
Section~\ref{sec:model}. It is organized around the three properties needed for
the categorical reduction. First, locally well-posed projected trajectories give
a terminal map. Second, that terminal map is measurable. Third, ordinary
terminal outcomes lie among the roots of the projected vector field.
The first subsection proves the sufficient condition used in
Lemma~\ref{lemma:Lipschitz}: locally Lipschitz dynamics generate a unique
maximal local projected flow. The second subsection proves that the terminal
map $g_{\vQ}$ is measurable even though some trajectories may fail to exist for
all time or may fail to converge. The third subsection proves the root
containment behind Proposition~\ref{prop:finite-rto-roots}. This last argument
is included because $\QTC$ need not be continuous at the boundary, so standard
continuous-vector-field arguments such as Barbalat's lemma do not apply
directly.

\subsection{Well-Posedness Under Local Lipschitz Dynamics}\label{proof:lipschitz-local-flow}

This subsection proves Lemma~\ref{lemma:Lipschitz}. The global Lipschitz case
is a standard projected-dynamical-system theorem. The locally Lipschitz case
is obtained by the global through the following trick: for any compact
$K\subseteq X$ there exists a globally Lipschitz map $\widehat{\vQ}_K$ that
agrees with $\vQ$ on $K$. As the local behavior of the flow depends only on the
local dynamics, we can use the global flow $\widehat{\vG}_K$ of
$\widehat{\vQ}_K$ to define the local flow of $\vQ$ until the exit time from
$K$. By patching together these local flows, we can construct a maximal local
flow for $\vQ$ on $X$. The uniqueness and continuity properties follow from the
uniqueness and continuity of the global flows on each compact region and the
consistency of these flows before exit, verifying
Assumption~\ref{ass:well-behaved-proj}.

\begin{lemma}[Global well-posedness under global Lipschitzness]\label{lemma:global-lip-flow}
If $\vQ$ is globally Lipschitz on $X$, then the projected dynamics generated by
$\vQ$ admit a unique global flow
\[
\vG:X\times[0,\infty)\to X
\]
that is continuous in $(\vx,t)$.
\end{lemma}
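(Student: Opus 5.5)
The plan is to prove Lemma~\ref{lemma:global-lip-flow} through the standard existence--uniqueness theory for projected dynamical systems, i.e.\ Theorem~2.5 of \citet{nagurney2012projected}, already cited in Lemma~\ref{lemma:Lipschitz}. To keep the argument self-contained I would organize it around the equivalent differential inclusion
\[
\vQ(\vx(t)) - \dot\vx(t)\in N_X(\vx(t)).
\]
This form is preferable to working with $\QTC$ directly, because $\QTC$ is discontinuous at $\partial X$ and Picard iteration therefore fails. The normal-cone operator $N_X$ is maximal monotone, since it is the subdifferential of the indicator of the closed convex set $X$ (Assumption~\ref{ass:convex-closed}). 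The inclusion is thus a Lipschitz perturbation of a maximal monotone evolution, and I would import the relevant theory rather than rebuild it.

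\emph{Existence.} I would build approximate solutions by the projected Euler scheme
\[
\vx_{k+1}=\Pi_X\bigl(\vx_k+h\vQ(\vx_k)\bigr).
\]
Global Lipschitzness gives linear growth, $\|\vQ(\vx)\|\le \|\vQ(\vx_0)\|+L\|\vx-\vx_0\|$. Together with nonexpansiveness of $\Pi_X$ and Gr\"onwall, this yields bounds on the piecewise-linear interpolants that are uniform on every $[0,T]$, along with equi-Lipschitz estimates. By Arzel\`a--Ascoli a subsequence converges uniformly on $[0,T]$ to an absolutely continuous limit in $X$; $X$ is closed, so the limit stays in $X$.

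To pass to the limit in the inclusion I would use the variational characterization: $\vx$ solves the inclusion iff for every $\vy\in X$
\[
\tfrac12\tfrac{d}{dt}\|\vx(t)-\vy\|^2\le\langle \vQ(\vx(t)),\vx(t)-\vy\rangle .
\]
This inequality is stable under uniform limits after integrating in time. This is the same mechanism as in the proof sketch of Proposition~\ref{prop:finite-rto-roots}. Finally, the identification $\dot\vx=\QTC(\vx)$ a.e.\ follows from the Moreau decomposition and the standard fact that an absolutely continuous curve in $X$ has $\dot\vx(t)\in \TC_X(\vx)\cap(-\TC_X(\vx))$-compatible derivatives. I expect this step to be the main technical obstacle. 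I would not reprove it; I would cite \citet{nagurney2012projected}, or the Brezis theory of maximal monotone operators.

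\emph{Uniqueness and continuity.} Take two solutions $\vx,\vy$ with initial conditions $\vx_0,\vy_0$. Monotonicity of $N_X$ gives
\[
\tfrac12\tfrac{d}{dt}\|\vx-\vy\|^2\le\langle \vQ(\vx)-\vQ(\vy),\vx-\vy\rangle\le L\|\vx-\vy\|^2,
\]
so by Gr\"onwall
\[
\|\vx(t)-\vy(t)\|\le e^{Lt}\|\vx_0-\vy_0\|.
\]
Setting $\vx_0=\vy_0$ gives uniqueness, so $\vG(\vx_0,t)$ is well defined for all $t\ge0$; existence holds on every $[0,T]$, so the flow is global. For joint continuity, note that the bound $\|\dot\vx\|\le\|\vQ(\vx)\|$ (projection onto a cone is nonexpansive) and the linear growth bound make $t\mapsto\vG(\vx_0,t)$ locally Lipschitz, uniformly for $\vx_0$ in bounded sets. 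Combining this with the Gr\"onwall estimate,
\[
\|\vG(\vx_0,t)-\vG(\vy_0,s)\|\le e^{Lt}\|\vx_0-\vy_0\|+\|\vG(\vy_0,t)-\vG(\vy_0,s)\|,
\]
which tends to $0$ as $(\vy_0,s)\to(\vx_0,t)$. This establishes continuity of $\vG$ in $(\vx,t)$.
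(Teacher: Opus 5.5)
Your proposal is correct and follows essentially the paper's route: the paper invokes Theorem~2.5 of \citet{nagurney2012projected} to get a unique global solution with continuous dependence on initial data, uniformly on compact time intervals, and combines this with continuity of each trajectory in time to get joint continuity of $\vG$, which is exactly your final triangle-inequality step. Your monotonicity/Gr\"onwall estimate $\|\vG(\vx_0,t)-\vG(\vy_0,t)\|\le e^{Lt}\|\vx_0-\vy_0\|$ and the projected-Euler existence sketch only make explicit what the paper takes from that citation, and the identification step you flag as the main obstacle is in fact elementary: $\dot\vx\in\TC_X(\vx)\cap(-\TC_X(\vx))$ together with $\vQ(\vx)-\dot\vx\in N_X(\vx)$ forces $\langle\vQ(\vx)-\dot\vx,\dot\vx\rangle=0$, which is the Moreau characterization of $\dot\vx=\QTC(\vx)$.
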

\begin{proof}
Global Lipschitzness implies the one-sided Lipschitz and linear-growth
conditions required by Theorem~2.5 of \citet{nagurney2012projected} on a closed
convex set. Their projected system uses the opposite sign convention, so after
replacing their vector field by ours the theorem applies to
$\dot \vx=\Pi_{\TC_X(\vx)}\vQ(\vx)$. It gives a unique global solution and
continuous dependence on initial conditions, uniformly on compact time
intervals. Together with continuity of each trajectory in time, this yields
continuity of the flow map $(\vx,t)\mapsto \vG(\vx,t)$. In the notation of
Assumption~\ref{ass:well-behaved-proj}, the global case corresponds to
$\tau(\vx)=\infty$ for every $\vx\in X$.
\end{proof}

We now pass from global Lipschitzness to local Lipschitzness by localization.
The argument has five steps. First, Lemma~\ref{lemma:global-lip-flow} covers the
case in which $\vQ$ is globally Lipschitz. Second, for each bounded region
$K_R$, Lemma~\ref{lemma:localized-global-systems} constructs a globally
Lipschitz auxiliary dynamics $\widehat{\vQ}_{K_R}$ that agrees with $\vQ$ on
$K_R$; this auxiliary dynamics has a global flow
$\widehat{\vG}_{K_R}$. Third, Lemma~\ref{lemma:localized-consistency} shows that
these localized flows are compatible before exit, so the construction does not
depend on arbitrary choices of extensions. Fourth,
Lemma~\ref{lemma:local-flow-construction} patches the localized flows into a
single local solution map on all of $X$. Fifth,
Lemma~\ref{lemma:maximal-local-flow} shows that the patched solution map is the
unique maximal local flow of the original projected dynamics.

\begin{lemma}[Lipschitz extensions on bounded regions]\label{lemma:localized-global-systems}
Suppose $\vQ$ is locally Lipschitz. For each $R>0$, there exists a globally
Lipschitz map $\widehat{\vQ}_{K_R}:X\to\R^d$ such that
$\widehat{\vQ}_{K_R}=\vQ$ on $K_R:=X\cap\overline B_R$($B_R$ is the open ball of radius $R$ centered at the origin). The projected dynamics
generated by $\widehat{\vQ}_{K_R}$ have a unique global flow
$\widehat{\vG}_{K_R}:X\times[0,\infty)\to X$ that is continuous in $(\vx,t)$.
\end{lemma}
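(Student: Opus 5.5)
The plan is to build $\widehat{\vQ}_{K_R}$ by a Lipschitz extension argument and then apply Lemma~\ref{lemma:global-lip-flow} to it.

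First, $K_R=X\cap\overline B_R$ is compact. It is closed, being the intersection of two closed sets, and it is bounded. Since $\vQ$ is locally Lipschitz on $X$, I would show it is Lipschitz on $K_R$ with some constant $L_R$. Each point of $K_R$ has a relatively open neighborhood in $X$ on which $\vQ$ is Lipschitz. By compactness, finitely many such neighborhoods cover $K_R$; take a Lebesgue number $r>0$ for this cover. For $\vx,\vy\in K_R$ there are two cases.
\begin{itemize}
\item If $\|\vx-\vy\|<r$, both points lie in a common neighborhood, so the local constant applies.
\item If $\|\vx-\vy\|\ge r$, then $\|\vQ(\vx)-\vQ(\vy)\|\le 2M/r\cdot\|\vx-\vy\|$, where $M=\max_{K_R}\|\vQ\|<\infty$ by continuity and compactness.
\end{itemize}
Taking the maximum of the local constants and $2M/r$ gives $L_R$.

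Second, I would extend $\vQ|_{K_R}$ to all of $X$. There are two candidate constructions, and I would use the retraction.
\begin{itemize}
\item \emph{Retraction (preferred).} Since $K_R$ is closed, convex and nonempty (assuming $X\cap\overline B_R\neq\emptyset$, true for $R$ large; for small $R$ with empty intersection take $\widehat{\vQ}_{K_R}\equiv 0$), the metric projection $\Pi_{K_R}:\R^d\to K_R$ is well defined and $1$-Lipschitz. Set $\widehat{\vQ}_{K_R}:=\vQ\circ\Pi_{K_R}$ restricted to $X$. It equals $\vQ$ on $K_R$, is $L_R$-Lipschitz, and is continuous.
\item \emph{McShane / Kirszbraun.} Componentwise McShane extension also works, but only with constant $\sqrt d\,L_R$, so the retraction is cleaner.
\end{itemize}

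Third, since $\widehat{\vQ}_{K_R}$ is globally Lipschitz on the closed convex set $X$, Lemma~\ref{lemma:global-lip-flow} applies directly. It yields a unique global flow $\widehat{\vG}_{K_R}:X\times[0,\infty)\to X$ that is jointly continuous in $(\vx,t)$.

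The only step needing care is the first, passing from local to uniform Lipschitzness on $K_R$. Local Lipschitzness is relative to $X$, so the neighborhoods must be taken relatively open in $X$. The Lebesgue-number argument above handles this. Everything else is direct.
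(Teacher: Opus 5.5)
Your proof is correct. It differs from the paper's in the extension step. The paper extends $\vQ|_{K_R}$ by applying McShane's theorem coordinate by coordinate and then invokes Lemma~\ref{lemma:global-lip-flow}, exactly as in your third step. You instead use the retraction $\vQ\circ\Pi_{K_R}$.

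\begin{itemize}
\item \emph{Your retraction.} It needs $K_R=X\cap\overline B_R$ to be closed, convex and nonempty; convexity comes from Assumption~\ref{ass:convex-closed}. In return it keeps the Lipschitz constant at $L_R$ and gives a bounded extension.
\item \emph{The paper's McShane route.} It works for an arbitrary subset with no convexity, at the cost of a $\sqrt d$ factor in the componentwise constant.
\end{itemize}

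Neither difference matters for the flow theorem, which only needs global Lipschitzness.

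Your Lebesgue-number argument also makes rigorous a step the paper compresses. The paper says a finite subcover of local Lipschitz neighborhoods ``gives a single Lipschitz constant,'' which skips the case where $\vx,\vy\in K_R$ lie in no common neighborhood. Your split into $\|\vx-\vy\|<r$ and $\|\vx-\vy\|\ge r$, with the bound $2M/r$, supplies exactly that case.

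Your treatment of an empty $K_R$ is harmless but unnecessary: every later use takes $R>\|\vx\|$, so $\vx\in K_R$.
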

\begin{proof}
Local Lipschitzness is understood relative to $X$; by compactness, a finite
subcover of local Lipschitz neighborhoods gives a single Lipschitz constant for
$\vQ|_{K_R}$. Applying the McShane extension theorem coordinate by coordinate
gives a globally Lipschitz map
\[
\widehat{\vQ}_{K_R}:X\to\R^d
\]
possibly with a larger Lipschitz constant, such that
$\widehat{\vQ}_{K_R}=\vQ$ on $K_R$. The conclusion follows from
Lemma~\ref{lemma:global-lip-flow} applied to $\widehat{\vQ}_{K_R}$.
\end{proof}

For the rest of this subsection, fix the maps $\widehat{\vQ}_{K_R}$ and flows
$\widehat{\vG}_{K_R}$ from Lemma~\ref{lemma:localized-global-systems}. Define
the exit time
\[
\sigma_{K_R}(\vx):=\inf\{t\ge 0:\|\widehat{\vG}_{K_R}(\vx,t)\|\ge R\},
\]
with the convention $\inf\varnothing=\infty$.
Thus $\sigma_{K_R}(\vx)$ is the first time the global flow generated by
$\widehat{\vQ}_{K_R}$ leaves the ball $B_R$ used to define the
localization. Before that time, the localized vector field agrees with the
original vector field.

The localized extensions are chosen separately, so their global flows need not
agree after they leave the region where they both coincide with $\vQ$. The next
lemma records the compatibility needed for patching: before a smaller
localization exits, its flow is exactly the restriction of any larger
localization.

\begin{lemma}[Agreement before exit]\label{lemma:localized-consistency}
Fix $R>0$, $\vx\in X$, and $T<\sigma_{K_R}(\vx)$. Then
$\widehat{\vG}_{K_R}(\vx,\cdot)$ solves the original projected dynamics on
$[0,T]$.

Consequently, for every $S>R$,
\[
\widehat{\vG}_{K_R}(\vx,t)=\widehat{\vG}_{K_S}(\vx,t)
\qquad \text{for all }t\in[0,T].
\]
\end{lemma}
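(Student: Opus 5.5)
The plan is to show that, before the exit time, the localized trajectory never leaves $K_R$. On $K_R$ the localized field and the original field coincide, so the first claim follows. The second claim then comes from uniqueness of the larger localization's flow.

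\textbf{Step 1: the trajectory stays in $K_R$.} Fix $T<\sigma_{K_R}(\vx)$. By definition of the exit time as an infimum, $\|\widehat{\vG}_{K_R}(\vx,t)\|<R$ for every $t\in[0,T]$ (and indeed for every $t<\sigma_{K_R}(\vx)$). The flow takes values in $X$, so $\widehat{\vG}_{K_R}(\vx,t)\in X\cap B_R\subseteq K_R$ on $[0,T]$.

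\textbf{Step 2: the projected fields agree along the trajectory.} Since $\widehat{\vQ}_{K_R}=\vQ$ on $K_R$, for each such $t$ we have
\[
\Pi_{\TC_X(\vy)}\widehat{\vQ}_{K_R}(\vy)=\Pi_{\TC_X(\vy)}\vQ(\vy)=\QTC(\vy),\qquad \vy=\widehat{\vG}_{K_R}(\vx,t).
\]
The point to notice is that the projection depends only on the value of the field at the point $\vy$ and on the cone $\TC_X(\vy)$. It does not involve any neighbourhood of $\vy$, so pointwise equality of the two fields on $K_R$ is enough, including at boundary points of $X$ inside $K_R$.

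\textbf{Step 3: the first claim.} Lemma~\ref{lemma:global-lip-flow} says $\widehat{\vG}_{K_R}(\vx,\cdot)$ is absolutely continuous, starts at $\vx$, and satisfies $\frac{d}{dt}\widehat{\vG}_{K_R}(\vx,t)=\Pi_{\TC_X}\widehat{\vQ}_{K_R}$ evaluated along the trajectory, for a.e.\ $t$. By Step 2 the right-hand side equals $\QTC(\widehat{\vG}_{K_R}(\vx,t))$ a.e.\ on $[0,T]$. Hence $\widehat{\vG}_{K_R}(\vx,\cdot)$ solves the original projected dynamics on $[0,T]$.

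\textbf{Step 4: the consequence for $S>R$.} We have $K_R\subseteq K_S$, so $\widehat{\vQ}_{K_S}=\vQ=\widehat{\vQ}_{K_R}$ on $K_R$. Applying Steps 1 and 2 with $\widehat{\vQ}_{K_S}$ in place of $\vQ$ shows that $\widehat{\vG}_{K_R}(\vx,\cdot)$ restricted to $[0,T]$ is also a solution of the projected system generated by the globally Lipschitz field $\widehat{\vQ}_{K_S}$. That system has the unique global solution $\widehat{\vG}_{K_S}(\vx,\cdot)$ from Lemma~\ref{lemma:localized-global-systems}.

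\textbf{Main obstacle.} The only delicate point is that the uniqueness in Lemma~\ref{lemma:global-lip-flow} is stated for global solutions on $[0,\infty)$, whereas Step 4 needs uniqueness on the finite interval $[0,T]$. I would handle this with the standard monotonicity argument for projected systems. For two solutions $\vx(\cdot),\vy(\cdot)$ of the same projected system, the normal-cone form of the dynamics gives
\[
\langle \dot\vx-\widehat{\vQ}_{K_S}(\vx),\,\vy-\vx\rangle\ge 0 \quad\text{and}\quad \langle \dot\vy-\widehat{\vQ}_{K_S}(\vy),\,\vx-\vy\rangle\ge 0 .
\]
Adding these yields
\[
\tfrac{d}{dt}\|\vx-\vy\|^2\le 2L\|\vx-\vy\|^2,
\]
where $L$ is the Lipschitz constant of $\widehat{\vQ}_{K_S}$. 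Since the two solutions start at the same point, Gronwall's inequality forces them to coincide on $[0,T]$. This gives $\widehat{\vG}_{K_R}(\vx,t)=\widehat{\vG}_{K_S}(\vx,t)$ for all $t\in[0,T]$.
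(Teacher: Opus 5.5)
Your proof is correct and follows the paper's argument. Before the exit time the trajectory stays in $K_R$, so the localized and original fields, and hence their tangent-cone projections, agree along it. For $S>R$ the same path then solves the $\widehat{\vQ}_{K_S}$-system on $[0,T]$, and uniqueness gives agreement with $\widehat{\vG}_{K_S}(\vx,\cdot)$. Your normal-cone/Gronwall argument just makes explicit the uniqueness on the finite interval $[0,T]$, which the paper takes for granted when it invokes uniqueness of $\widehat{\vG}_{K_S}(\vx,\cdot)$.
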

\begin{proof}
Since $T<\sigma_{K_R}(\vx)$,
$\widehat{\vG}_{K_R}(\vx,t)\in K_R$ for every $t\in[0,T]$. Write
$\vy(t):=\widehat{\vG}_{K_R}(\vx,t)$. Hence
\[
\widehat{\vQ}_{K_R}(\vy(t))=\vQ(\vy(t))
\qquad \text{for all }t\in[0,T].
\]
Therefore $\widehat{\vG}_{K_R}(\vx,\cdot)$ solves the original projected
dynamics on $[0,T]$.

Now let $S>R$. Since $K_R\subseteq K_S$, the same trajectory remains in the
region $K_R$, where both localized vector fields agree with $\vQ$. Therefore
along $\vy(\cdot)$, both
$\widehat{\vQ}_{K_R}(\vy(t))=\vQ(\vy(t))$ and
$\widehat{\vQ}_{K_S}(\vy(t))=\vQ(\vy(t))$ hold for all $t\in[0,T]$.
Thus the path $\vy(\cdot)$ also solves the globally Lipschitz projected
dynamics generated by $\widehat{\vQ}_{K_S}$ on $[0,T]$, with initial condition
$\vy(0)=\vx$. The flow
$\widehat{\vG}_{K_S}(\vx,\cdot)$ is the unique solution of that same
$\widehat{\vQ}_{K_S}$-system from the same initial condition. Hence
\[
\vy(t)=\widehat{\vG}_{K_S}(\vx,t)
\qquad \text{for all }t\in[0,T].
\]
Since $\vy(t)=\widehat{\vG}_{K_R}(\vx,t)$, the desired agreement follows.
\end{proof}

\begin{lemma}[Constructing the maximal local flow]\label{lemma:local-flow-construction}
Define
\[
\tau(\vx):=\sup_{R>\|\vx\|}\sigma_{K_R}(\vx).
\]
If $t<\tau(\vx)$, then some $R>\|\vx\|$ satisfies
$t<\sigma_{K_R}(\vx)$; otherwise all exit times in the supremum would be at
most $t$, contradicting $t<\tau(\vx)$.
For $0\le t<\tau(\vx)$, choose $R>\|\vx\|$ such that
$t<\sigma_{K_R}(\vx)$ and set
\[
\vG(\vx,t):=\widehat{\vG}_{K_R}(\vx,t).
\]
Then $\tau(\vx)>0$, $\vG$ is well defined on
$D:=\{(\vx,t):0\le t<\tau(\vx)\}$, and $\vG(\vx,\cdot)$ solves the original
projected dynamics on every compact subinterval of $[0,\tau(\vx))$.
\end{lemma}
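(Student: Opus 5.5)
The plan is to verify the three claims in order: $\tau(\vx)>0$, that $\vG$ is well defined on $D$ (independent of the choice of $R$), and that $\vG(\vx,\cdot)$ solves the original projected dynamics on compact subintervals of $[0,\tau(\vx))$. The existence of an admissible $R$ for each $t<\tau(\vx)$ is already noted in the statement, so the content lies in these three points.

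\emph{Positivity of $\tau$.} Fix $\vx\in X$ and any $R>\|\vx\|$. The trajectory $t\mapsto\widehat{\vG}_{K_R}(\vx,t)$ is continuous by Lemma~\ref{lemma:localized-global-systems}, and at $t=0$ it equals $\vx$ with $\|\vx\|<R$. Hence there is $t_0>0$ such that $\|\widehat{\vG}_{K_R}(\vx,t)\|<R$ on $[0,t_0]$. This gives $\sigma_{K_R}(\vx)\ge t_0>0$, and therefore $\tau(\vx)\ge\sigma_{K_R}(\vx)>0$.

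\emph{Well-definedness.} Suppose $R,S>\|\vx\|$ both satisfy $t<\sigma_{K_R}(\vx)$ and $t<\sigma_{K_S}(\vx)$, and without loss of generality take $R\le S$. If $R=S$ there is nothing to prove. If $R<S$, apply Lemma~\ref{lemma:localized-consistency} with $T=t<\sigma_{K_R}(\vx)$. It gives $\widehat{\vG}_{K_R}(\vx,s)=\widehat{\vG}_{K_S}(\vx,s)$ for all $s\in[0,t]$, and in particular at $s=t$. So the value $\vG(\vx,t)$ does not depend on the choice of $R$.

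\emph{Solution property.} Take a compact interval $[0,T]\subset[0,\tau(\vx))$. The point is to pick a single $R$ that works for all $s\in[0,T]$ at once. Since $T<\tau(\vx)$, the argument quoted in the statement yields $R>\|\vx\|$ with $T<\sigma_{K_R}(\vx)$. Then $s<\sigma_{K_R}(\vx)$ for every $s\le T$, so by well-definedness $\vG(\vx,s)=\widehat{\vG}_{K_R}(\vx,s)$ on all of $[0,T]$. Lemma~\ref{lemma:localized-consistency} states that $\widehat{\vG}_{K_R}(\vx,\cdot)$ solves the original projected dynamics on $[0,T]$. That is, the trajectory is absolutely continuous, starts at $\vx$, and satisfies $\dot\vy=\QTC(\vy)$ almost everywhere, because $\widehat{\vQ}_{K_R}=\vQ$ along the path. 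The same therefore holds for $\vG(\vx,\cdot)$.

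The only delicate step is this uniform choice of $R$ over a whole compact interval. Patching different $R$'s pointwise would not directly give absolute continuity, and the fixed-$R$ choice avoids that issue. Everything else is a direct application of the previous two lemmas together with continuity of the global flows.
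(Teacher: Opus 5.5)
Your proposal is correct and follows the paper's proof essentially step for step. Positivity comes from continuity of $\widehat{\vG}_{K_R}(\vx,\cdot)$ at $t=0$, independence of the choice of $R$ comes from Lemma~\ref{lemma:localized-consistency}, and the solution property follows from fixing a single $R$ with $T<\sigma_{K_R}(\vx)$ for the whole interval $[0,T]$. Your explicit case split $R<S$ and your point about choosing one $R$ for the whole interval spell out what the paper's shorter proof leaves implicit.
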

\begin{proof}
The lifetime is strictly positive: if $R>\|\vx\|$, continuity of
$t\mapsto \widehat{\vG}_{K_R}(\vx,t)$ at $t=0$ gives
$\sigma_{K_R}(\vx)>0$.

Lemma~\ref{lemma:localized-consistency} shows that the value of
$\vG(\vx,t)$ is independent of which such localization is chosen. To see that
$\vG(\vx,\cdot)$ solves the original dynamics on compact subintervals, fix
$T<\tau(\vx)$. By the same supremum argument, choose $R>\|\vx\|$ with
$T<\sigma_{K_R}(\vx)$. Then
$\vG(\vx,t)=\widehat{\vG}_{K_R}(\vx,t)$ for all $t\in[0,T]$ by
Lemma~\ref{lemma:localized-consistency}, and
$\widehat{\vG}_{K_R}(\vx,\cdot)$ solves the original projected dynamics on
$[0,T]$.
\end{proof}

\begin{lemma}[Maximality and uniqueness]\label{lemma:maximal-local-flow}
The lifetime $\tau$ from Lemma~\ref{lemma:local-flow-construction} is the
maximal existence time. Moreover, any local solution of the original projected
dynamics agrees with $\vG$ on its interval of existence; hence $\vG$ is the
unique maximal local solution map.
\end{lemma}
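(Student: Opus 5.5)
The proof has three parts. First I show that $\tau$ is maximal, meaning no solution from $\vx$ extends beyond $\tau(\vx)$. Then I prove uniqueness, namely that any local solution agrees with $\vG$. Maximality in the sense of Assumption~\ref{ass:well-behaved-proj} follows once both are in hand. Finally I check the regularity claims, continuity of $\vG$ on $D$ and lower semicontinuity of $\tau$.

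\textbf{Uniqueness.} Let $\vy:[0,T)\to X$ be any absolutely continuous solution of the original projected dynamics with $\vy(0)=\vx$. Fix $t_0<T$. The set $\vy([0,t_0])$ is compact, so it lies in $K_R$ for some $R>\|\vx\|$. Since $\widehat{\vQ}_{K_R}=\vQ$ on $K_R$, the path $\vy$ also solves the $\widehat{\vQ}_{K_R}$-system on $[0,t_0]$. By the uniqueness part of Lemma~\ref{lemma:global-lip-flow}, $\vy=\widehat{\vG}_{K_R}(\vx,\cdot)$ on $[0,t_0]$.

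We then need $t_0<\sigma_{K_{R'}}(\vx)$ for some slightly larger $R'>R$. To get it, choose $R'>\max_{[0,t_0]}\|\vy\|$ strictly. Uniqueness for the $\widehat{\vQ}_{K_{R'}}$ system gives $\widehat{\vG}_{K_{R'}}(\vx,t)=\vy(t)$ on $[0,t_0]$, and this trajectory stays strictly inside $B_{R'}$, so $\sigma_{K_{R'}}(\vx)>t_0$. Hence $t_0<\tau(\vx)$ and $\vy(t_0)=\vG(\vx,t_0)$. Since $t_0<T$ was arbitrary, $T\le\tau(\vx)$ and $\vy$ agrees with $\vG$.

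\textbf{Maximality.} The uniqueness argument already shows that every solution's domain lies within $[0,\tau(\vx)]$. One more point remains when $\tau(\vx)<\infty$: a solution defined on a closed interval $[0,\tau(\vx)]$ would have bounded image, and the same argument would then give $\sigma_{K_{R'}}(\vx)>\tau(\vx)$, contradicting the definition of $\tau$ as a supremum. A useful byproduct is that $\tau(\vx)<\infty$ forces $\|\vG(\vx,t)\|\to\infty$ as $t\uparrow\tau(\vx)$.

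\textbf{Lower semicontinuity of $\tau$ and continuity of $\vG$.} This is the step I expect to be the main obstacle. Fix $(\vx,t)$ with $t<\tau(\vx)$, and pick $R$ with $\max_{[0,t]}\|\vG(\vx,\cdot)\|<R$ strictly, using compactness. By Lemma~\ref{lemma:global-lip-flow}, $\widehat{\vG}_{K_R}$ depends continuously on initial conditions uniformly on $[0,t]$. So for $\vx'$ near $\vx$, the trajectory $\widehat{\vG}_{K_R}(\vx',\cdot)$ stays within distance $\epsilon$ of $\vG(\vx,\cdot)$ on $[0,t]$, hence inside $B_{R}$ once $\epsilon$ is small.

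Two consequences follow. First, $\sigma_{K_R}(\vx')>t$, so $\tau(\vx')>t$, which proves lower semicontinuity. Second, $\vG(\vx',s)=\widehat{\vG}_{K_R}(\vx',s)$ for $s\le t$, so continuity of $\vG$ near $(\vx,t)$ is inherited from the joint continuity of $\widehat{\vG}_{K_R}$. The delicate point is the strict margin in the choice of $R$; the exit-time definition uses $\ge R$, and the margin is exactly what absorbs the uniform perturbation.

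With uniqueness, maximality, and these regularity properties established, $\vG$ and $\tau$ satisfy Assumption~\ref{ass:well-behaved-proj}, which completes Lemma~\ref{lemma:Lipschitz}.
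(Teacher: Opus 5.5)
Your proof is correct and follows the paper's argument. Compactness of the trajectory's image gives a radius with a strict margin, and uniqueness for the globally Lipschitz localized system identifies the solution with $\widehat{\vG}_{K_R}(\vx,\cdot)$; the margin then yields $T<\sigma_{K_R}(\vx)\le\tau(\vx)$, which gives both agreement with $\vG$ and maximality.

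Your third part reproduces the paper's separate Lemma~\ref{lemma:lsc-flow-continuity}; for joint continuity at $(\vx,t)$ you should work on $[0,T]$ with $T\in(t,\tau(\vx))$ so that nearby times $t'>t$ are covered. The blow-up ``byproduct'' is not needed here, and justifying it would require an additional speed-bound argument.
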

\begin{proof}
Any other local solution stays in some bounded region on each compact time
interval. On that region it is also a solution of one globally Lipschitz
localized system, where uniqueness is already known.
Let $\vy:[0,T]\to X$ be any solution of the original
projected dynamics with $\vy(0)=\vx$. Since $\vy([0,T])$ is compact, choose
$R$ such that
\[
\sup_{0\le t\le T}\|\vy(t)\|<R.
\]
Then $\widehat{\vQ}_{K_R}=\vQ$ along $\vy([0,T])$, so $\vy$ also solves the
globally Lipschitz projected dynamics generated by $\widehat{\vQ}_{K_R}$. By
uniqueness for the $\widehat{\vQ}_{K_R}$-system,
\[
\vy(t)=\widehat{\vG}_{K_R}(\vx,t)\qquad\text{for all }t\in[0,T].
\]
Because this path agrees with $\vy$ and stays strictly inside $B_R$ on $[0,T]$,
we have $T<\sigma_{K_R}(\vx)$, and therefore $T<\tau(\vx)$. This proves
uniqueness on each compact interval on which a solution exists. Applying the
same argument on each compact interval $[0,T]\subset[0,S)$ shows that every
solution on $[0,S)$ satisfies $S\le \tau(\vx)$. Since
Lemma~\ref{lemma:local-flow-construction} already constructs a solution on
$[0,\tau(\vx))$, $\tau(\vx)$ is the maximal existence time.
\end{proof}

\begin{lemma}[Regularity of $\tau$ and $\vG$]\label{lemma:lsc-flow-continuity}
The maximal existence time $\tau$ from Lemma~\ref{lemma:local-flow-construction} is
lower semicontinuous, and the local flow $\vG$ is continuous on
$D:=\{(\vx,t):0\le t<\tau(\vx)\}$.
\end{lemma}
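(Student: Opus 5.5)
The plan is to derive both properties from the continuity of the globally Lipschitz localized flows $\widehat{\vG}_{K_R}$ (Lemma~\ref{lemma:localized-global-systems}), the agreement-before-exit property (Lemma~\ref{lemma:localized-consistency}), and the construction $\tau(\vx)=\sup_{R>\|\vx\|}\sigma_{K_R}(\vx)$. The key intermediate claim concerns strict containment in a ball. Suppose $(\vx_0,T)$ satisfies $T<\tau(\vx_0)$, and choose $R>\|\vx_0\|$ with $T<\sigma_{K_R}(\vx_0)$. Then
\[
M:=\max_{0\le t\le T}\|\widehat{\vG}_{K_R}(\vx_0,t)\|<R.
\]
The maximum is attained by continuity on the compact interval $[0,T]$. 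It is strictly below $R$ because the definition of $\sigma_{K_R}$ gives $\|\widehat{\vG}_{K_R}(\vx_0,t)\|<R$ for every $t<\sigma_{K_R}(\vx_0)$, and $[0,T]$ lies in that range.

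\emph{Lower semicontinuity.} Fix $\vx_0$ and any $T<\tau(\vx_0)$, and take $R$ and $M$ as above. The map $\widehat{\vG}_{K_R}$ is jointly continuous on $X\times[0,\infty)$, so it is uniformly continuous on the compact set $(X\cap\overline B_1(\vx_0))\times[0,T']$ for a slightly larger $T'\in(T,\sigma_{K_R}(\vx_0))$. Uniform continuity yields $\rho>0$ such that
\[
\|\vx-\vx_0\|<\rho \;\Longrightarrow\; \sup_{t\le T'}\|\widehat{\vG}_{K_R}(\vx,t)-\widehat{\vG}_{K_R}(\vx_0,t)\|<R-M',
\]
where $M'<R$ is the maximum of $\|\widehat{\vG}_{K_R}(\vx_0,t)\|$ over $[0,T']$. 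It follows that $\sigma_{K_R}(\vx)>T'>T$. We may also shrink $\rho$ so that $\|\vx\|<R$, which makes $R$ admissible in the supremum defining $\tau(\vx)$. Therefore $\tau(\vx)>T$ on a neighborhood of $\vx_0$. Since $T<\tau(\vx_0)$ was arbitrary, we get $\liminf_{\vx\to\vx_0}\tau(\vx)\ge\tau(\vx_0)$. The same argument handles the case $\tau(\vx_0)=\infty$ by letting $T$ be arbitrarily large.

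\emph{Continuity of $\vG$ on $D$.} Fix $(\vx_0,t_0)\in D$ and pick $T$ with $t_0<T<\tau(\vx_0)$. The previous step supplies $R$ and a neighborhood $U$ of $\vx_0$ such that $\sigma_{K_R}(\vx)>T$ and $\|\vx\|<R$ for all $\vx\in U$. By Lemma~\ref{lemma:local-flow-construction} together with Lemma~\ref{lemma:localized-consistency}, which makes $\vG$ independent of the chosen localization, we have $\vG=\widehat{\vG}_{K_R}$ on $U\times[0,T)$. This set is a relatively open neighborhood of $(\vx_0,t_0)$ in $D$, and on it $\vG$ coincides with a continuous map. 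Hence $\vG$ is continuous at $(\vx_0,t_0)$.

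The main obstacle is the exit-time step: $\sigma_{K_R}$ is only a first hitting time of the closed set $\{\|\cdot\|\ge R\}$ and need not be lower semicontinuous in general. The argument avoids this difficulty by working on a compact time window strictly before exit, where the trajectory has a definite margin $R-M'$ from the sphere, and then using uniform continuity of the flow on that window.
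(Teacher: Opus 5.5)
Your proof is correct and follows essentially the same route as the paper. Both arguments show that the survival sets $\{\tau>T\}$ are open by keeping a margin below $R$ on a compact time window and using continuity of $\widehat{\vG}_{K_R}$. Both then get continuity of $\vG$ because it coincides with $\widehat{\vG}_{K_R}$ on a relatively open set $U\times[0,T)\subseteq D$; your $T'$ and the explicit check $\|\vx\|<R$ simply make the paper's terse steps precise.

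Your closing remark needs a small correction. A first hitting time of a closed set under a jointly continuous flow is always lower semicontinuous, and your own argument proves exactly this for $\sigma_{K_R}$. What can fail is upper semicontinuity, which is never needed here.
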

\begin{proof}
We first prove lower semicontinuity of $\tau$. It suffices to show that, for
each $T\ge 0$, the survival set
\[
A_T:=\{\vx\in X:\tau(\vx)>T\}
\]
is open in $X$. Fix $T\ge 0$ and $\vx_0\in A_T$. Then
$T<\tau(\vx_0)$, so by the definition of $\tau$ there exists
$R>\|\vx_0\|$ such that $T<\sigma_{K_R}(\vx_0)$. Then
\[
\sup_{0\le t\le T}\|\widehat{\vG}_{K_R}(\vx_0,t)\|<R.
\]
By continuity of the global flow $\widehat{\vG}_{K_R}$, uniformly over the
compact interval $[0,T]$, the same strict inequality holds for all initial
conditions $\vx$ in a neighborhood of $\vx_0$. Hence
$T<\sigma_{K_R}(\vx)\le \tau(\vx)$ for all such $\vx$, so this neighborhood is
contained in $A_T$. Therefore $A_T$ is open, proving that $\tau$ is lower
semicontinuous.

For continuity of $\vG$, fix $(\vx_0,t_0)\in D$. Choose
$T\in(t_0,\tau(\vx_0))$ and then choose $R>\|\vx_0\|$ with
$T<\sigma_{K_R}(\vx_0)$. By the preceding argument, for all $\vx$ near $\vx_0$
and all $t$ near $t_0$, the local flow agrees with the continuous global flow
$\widehat{\vG}_{K_R}$. Therefore $\vG$ is continuous on $D$.
\end{proof}

\lemLipschitz*
\begin{proof}[Proof of Lemma~\ref{lemma:Lipschitz}]
The global Lipschitz statement is Lemma~\ref{lemma:global-lip-flow}. For locally
Lipschitz $\vQ$, Lemmas~\ref{lemma:localized-global-systems}
and~\ref{lemma:localized-consistency} construct compatible global
localizations. Lemma~\ref{lemma:local-flow-construction} patches them into a
local flow. Lemma~\ref{lemma:maximal-local-flow} gives maximality and
uniqueness, and Lemma~\ref{lemma:lsc-flow-continuity} gives the lower
semicontinuous lifetime and continuity on the local-flow domain. These are
exactly the requirements in Assumption~\ref{ass:well-behaved-proj}.
\end{proof}

\subsection{Measurability of Dynamic Solvers}

\propMeasurable*
\begin{proof}\label{proof:measurable}
We prove measurability by decomposing $X$ according to the three possible
terminal behaviors of the local flow. Let $C_1$ be the set of initial
conditions whose trajectories have finite lifetime, let $C_2$ be the set of
initial conditions whose trajectories survive forever but do not converge, and
let $C_3$ be the set of initial conditions whose trajectories survive forever
and converge. On $C_1\cup C_2$ the dynamic solver returns $\dagger$; on $C_3$
it returns the limit of the trajectory. Thus, once we show that
$C_1,C_2,C_3$ are Borel and that the limit map on $C_3$ is Borel,
measurability follows from the representation
$g_{\vQ}(\vx)=\dagger$ on $C_1\cup C_2$ and
$g_{\vQ}(\vx)=L(\vx):=\lim_{t\to\infty}\vG(\vx,t)$ on $C_3$.

We now define these sets formally. The infinite-survival set is
\[
A_\infty:=\{\vx:\tau(\vx)=\infty\}=\bigcap_{n=1}^\infty A_n
\]
where $A_n:=\{\vx\in X:\tau(\vx)>n\}$. By lower semicontinuity of $\tau$, each
$A_n$ is open in $X$, so $A_\infty$ is Borel.

On $A_\infty$, the flow $\vG(\vx,t)$ is defined for every $t\ge 0$. This
restriction is important: outside $A_\infty$, the values $\vG(\vx,q)$ are not
defined for all large rational $q$, and replacing missing values by $\dagger$
would not give coordinate functions in $\R^d$. For $i\in\{1,\dots,d\}$, write
$G_i(\vx,t)$ for the $i$-th coordinate of $\vG(\vx,t)$ and define
extended-real maps on $A_\infty$ by
\begin{align*}
\ell_i^+(\vx)
&:= \inf_{T\in\N}\ \sup_{\substack{q\in\Q\\ q\ge T}} G_i(\vx,q),\\
\ell_i^-(\vx)
&:= \sup_{T\in\N}\ \inf_{\substack{q\in\Q\\ q\ge T}} G_i(\vx,q).
\end{align*}
For each rational $q\ge 0$, the map $\vx\mapsto G_i(\vx,q)$ is continuous on
$A_\infty$ because $A_\infty\subseteq A_q$ and $\vG(\cdot,q)$ is continuous on
$A_q$. Hence countable suprema and infima make $\ell_i^+$ and $\ell_i^-$
measurable on $A_\infty$. These maps are exactly
$\limsup_{t\to\infty}G_i(\vx,t)$ and
$\liminf_{t\to\infty}G_i(\vx,t)$, because
$t\mapsto G_i(\vx,t)$ is continuous on $[0,\infty)$ and suprema and infima over
real times can be taken over rational times.

The convergence set is
\[
\Xi:=A_\infty\cap
\bigcap_{i=1}^d
\{\vx:\ell_i^+(\vx)=\ell_i^-(\vx)\in\R\}.
\]
It is Borel in $A_\infty$, and therefore in $X$, because the defining
coordinate conditions are Borel. On $\Xi$, the coordinate limits exist. Define
their $i$-th coordinate by
$L_i(\vx):=\lim_{t\to\infty}G_i(\vx,t)=\ell_i^+(\vx)=\ell_i^-(\vx)$, and set
$L(\vx):=(L_1(\vx),\dots,L_d(\vx))$.
Each $L_i$ is measurable on $\Xi$, so $L:\Xi\to\R^d$ is Borel. Since $X$ is
closed and $\vG(\vx,t)\in X$, we have $L(\vx)\in X$ for every $\vx\in\Xi$.

Define the condition sets
\[
C_1:=X\setminus A_\infty,\qquad
C_2:=A_\infty\setminus\Xi,\qquad
C_3:=\Xi.
\]
By the definition of the dynamic solver, $g_{\vQ}|_{C_3}=L$.

For every Borel $B\subseteq X$, $g_{\vQ}^{-1}(B)=C_3\cap L^{-1}(B)$ is Borel
in $X$, while $g_{\vQ}^{-1}(\{\dagger\})=C_1\cup C_2=X\setminus C_3$ is Borel.
Hence
$g_{\vQ}:X\to X\sqcup\{\dagger\}$ is measurable.
\end{proof}

\subsection{Terminal Outcomes and Roots}

We now prove the containment underlying Proposition~\ref{prop:finite-rto-roots}.
Throughout this subsection, $\mR(\QTC)$ denotes the zero set of the projected
vector field $\QTC:X\to\R^d$.

The claim is intuitive: if a trajectory converges to an ordinary terminal
outcome, the vector field cannot keep pushing it in any feasible direction at the
limit. The proof uses the normal-cone formulation of projected dynamics and a
time-average argument. This avoids assuming that $\QTC$ is continuous at
boundary points.

\begin{lemma}\label{lemma:containment}
A dynamic solver $g_{\vQ}$ induced by $\vQ$ satisfies the following containment:
\[
g_{\vQ}(X) \subseteq \mR(\QTC) \cup \{\dagger\}.
\]
\end{lemma}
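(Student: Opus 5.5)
We prove the containment by fixing $\vx_0\in X$ with $g_{\vQ}(\vx_0)\neq\dagger$ and showing that the limit is a root of $\QTC$. By the definition of a dynamic solver, $\tau(\vx_0)=\infty$ and $\vx(t):=\vG(\vx_0,t)\to\vx^*\in X$. The target is $\QTC(\vx^*)=0$. By the Moreau decomposition for the closed convex cone $\TC_X(\vx^*)$ and its polar $N_X(\vx^*)$, this is equivalent to $\vQ(\vx^*)\in N_X(\vx^*)$. That in turn means $\langle \vQ(\vx^*),\vy-\vx^*\rangle\le 0$ for all $\vy\in X$, and this variational inequality is what we will establish.

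\textbf{Step 1: a pointwise inequality for a.e.\ $t$.} By Assumption~\ref{ass:well-behaved-proj}, $\dot\vx(t)=\Pi_{\TC_X(\vx(t))}\vQ(\vx(t))$ for a.e.\ $t$. Writing $\vQ=\Pi_{\TC}\vQ+\Pi_{N}\vQ$ (Moreau), we get $\vQ(\vx(t))-\dot\vx(t)\in N_X(\vx(t))$. Hence for any fixed $\vy\in X$,
\[
\langle \vQ(\vx(t)),\vy-\vx(t)\rangle\le\langle\dot\vx(t),\vy-\vx(t)\rangle=-\tfrac12\tfrac{d}{dt}\|\vy-\vx(t)\|^2 .
\]
This uses absolute continuity of $t\mapsto\|\vy-\vx(t)\|^2$ on compact intervals, which holds because $\vx$ is absolutely continuous and bounded there.

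\textbf{Step 2: time-averaging.} Integrate over $[T,T+1]$ and let $T\to\infty$. The right-hand side integrates to $\tfrac12\big(\|\vy-\vx(T)\|^2-\|\vy-\vx(T+1)\|^2\big)$, which tends to $0$ because $\vx(t)\to\vx^*$. On the left, $\vQ$ is continuous on $X$ and $\vx(t)\to\vx^*$, so the integrand converges uniformly on $[T,T+1]$ to $\langle\vQ(\vx^*),\vy-\vx^*\rangle$. The limit gives $\langle\vQ(\vx^*),\vy-\vx^*\rangle\le 0$. Since $\vy\in X$ was arbitrary, $\vQ(\vx^*)\in N_X(\vx^*)$, so $\QTC(\vx^*)=\Pi_{\TC_X(\vx^*)}\vQ(\vx^*)=0$.

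The point of averaging is that we never need continuity of $\QTC$. That map can jump at $\partial X$, which is why a Barbalat-type argument on $\dot\vx\to 0$ fails, and we avoid passing to the limit in $\dot\vx$ at all.

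\textbf{Main obstacle.} The delicate step is Step 1's use of the normal-cone inclusion. We must justify that $\vQ-\Pi_{\TC}\vQ$ lies in $N_X=\TC_X^\circ$; this follows from Moreau's decomposition for a closed convex cone and its polar, which is the duality stated in the definition of tangent and normal cones. We must also justify that the inclusion holds only on a full-measure set of times, which is harmless because we integrate. Combining the two steps, every non-$\dagger$ value of $g_{\vQ}$ lies in $\mR(\QTC)$, so $g_{\vQ}(X)\subseteq\mR(\QTC)\cup\{\dagger\}$.
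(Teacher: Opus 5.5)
Your proposal is correct and follows the paper's proof: you derive $\langle \vQ(\vx(t)),\vy-\vx(t)\rangle\le -\tfrac12\tfrac{d}{dt}\|\vy-\vx(t)\|^2$ from the normal-cone inclusion, integrate in time, and pass to the limit to get $\vQ(\vx^*)\in N_X(\vx^*)$, never using continuity of $\QTC$. The only difference is that you integrate over the sliding window $[T,T+1]$ instead of taking the Ces\`aro average over $[0,T]$. This replaces the paper's dominated-convergence step with the elementary fact that $f(t)\to f(\infty)$ uniformly on windows moving off to infinity.
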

\begin{proof}\label{proof:containment}
Fix an initial condition $\vx_0\in X$ and write
$\vx(t):=\vG(\vx_0,t)$ for the corresponding projected trajectory. If the
system does not converge, then $g_{\vQ}(\vx_0)=\dagger$ and the claim is
immediate. Hence assume $g_{\vQ}(\vx_0)=\vx^*\in X$, so $\vx(t)\to \vx^*$.

\emph{Step 1: use the normal-cone inequality along the trajectory.}
For a.e. $t$, using the normal cone formulation, the projected dynamics satisfies:
\[
\vQ(\vx(t)) - \dot \vx(t)\in N_X(\vx(t))
\]
Throughout the proof, we consider times $t$ at which this relation holds. By definition of normal cone, for every $\vy\in X$,
\[
\begin{aligned}
&\langle \vQ(\vx(t))-\dot \vx(t),\,\vy-\vx(t)\rangle \le 0,\\
&\Rightarrow \langle \vQ(\vx(t)),\,\vy-\vx(t)\rangle\\
&\qquad\le \langle \dot \vx(t),\,\vy-\vx(t)\rangle.
\end{aligned}
\]
Notice
\[
\frac{d}{dt} \left(-\frac12 \|\vy-\vx(t)\|^2 \right)
=\langle \dot \vx(t),\vy-\vx(t)\rangle .
\]
Substituting this identity gives
\[
\langle \vQ(\vx(t)),\vy-\vx(t)\rangle
\le -\frac12\frac{d}{dt}\|\vy-\vx(t)\|^2
\]

\emph{Step 2: average over time.}
Integrating from $0$ to $T$ and dividing by $T$ yields
\[
\begin{aligned}
&\frac1T\int_0^T \langle \vQ(\vx(t)),\vy-\vx(t)\rangle\,dt\\
&\qquad\le \frac{\|\vy-\vx(0)\|^2-\|\vy-\vx(T)\|^2}{2T}.
\end{aligned}
\]
Taking $\limsup_{T\to\infty}$ on both sides, we get
\[
\begin{aligned}
&\limsup_{T\to\infty}
\frac1T\int_0^T \langle \vQ(\vx(t)),\vy-\vx(t)\rangle\,dt\\
&\qquad\le \limsup_{T\to\infty}
\frac{\|\vy-\vx(0)\|^2-\|\vy-\vx(T)\|^2}{2T}\\
&\qquad=0.
\end{aligned}
\]

Since $\vx(T)\to\vx^*$, the right-hand side is bounded divided by $T$ and hence
has limit $0$.

\emph{Step 3: pass to the limit.}
As \(\vQ\) and \(\vx\) are continuous, the scalar function
\(f(t):=\langle \vQ(\vx(t)),\vy-\vx(t)\rangle\) converges to
\(f(\infty):=\langle \vQ(\vx^*),\vy-\vx^*\rangle\) as \(t\to\infty\).
Moreover, since \(\vx(t)\to\vx^*\), the trajectory \(\vx(\cdot)\) is bounded,
and continuity of \(\vQ\) on the trajectory implies that \(f\) is bounded on
$[0,\infty)$. Hence the rescaled functions $s\mapsto f(Ts)$ on $[0,1]$ are
dominated by an integrable constant and converge pointwise a.e. to $f(\infty)$.
By the dominated convergence theorem (equivalently, the standard Ces\`aro averaging
lemma),
\begin{align*}
\lim_{T\to\infty}\frac{1}{T}\int_0^T f(t)\,dt
    &=\lim_{T\to\infty}\int_0^1 f(Ts)\,ds \\
    &=\int_0^1 f(\infty)\,ds \\
    &=f(\infty).
\end{align*}

Thus the time average converges to \(\langle \vQ(\vx^*),\vy-\vx^*\rangle\) as claimed.

Combining this with the previous inequality, we get
\begin{align*}
\langle \vQ(\vx^*),\vy-\vx^*\rangle \leq 0
\end{align*}
As this holds for every $\vy\in X$, we have
$\vQ(\vx^*) \in N_X(\vx^*)$. By definition of projected dynamics, this is
equivalent to $\QTC(\vx^*)=0$. Hence $\vx^* \in \mR(\QTC)$, proving the
containment.
\end{proof}

\section{Prevalence Arguments}
\label{appendix:prevalence}

This appendix proves Proposition~\ref{prop:e-submersion-finite-attractors}. The
proof has three steps. First, we split roots of $\QTC$ into continuous interior
and boundary root problems. Second, we use parametric transversality to show
that these root sets are discrete for prevalent dynamics. Third, we combine
discreteness with compact containment from Assumption~\ref{ass:kkq} to obtain
finiteness.

\begin{remark}
    With $C^2$ boundary, the normal vector $\vn:\partial X \to \R^d$ is well-defined and $C^1$. So we can write $\QTC$ as:
    \[
        \QTC(\vx)=
        \begin{cases}
            \vQ(\vx) & \vx \in \inter X \\
            \QTCP(\vx) & \vx \in \partial X
        \end{cases}
    \]
    where, for $\vx\in\partial X$,
    \[
        \QTCP(\vx)=
        \vQ(\vx) - \langle \vQ(\vx), \vn(\vx) \rangle_+ \vn(\vx),
    \]
    and the $+$ subscript denotes the positive part (maximum of the argument and zero).
\end{remark}

The final step of the argument uses the following compactness fact.

\begin{lemma}[Closed discrete sets in compact sets are finite]\label{lemma:closed-discrete-finite}
Let $Y$ be a topological space, and let $S\subseteq Y$ be closed in $Y$ and
discrete in $Y$. If there is a compact set $K\subseteq Y$ such that
$S\subseteq K$, then $S$ is finite.
\end{lemma}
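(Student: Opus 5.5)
The plan is a direct compactness argument using discreteness to build an open cover with no finite subcover unless $S$ is finite. Since $S$ is discrete in $Y$, every $s\in S$ has an open neighborhood $U_s\subseteq Y$ with $U_s\cap S=\{s\}$. Since $S$ is closed in $Y$, its complement $V:=Y\setminus S$ is open. The family $\{U_s\}_{s\in S}\cup\{V\}$ is then an open cover of $Y$, and in particular of $K$.

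By compactness of $K$, finitely many members of this family already cover $K$, say $V,U_{s_1},\dots,U_{s_m}$; we may add $V$ harmlessly. Now take any $s\in S$. Since $S\subseteq K$, $s$ lies in some member of the subcover. It cannot lie in $V$, because $V$ is disjoint from $S$. So $s\in U_{s_i}$ for some $i$. But $U_{s_i}\cap S=\{s_i\}$, hence $s=s_i$. Therefore $S\subseteq\{s_1,\dots,s_m\}$ is finite.

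The only point that needs care is the meaning of ``discrete in $Y$''. It must mean that each point of $S$ is isolated in $S$ with respect to neighborhoods open in $Y$. Equivalently, $S$ carries the discrete subspace topology, and this gives the $U_s$ by definition of the subspace topology.

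No Hausdorff assumption is needed, and closedness is used only to make $V$ open. An alternative route is to note that $S=S\cap K$ is closed in $K$, hence compact, and a compact discrete space is finite. However, closed subsets of compact sets are compact in any space, so this also works without Hausdorffness. I expect no real obstacle; the argument is routine.
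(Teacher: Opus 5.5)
Your argument is correct and is essentially the paper's proof. The paper first notes that $S=S\cap K$ is closed in $K$, hence compact, and then covers $S$ by the singletons $U_s\cap S$; you simply inline the ``closed subset of a compact set is compact'' step by covering $K$ with $\{U_s\}_{s\in S}\cup\{Y\setminus S\}$, and your alternative route is exactly the paper's.
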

\begin{proof}
Since $S$ is closed in $Y$, $S\cap K=S$ is closed in $K$. Hence $S$ is compact
in the subspace topology. Since $S$ is discrete, for each $s\in S$ there is an
open set $U_s\subseteq Y$ such that $U_s\cap S=\{s\}$. The collection
$\{U_s\cap S:s\in S\}$ is an open cover of the compact space $S$ by singletons,
so it has a finite subcover. Hence $S$ is finite.
\end{proof}

In Section~\ref{sec:model}, we showed that $g_{\vQ}(X)$ is finite iff $\QTC$
has finitely many roots. The useful fact about roots of a continuous function is
that they form a closed set. Although $\QTC$ is not continuous, its roots are
contained in the union of the roots of two continuous maps.

\begin{lemma}\label{lemma:break-into-two}
	   \[\mR(\QTC) \subseteq \mR(\vQ) \cup \mR(\QTCP)\]
	   where $\mR(\vQ)$ is closed in $X$ and $\mR(\QTCP)$ is closed in $\partial X$.
\end{lemma}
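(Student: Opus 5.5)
The plan is to split a root $\vx^*$ of $\QTC$ according to whether it lies in $\inter X$ or on $\partial X$, and then to handle closedness separately for the two continuous maps.

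\emph{Containment.} If $\vx^*\in\inter X$, the earlier remark gives $\QTC(\vx^*)=\vQ(\vx^*)$, so $\vx^*\in\mR(\vQ)$. If $\vx^*\in\partial X$, the remark at the start of this appendix (which uses the $C^2$ boundary from Assumption~\ref{ass:kkq}) gives $\QTC(\vx^*)=\QTCP(\vx^*)$, so $\vx^*\in\mR(\QTCP)$. I will make sure this boundary formula is justified rather than taken for granted. At a boundary point of a convex set with $C^2$ boundary, the tangent cone is the half-space $\{\vvec:\langle \vvec,\vn(\vx)\rangle\le 0\}$. Projecting onto a half-space subtracts $\langle \vQ(\vx),\vn(\vx)\rangle_+\vn(\vx)$, which is exactly the displayed expression for $\QTCP$.

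\emph{Closedness.} The set $\mR(\vQ)=\vQ^{-1}(\{0\})$ is the preimage of a closed set under the continuous map $\vQ:X\to\R^d$, so it is closed in $X$. For the boundary piece, continuity of $\QTCP$ on $\partial X$ follows from three facts: $\vQ$ is continuous, $\vn$ is $C^1$ (hence continuous) because the boundary is $C^2$, and $(\cdot)_+$ is continuous. Then $\mR(\QTCP)=\QTCP^{-1}(\{0\})$ is closed in $\partial X$. Since $\partial X$ is itself closed in $X$, it is also closed in $X$, although the statement only requires closedness relative to $\partial X$.

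\emph{Main obstacle.} The only delicate step is identifying the tangent cone at a boundary point with the half-space determined by the outward normal. For a closed convex set, $\TC_X(\vx)$ is the polar of $N_X(\vx)$. With a $C^1$ boundary, the normal cone at a boundary point is the ray generated by the unique outward unit normal $\vn(\vx)$, because the supporting hyperplane is unique. Taking the polar of this ray gives the half-space, and the half-space projection formula then follows from a direct computation. If uniqueness of the supporting hyperplane needs an argument, I would derive it from differentiability of a local defining function of $\partial X$.
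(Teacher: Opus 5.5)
Your proposal is correct and follows the paper's proof: you split roots of $\QTC$ into interior and boundary cases, and you get closedness by taking preimages of $\{0\}$ under the continuous maps $\vQ$ and $\QTCP$, the latter continuous because $\vn$ is $C^1$. The only addition is your explicit derivation of the half-space projection formula via $\TC_X(\vx)=N_X(\vx)^\circ$ and uniqueness of the outward normal, which the paper simply asserts in the remark opening its prevalence appendix.
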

\begin{proof}
Take any $\vx\in \mR(\QTC)$, i.e. $\QTC(\vx)=0$. If $\vx\in \inter X$, then $\QTC(\vx)=\vQ(\vx)$, so $\vQ(\vx)=0$ and hence $\vx\in \mR(\vQ)$. If instead $\vx\in \partial X$, then $\vx\in \mR(\QTCP)$ by definition. Therefore $\mR(\QTC) \subseteq \mR(\vQ) \cup \mR(\QTCP)$.

As $\vQ$ is continuous, $\mR(\vQ)$ is closed. Under Assumption~\ref{ass:kkq}, $X$ has $C^2$ boundary. For $\vx\in\partial X$,
\[\QTCP(\vx) = \vQ(\vx) - \langle \vQ(\vx), \vn(\vx) \rangle_+ \vn(\vx)\]
As the boundary is $C^2$, $\vn$ is $C^1$, making $\QTCP$ continuous on
$\partial X$ and $\mR(\QTCP)$ closed in $\partial X$.
\end{proof}

What remains is having a condition that implies discreteness of these root sets.
Once we have discreteness, Assumption~\ref{ass:kkq} provides compact containment
and Lemma~\ref{lemma:closed-discrete-finite} gives finiteness. When a function
$f$ is $C^1$, a sufficient condition for roots of $f$ to be isolated is
transversality of $f$ to $0$: $f \pitchfork 0$. As transversality works well
with prevalence, we phrase discreteness in terms of transversality to $0$. This
works directly for $\vQ$, which is $C^1$. The boundary map $\QTCP$, however, is
not $C^1$ because it involves $\langle \vQ(\vx), \vn(\vx) \rangle_+$.

To avoid the nonsmooth positive-part term in $\QTCP$, we encode boundary roots
using the following $C^1$ map:
\[
\begin{aligned}
    \tilde{\vQ}:\partial X \times [0,\infty) \to \R^{d} \\
    \tilde{\vQ}(\vx, \alpha) = \vQ(\vx) - \alpha \vn(\vx)
\end{aligned}
 \]
The following lemma shows why $\tilde{\vQ} \pitchfork 0$ implies discreteness of $\mR(\QTCP)$.

\begin{lemma}\label{lemma:discrete-preserves}
    If $\tilde{\vQ} \pitchfork 0$, then $\mR(\QTCP)$ is discrete in $\partial X$.
\end{lemma}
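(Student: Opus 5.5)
The plan is to show that every root of $\QTCP$ is the projection of a root of $\tilde{\vQ}$, that these roots of $\tilde{\vQ}$ are isolated in $\partial X\times[0,\infty)$, and that isolation passes down to $\partial X$.

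\emph{Step 1: lift roots of $\QTCP$ to roots of $\tilde{\vQ}$.} Take $\vx\in\mR(\QTCP)$. Then $\vQ(\vx)=\langle \vQ(\vx),\vn(\vx)\rangle_+\vn(\vx)$. Set $\alpha(\vx):=\langle \vQ(\vx),\vn(\vx)\rangle_+\ge 0$, so that $\tilde{\vQ}(\vx,\alpha(\vx))=0$.

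Conversely, suppose $\tilde{\vQ}(\vx,\alpha)=0$ with $\alpha\ge0$. Then $\vQ(\vx)=\alpha\vn(\vx)$, and since $\|\vn\|=1$ we get $\langle \vQ(\vx),\vn(\vx)\rangle=\alpha\ge 0$. Hence $\alpha=\alpha(\vx)$ is uniquely determined by $\vx$. The map $\vx\mapsto(\vx,\alpha(\vx))$ is therefore a bijection from $\mR(\QTCP)$ onto $\mR(\tilde{\vQ})$, with inverse the projection $\mathrm{pr}_1$. This lift is continuous, because $\vQ$ and $\vn$ are continuous (the $C^2$ boundary from Assumption~\ref{ass:kkq}) and the positive part is continuous.

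\emph{Step 2: transversality gives isolated roots upstairs.} The domain $M:=\partial X\times[0,\infty)$ is a $d$-dimensional manifold with boundary, since $\partial X$ is a $(d-1)$-dimensional $C^2$ manifold. The codomain is $\R^d$. If $\tilde{\vQ}\pitchfork 0$, then at each root the differential $D\tilde{\vQ}$ is a surjection between $d$-dimensional spaces, hence an isomorphism. The inverse function theorem then gives a neighborhood of the root on which $\tilde{\vQ}$ is injective, so the root is isolated in $M$.

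This is where I expect the main obstacle: roots with $\alpha=0$ sit on $\partial M$. There I would argue via the one-sided version of the inverse function theorem: extend $\tilde{\vQ}$ by the same formula to $\partial X\times(-\epsilon,\infty)$, which is still $C^1$ and has the same derivative at the root. The extended map is then a local diffeomorphism near the root, so the root is still isolated. Equivalently, transversality for manifolds with boundary yields a $0$-dimensional preimage.

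\emph{Step 3: push isolation down.} Let $\vx\in\mR(\QTCP)$ and let $U$ be a neighborhood of $(\vx,\alpha(\vx))$ in $M$ containing no other root. By continuity of the lift in Step~1, there is a neighborhood $V$ of $\vx$ in $\partial X$ with $(\vx',\alpha(\vx'))\in U$ for all $\vx'\in V$. If some $\vx'\in V\cap\mR(\QTCP)$, then $(\vx',\alpha(\vx'))$ is a root of $\tilde{\vQ}$ lying in $U$, so it equals $(\vx,\alpha(\vx))$ and hence $\vx'=\vx$. Therefore $V\cap\mR(\QTCP)=\{\vx\}$, and $\mR(\QTCP)$ is discrete in $\partial X$.
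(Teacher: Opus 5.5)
Your proposal is correct and follows essentially the paper's argument: lift each root via $\alpha(\vx)=\langle \vQ(\vx),\vn(\vx)\rangle_+$, use continuity of this lift, and transfer discreteness of $\mR(\tilde{\vQ})$ down to $\partial X$. The paper does this last step by contradiction with a convergent sequence of distinct roots where you use neighborhoods directly, and your treatment of roots with $\alpha=0$ (via the $C^1$ extension to $\partial X\times(-\epsilon,\infty)$ and the inverse function theorem) supplies a step the paper leaves implicit when it asserts that $\tilde{\vQ}\pitchfork 0$ makes $\mR(\tilde{\vQ})$ discrete.
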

\begin{proof}
    If $\tilde{\vQ} \pitchfork 0$, then $\mR(\tilde{\vQ})$ is discrete in
    $\partial X\times[0,\infty)$. Moreover, $\vx\in\mR(\QTCP)$ iff
    \[
    (\vx,\alpha(\vx))\in\mR(\tilde{\vQ}),
    \qquad
    \alpha(\vx):=\langle \vQ(\vx),\vn(\vx)\rangle_+.
    \]

    Suppose for contradiction that $\mR(\QTCP)$ is not discrete in $\partial X$.
    Then there exists a sequence of distinct points $\vx_n\in \mR(\QTCP)$ and a limit point
    $\vx^*=\lim_{n\to\infty} \vx_n\in \mR(\QTCP)$.
    By continuity of $\alpha(\cdot)$,
    $(\vx_n,\alpha(\vx_n))\to(\vx^*,\alpha(\vx^*))$. Each pair lies in
    $\mR(\tilde{\vQ})$, and the pairs are distinct because the $\vx_n$ are
    distinct. Thus $\mR(\tilde{\vQ})$ has an accumulation point, contradicting
    discreteness.
    Therefore $\mR(\QTCP)$ is discrete.
\end{proof}

So to prove discreteness of $\mR(\vQ)$ and $\mR(\QTCP)$, it suffices to prove $\vQ \pitchfork 0$ and $\tilde{\vQ} \pitchfork 0$. The following lemma connects prevalence and transversality.

\begin{restatable}[Parametric transversality from a submersion]{lemma}{lemParametricSubmersion}\label{lem:parametric_submersion}
Let $Z$ be a finite-dimensional $C^{r}$ manifold, possibly with boundary, and let
$P$ be a finite-dimensional real vector space (identified with $\R^{m}$, with
Lebesgue measure $\lambda_{P}$). Let $\vF:P \times Z\to \R^{n}$ be a $C^{r}$ map
with $r>\max(\dim Z-n,0)$ such that $\vF$ is a submersion, i.e. for every $(\vp,z)\in P\times Z$,
\[
D\vF_{(\vp,z)}:T_{\vp}P\times T_{z}Z\to \R^{n}
\quad\text{is surjective.}
\]
For each $\vp \in P$ write $\vF_{\vp}(z):=\vF(\vp,z)$.
Then for $\lambda_{P}$-a.e.\ $\vp\in P$ we have $\vF_{\vp}\pitchfork 0$.
\end{restatable}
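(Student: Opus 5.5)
The plan is the standard parametric transversality argument: pass to the zero set of $\vF$ and apply Sard's theorem to its projection onto $P$. Since $\vF$ is a $C^r$ submersion from the manifold $P\times Z$ to $\R^n$, the regular value theorem shows that $W:=\vF^{-1}(0)$ is a $C^r$ submanifold of $P\times Z$ of dimension $m+\dim Z-n$. It may have boundary, namely $W\cap(P\times\partial Z)$, and there we need the submersion hypothesis also for the restriction to $P\times\partial Z$. Let $\pi:W\to P$ be the restriction of the projection $(\vp,z)\mapsto\vp$. This map is $C^r$.

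The key step is a linear-algebra equivalence. For $(\vp,z)\in W$, $\vp$ is a regular value of $\pi$ at $(\vp,z)$ if and only if $D(\vF_{\vp})_z$ is surjective.

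\emph{Surjectivity implies regularity.} Take any $\dot\vp\in T_{\vp}P$. Choose $\dot z$ with $D_z\vF(\dot z)=-D_{\vp}\vF(\dot\vp)$. Then $(\dot\vp,\dot z)\in\ker D\vF=T_{(\vp,z)}W$, so $D\pi$ is onto.

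\emph{Regularity implies surjectivity.} Take any $w\in\R^n$. Because $D\vF$ is onto, write $w=D\vF(\dot\vp,\dot z)$. Because $D\pi$ is onto, pick $(\dot\vp,\dot z')\in T W$ with the same $\dot\vp$. Then $w=D\vF(0,\dot z-\dot z')=D_z\vF(\dot z-\dot z')$.

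Consequently, every regular value $\vp$ of $\pi$ (and of $\partial\pi$, for the boundary version) satisfies $\vF_{\vp}\pitchfork 0$. For points $z\notin\vF_{\vp}^{-1}(0)$ the transversality condition is vacuous.

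Next we apply Sard's theorem to $\pi$, and to $\pi|_{\partial W}$ when $Z$ has boundary. Sard requires smoothness of order $r>\max(\dim W-\dim P,0)=\max(\dim Z-n,0)$, which is exactly the hypothesis. It follows that the set of critical values has $\lambda_P$-measure zero. A countable union of null sets is null, so combining interior and boundary gives $\vF_{\vp}\pitchfork 0$ for $\lambda_P$-a.e.\ $\vp$. Second countability of $W$ handles any covering issues in applying Sard locally.

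The main obstacle is the boundary. For manifolds with boundary, the transversality theorem (Guillemin--Pollack form) needs both $\vF$ and $\vF|_{P\times\partial Z}$ to be submersions, so that $W$ is a manifold with boundary $\partial W=W\cap(P\times\partial Z)$. In this paper's uses, $P\times\partial Z$ is handled as follows:

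\textbf{$Z=X$, a closed convex body with $C^2$ boundary.} The submersion comes from parameter directions alone, as in Proposition~\ref{prop:ev-submersion}, and those directions remain available on the boundary.

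\textbf{$Z=\partial X\times[0,\infty)$, with boundary $\partial X\times\{0\}$.} The same parameter-direction argument applies on the boundary face.

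I would therefore first check that the hypothesis as used implies the boundary submersion. I expect this to hold because surjectivity is witnessed by $\dot\vp$ with $\dot z=0$. Failing that, I would add the boundary submersion as an explicit hypothesis. With that in place, the rest is the standard Sard bookkeeping above.
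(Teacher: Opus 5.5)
Your argument follows the paper's route: Sard applied to the projection $W=\vF^{-1}(0)\to P$, together with a linear-algebra identification of regular values with transversality. Your lifting argument replaces the paper's rank--nullity comparison of $\ker D\pi^M$ and $\ker D_z\vF$; both work. You also correctly keep $W$ of class $C^r$, which Sard needs when $\dim Z>n$.

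You are right that the preimage theorem alone does not make $W$ a manifold with boundary $W\cap(P\times\partial Z)$, and the paper's proof asserts this without justification. The gap is in how you close that point. The boundary submersion is not implied by the lemma's hypothesis. Take $Z=[0,\infty)$, $P=\R$, $\vF(\vp,z)=z$: then $\vF$ is a submersion, $\vF|_{P\times\partial Z}\equiv 0$ is not, and yet $\vF_\vp\pitchfork 0$ for every $\vp$. Nor is the boundary submersion guaranteed where the lemma is used in Theorem~\ref{theorem:prevalence_submersion_probe} and Proposition~\ref{prop:e-submersion-finite-attractors}. Those results assume general submersions, not surjectivity through parameter directions alone, so your observation covers only the route through Proposition~\ref{prop:ev-submersion}. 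Adding the boundary submersion as a hypothesis therefore proves a strictly weaker lemma, and the weakening propagates upward.

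The missing idea is that the conclusion concerns only $D(\vF_\vp)_z$ on the full tangent space $T_zZ$. So you need neither a manifold-with-boundary structure on $W$ nor Sard for $\pi|_{\partial W}$; the latter gives $\partial\vF_\vp\pitchfork 0$, which the lemma does not claim.
\begin{itemize}
\item On $P\times\inter Z$, your argument applies verbatim.
\item Near a point of $P\times\partial Z$, pass to a chart and extend $\vF$ to a $C^r$ map $\tilde\vF$ on an open set $O\subseteq P\times\R^q$. Since $D\tilde\vF=D\vF$ on $P\times Z$ and surjectivity is an open condition, you may shrink $O$ so that $\tilde\vF$ is a submersion on $O$.
\item Then $\tilde W:=\tilde\vF^{-1}(0)$ is a boundaryless $C^r$ manifold of dimension $m+q-n$. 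Your ``regularity implies surjectivity'' step shows that every zero $(\vp,z)\in O$ of $\vF$ at which $D_z\vF$ is not onto is a critical point of $\tilde\pi:\tilde W\to P$.
\item Sard, with the same order $r$, makes those critical values null.
\end{itemize}
Countably many such $O$ cover $P\times\partial Z$. Hence the set of bad $\vp$ lies in a countable union of null sets, with no extra hypothesis.
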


\begin{remark}\label{remark:parametric-regularity}
    In particular, if $\dim Z=n$ then
    $r>\max(\dim Z-n,0)$ reduces to $r>0$, so $C^{1}$ regularity suffices. This
    is the typical case for the applications in this paper; when $\dim Z>n$ the
    stronger $C^{r}$ assumption above is required to apply Sard's theorem to
    the projection used in the proof (see \cite{GuilleminPollack1974}).
\end{remark}

The idea is to apply Sard's theorem to the projection from the zero set
$M:=\vF^{-1}(0)$ onto the parameter space $P$. Regular values of this projection
are exactly the parameters for which the fiber map $\vF_{\vp}$ is transverse to
$0$.

\begin{proof}\label{proof:parametric_submersion}
Let $q:=\dim Z$. Since $\vF$ is a submersion, $0$ is a regular value of $\vF$.
Using the preimage theorem, $M:=\vF^{-1}(0)\subset P\times Z$ is a
$\dim(P)+q-n$-dimensional $C^{1}$ submanifold (possibly with boundary) with
$T_M = \operatorname{ker}(D\vF)$.

Let $\pi:P\times Z\to P$ be the projection $\pi(\vp,z)=\vp$ and set
$\pi^{M}:=\pi|_{M}$. Intuitively, values of $\pi^{M}$ are $\vp$ such that for
some $z$, $\vF(\vp,z)=0$. By Sard's theorem, the set of critical values of
$\pi^{M}$ has $\lambda_{P}$-measure zero. We show that regular values of
$\pi^{M}$ correspond exactly to $\vp$ such that $\vF_{\vp}\pitchfork 0$.

\[
\begin{aligned}
\ker D \pi^{M}_{(\vp,z)}
&=\{\vvec\in T_{(\vp,z)}M:D \pi^{M}_{(\vp,z)}(\vvec)=0\}\\
&=\{(\vvec_{\vp},\vvec_{z})\in T_{(\vp,z)}M:\vvec_{\vp}=0\}\\
&=\{(0,\vvec_{z})\in T_{(\vp,z)}M\}\\
&=\{(0,\vvec_{z})\in\ker(D\vF_{(\vp,z)})\}\\
&=\{(0,\vvec_{z}):D_{z}\vF_{(\vp,z)}(\vvec_{z})=0\}\\
&=\{(0,\vvec_{z}):\vvec_{z}\in\ker(D_{z}\vF_{(\vp,z)})\}.
\end{aligned}
\]

So $\dim(\ker D \pi^{M}_{(\vp,z)}) = \dim(\ker(D_{z} \vF_{(\vp,z)}))$.
Using the rank-nullity theorem on $D \pi^{M}_{(\vp,z)}$, we have:
\[
\begin{aligned}
&\rank(D \pi^{M}_{(\vp,z)})
+\dim(\ker D \pi^{M}_{(\vp,z)})\\
&\qquad=\dim(T_{(\vp,z)}M).
\end{aligned}
\]
\[
\begin{aligned}
&\Rightarrow \rank(D \pi^{M}_{(\vp,z)})
+\dim(\ker D \pi^{M}_{(\vp,z)})\\
&\qquad=\dim(P)+q-n.
\end{aligned}
\]
Doing the same on $D_{z} \vF$, we have:
\[
\begin{aligned}
\rank(D_{z} \vF_{(\vp,z)})+\dim(\ker D_{z} \vF_{(\vp,z)})
&=\dim(Z)\\
&=q.
\end{aligned}
\]
Subtracting the two equations cancels the kernels as they are equal giving us:
\[
\rank(D \pi^{M}_{(\vp,z)})-\rank(D_{z} \vF_{(\vp,z)})
=\dim(P)-n.
\]

Now if $\vp$ is a regular value of $\pi^{M}$, then
$\rank(D \pi^{M}_{(\vp,z)}) = \dim(P)$. Therefore
$\rank(D_{z} \vF_{(\vp,z)}) = n$, which is equivalent to
$\vF_{\vp} \pitchfork 0$.

On the other hand, assume $\vF_{\vp} \pitchfork 0$. Then
$\rank(D_{z} \vF_{(\vp,z)}) = n$. Using the previous equation, we have:
\[\rank(D \pi^{M}_{(\vp,z)}) = \dim(P)\]
So $D \pi^{M}_{(\vp,z)}$ is surjective, and hence $\vp$ is a regular value of $\pi^{M}$.
\end{proof}

\begin{remark}
    When $n=\dim Z$, $\vF_{\vp} \pitchfork 0$ means $0$ is a regular value of $\vF_{\vp}$. Using the preimage theorem, we have $\dim(\vF_{\vp}^{-1}(0)) = \dim Z - n = 0$. This implies $\vF_{\vp}^{-1}(0)$, i.e. the roots of $\vF_{\vp}$, are discrete in $Z$.
\end{remark}

To obtain transversality prevalently, we perturb the dynamics along a
finite-dimensional probe space.

\begin{definition}
    For a finite-dimensional subspace $P$ of $\mathcal{Q}$, for any $\vQ \in \mathcal{Q}$, define the following perturbed evaluation maps $e^{\vQ}:P\times X\to\R^{d}$ and $e^{\tilde{\vQ}}:P \times \partial X \times [0,\infty) \to \R^{d}$:
    \[
    \begin{aligned}[t]
        e^{\vQ}(\vp,\vx)&:=(\vQ+\vp)(\vx)\\
        e^{\tilde{\vQ}}(\vp,\vx,\alpha)&:=(\vQ+\vp)(\vx) - \alpha \vn(\vx)
    \end{aligned}
    \]
\end{definition}

We can use this to prove the following sufficient condition for prevalence of transversality.

\begin{theorem}[Prevalence via a submersion probe]\label{theorem:prevalence_submersion_probe}
Let $X$ have $C^2$ boundary. Let $\mathcal Q$ be a complete metric vector space of continuously differentiable functions from $X$ to $\R^d$ and let $P\subset \mathcal Q$ be a finite-dimensional subspace.

If for all $\vQ\in\mathcal Q$, the maps $e^{\vQ}$ and $e^{\tilde{\vQ}}$ are $C^{1}$ submersions, then for prevalent $\vQ \in \mathcal{Q}$, $\mR(\vQ)$ is discrete in $X$ while $\mR(\QTCP)$ is discrete in $\partial X$.
\end{theorem}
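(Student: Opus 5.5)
The plan is to use $P$ itself as the probe subspace in the definition of prevalence. Take $S$ to be the set of $\vQ\in\mathcal Q$ for which $\vQ\pitchfork 0$ on $X$ and $\tilde{\vQ}\pitchfork 0$ on $\partial X\times[0,\infty)$. By Lemma~\ref{lemma:discrete-preserves} and the preceding remark, every $\vQ\in S$ has $\mR(\vQ)$ discrete in $X$ and $\mR(\QTCP)$ discrete in $\partial X$. So it suffices to show that $S$ is prevalent. Concretely, for every $\vQ\in\mathcal Q$ we must show that $\{\vp\in P:\vQ+\vp\in S\}$ has full Lebesgue measure in $P$.

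Fix $\vQ\in\mathcal Q$ and apply Lemma~\ref{lem:parametric_submersion} twice.

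\emph{Interior problem.} Take $Z=X$, a $d$-dimensional manifold with boundary (or $Z=\inter X$ if one prefers to avoid boundary issues, since boundary roots are handled separately), with $n=d$ and $\vF=e^{\vQ}$. By hypothesis $e^{\vQ}$ is a $C^1$ submersion. Since $\dim Z=n$, Remark~\ref{remark:parametric-regularity} says $C^1$ regularity suffices. Hence for $\lambda_P$-a.e.\ $\vp$, the fiber map $e^{\vQ}_{\vp}=\vQ+\vp$ is transverse to $0$.

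\emph{Boundary problem.} Take $Z=\partial X\times[0,\infty)$. Because $X$ has $C^2$ boundary, $\partial X$ is a $C^2$ hypersurface, so $Z$ is a $d$-dimensional $C^1$ (indeed $C^2$) manifold with boundary. Take $\vF=e^{\tilde{\vQ}}$. Its fiber at $\vp$ is $(\vx,\alpha)\mapsto(\vQ+\vp)(\vx)-\alpha\vn(\vx)=\widetilde{(\vQ+\vp)}(\vx,\alpha)$. Again $\dim Z=d=n$, so the lemma applies with $r=1$ and gives $\widetilde{(\vQ+\vp)}\pitchfork 0$ for a.e.\ $\vp$.

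Intersecting the two full-measure sets gives a full-measure set of $\vp$ with $\vQ+\vp\in S$. Since $\vQ$ was arbitrary, $S$ is prevalent, and the stated discreteness follows for every element of $S$. The null sets are allowed to depend on $\vQ$, since the definition of prevalence only requires a fixed $P$ with full-measure slices.

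The main obstacle I expect is the boundary-of-manifold bookkeeping in the two applications of Lemma~\ref{lem:parametric_submersion}. The manifold $Z=\partial X\times[0,\infty)$ has boundary $\partial X\times\{0\}$, and $X$ itself has boundary. The preimage theorem and Sard's theorem for manifolds with boundary normally require $\vF$ restricted to $\partial Z$ to also be a submersion. I would check this directly:
\begin{itemize}
\item On $\partial X\times\{0\}$, the restriction is $(\vp,\vx)\mapsto(\vQ+\vp)(\vx)$, which remains a submersion in the $\vp$-directions alone whenever evaluation is submersive.
\item In the generality stated, if only $e^{\tilde{\vQ}}$ is assumed submersive, I would note that the lemma as stated in the paper already allows manifolds with boundary, and invoke it as given.
\end{itemize}
A secondary point is confirming that $\vF_{\vp}\pitchfork0$ with $\dim Z=n$ yields a discrete zero set even near $\partial Z$. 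This follows because transversality makes $D\vF_{\vp}$ an isomorphism at each root, so the inverse function theorem (applied in a one-sided chart) isolates each root.
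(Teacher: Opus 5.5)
Your proposal follows the paper's proof: fix a base $\vQ$, apply Lemma~\ref{lem:parametric_submersion} to $e^{\vQ}$ on $P\times X$ and to $e^{\tilde{\vQ}}$ on $P\times\partial X\times[0,\infty)$, and pass from $\tilde{\vQ}\pitchfork 0$ to discreteness of $\mR(\QTCP)$ via Lemma~\ref{lemma:discrete-preserves}. You then intersect the two full-measure sets of perturbations, relying, as the paper does, on that lemma's allowance for manifolds with boundary. One correction: drop the parenthetical option $Z=\inter X$, because transversality on $\inter X$ together with $\tilde{\vQ}\pitchfork 0$ does not stop interior roots of $\vQ$ from accumulating at a boundary root. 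For example, $X=[0,1]$ with $\vQ(x)=x^3\sin(1/x)$, $\vQ(0)=0$, satisfies both conditions, yet $\mR(\vQ)$ accumulates at $0$, so you need transversality of $\vQ+\vp$ at boundary points of $X$ as well.
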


\begin{proof}
Fix any base dynamics $\vQ_0\in\mathcal Q$. Applying
Lemma~\ref{lem:parametric_submersion} to $e^{\vQ_0}$ gives a full-measure set of
perturbations $\vp\in P$ such that $(\vQ_0+\vp)(\cdot)\pitchfork 0$, and hence
$\mR(\vQ_0+\vp)$ is discrete in $X$. Applying the same lemma to
$e^{\tilde{\vQ}_0}$ gives a full-measure set of perturbations $\vp\in P$ such
that
\[
(\vx,\alpha)\mapsto(\vQ_0+\vp)(\vx)-\alpha\vn(\vx)
\]
is transverse to $0$. By Lemma~\ref{lemma:discrete-preserves},
$\mR((\vQ_0+\vp)^{\Pi_{\TC_X}}_{|\partial X})$ is discrete in $\partial X$ for
these perturbations. Intersecting the two full-measure sets gives full measure
in $P$. Since this holds for every base $\vQ_0\in\mathcal Q$, the set of
dynamics for which both root sets are discrete is prevalent.
\end{proof}

\begin{corollary}\label{cor:prevalent-finite-roots}
Assume $X$ satisfies Assumption~\ref{ass:convex-closed}, and let $\mathcal Q$
satisfy Assumption~\ref{ass:kkq}. If there exists a finite-dimensional subspace
$P\subseteq\mathcal Q$ such that, for every $\vQ\in\mathcal Q$, the perturbed
evaluation maps $e^{\vQ}$ and $e^{\tilde{\vQ}}$ are $C^1$ submersions, then for
prevalent $\vQ\in\mathcal Q$, $\QTC$ has finitely many roots.
\end{corollary}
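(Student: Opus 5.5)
The plan is to combine Theorem~\ref{theorem:prevalence_submersion_probe} with the compactness argument of Lemma~\ref{lemma:closed-discrete-finite}, using the root splitting of Lemma~\ref{lemma:break-into-two}. First, Assumption~\ref{ass:kkq} gives $C^2$ boundary. Together with the submersion hypothesis on $e^{\vQ}$ and $e^{\tilde{\vQ}}$, this is exactly what Theorem~\ref{theorem:prevalence_submersion_probe} needs. So there is a prevalent set $S\subseteq\mathcal Q$ such that, for every $\vQ\in S$, $\mR(\vQ)$ is discrete in $X$ and $\mR(\QTCP)$ is discrete in $\partial X$. I would fix such a $\vQ$ for the rest of the argument.

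Second, I would show that each of the two root sets is finite.
\begin{itemize}
\item By Lemma~\ref{lemma:break-into-two}, $\mR(\vQ)$ is closed in $X$. By Assumption~\ref{ass:kkq} it lies in the compact set $K_1(\vQ)\subseteq X$. Lemma~\ref{lemma:closed-discrete-finite} with $Y=X$ then makes it finite.
\item Likewise, $\mR(\QTCP)$ is closed in $\partial X$ and lies in the compact $K_2(\vQ)\subseteq\partial X$. Applying Lemma~\ref{lemma:closed-discrete-finite} with $Y=\partial X$ makes it finite too.
\end{itemize}
Finally, the containment $\mR(\QTC)\subseteq\mR(\vQ)\cup\mR(\QTCP)$ from Lemma~\ref{lemma:break-into-two} shows that $\QTC$ has finitely many roots. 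This holds for every $\vQ$ in the prevalent set $S$, which is the claim.

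The only point needing care is the notion of ``discrete'' that Lemma~\ref{lemma:closed-discrete-finite} uses. That lemma needs every point of $S$ to be isolated in the ambient space $Y$, i.e.\ to have an open $U_s\subseteq Y$ with $U_s\cap S=\{s\}$. This is exactly what the transversality-plus-preimage argument provides for $\mR(\vQ)$: a zero-dimensional submanifold of $X$. For the boundary set I would check that Lemma~\ref{lemma:discrete-preserves} delivers the same notion. It rules out accumulation points inside $\mR(\QTCP)$, and because $\mR(\QTCP)$ is closed, a limit of distinct root points would itself be a root point. So no point of $\partial X$ is an accumulation point of $\mR(\QTCP)$, and the set is isolated in $\partial X$ in the required sense.

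I expect this matching of discreteness notions to be the main, though minor, obstacle. Everything else is direct assembly of the stated results, and no further prevalence bookkeeping is needed, since Theorem~\ref{theorem:prevalence_submersion_probe} already returns a single prevalent set covering both root problems.
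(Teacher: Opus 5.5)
Your proposal is correct and follows the paper's proof essentially verbatim. Like the paper, you invoke Theorem~\ref{theorem:prevalence_submersion_probe} for prevalent discreteness, use Lemma~\ref{lemma:break-into-two} for closedness and the root containment, place both root sets in the compact sets from Assumption~\ref{ass:kkq}, and conclude finiteness via Lemma~\ref{lemma:closed-discrete-finite}. Your extra check that Lemma~\ref{lemma:discrete-preserves} yields discreteness in the sense Lemma~\ref{lemma:closed-discrete-finite} requires is sound; the paper leaves this point implicit.
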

\begin{proof}
By Theorem~\ref{theorem:prevalence_submersion_probe}, for prevalent
$\vQ\in\mathcal Q$, $\mR(\vQ)$ is discrete in $X$ and $\mR(\QTCP)$ is discrete
in $\partial X$. By Lemma~\ref{lemma:break-into-two},
\[
\mR(\QTC)\subseteq \mR(\vQ)\cup \mR(\QTCP).
\]
The two root sets on the right are closed in their respective domains by
Lemma~\ref{lemma:break-into-two}. Assumption~\ref{ass:kkq} places them inside
compact sets. Lemma~\ref{lemma:closed-discrete-finite} therefore implies that
both $\mR(\vQ)$ and $\mR(\QTCP)$ are finite. Hence $\mR(\QTC)$ is finite.
\end{proof}

\phantomsection\label{proof:e-submersion-finite-attractors}
\propESubmersionFiniteAttractors*
\begin{proof}[Proof of Proposition~\ref{prop:e-submersion-finite-attractors}]
The hypotheses of Proposition~\ref{prop:e-submersion-finite-attractors} are
exactly the hypotheses of Corollary~\ref{cor:prevalent-finite-roots}. Hence,
for prevalent $\vQ\in\mathcal Q$, the projected dynamics $\QTC$ has finitely
many roots.

By Lemma~\ref{lemma:containment},
\[
g_{\vQ}(X)\subseteq \mR(\QTC)\cup\{\dagger\}.
\]
Thus $g_{\vQ}(X)$ is contained in a finite set, so $|g_{\vQ}(X)|<\infty$.
Therefore, for prevalent $\vQ\in\mathcal Q$, the induced dynamic solver has
finitely many reachable terminal outcomes.
\end{proof}

\section{Bayesian Statistical Arguments}

\subsection{Basin-Size Concentration}
This subsection proves the auxiliary bounds used in Section~\ref{sec:inference}
to pass from repeated convergence to posterior concentration of the observed
basin size. Throughout this subsection,
$A_\varepsilon:=\{s_{\vx^*}\le 1-\varepsilon\}$.

\begin{lemma}[Bayes denominator bound]\label{lem:eta_bound}
Let $0 < \eta,\varepsilon < 1$ and $\Pi(A_\eta) < 1$. Then
\[
\Pi(A_\varepsilon \mid H_n(\vx^*))
\le
\frac{(1-\varepsilon)^{n}\,\Pi(A_\varepsilon)}
{(1-\eta)^{n}\,(1-\Pi(A_\eta))}.
\]
\end{lemma}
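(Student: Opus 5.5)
The plan is to write the posterior via Bayes' rule, using that the likelihood of $H_n(\vx^*)$ given $s_{\vx^*}=s$ is $s^n$:
\[
\Pi(A_\varepsilon\mid H_n(\vx^*))=\frac{\int_{A_\varepsilon} s^n\,\Pi(ds)}{\int_{[0,1]} s^n\,\Pi(ds)}.
\]
We then bound the numerator from above and the denominator from below separately. This is legitimate once we check the denominator is positive, which the lower bound below provides.

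\emph{Numerator.} On $A_\varepsilon=\{s\le 1-\varepsilon\}$ we have $s^n\le(1-\varepsilon)^n$ because $s\ge 0$. Hence
\[
\int_{A_\varepsilon} s^n\,\Pi(ds)\le (1-\varepsilon)^n\,\Pi(A_\varepsilon).
\]

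\emph{Denominator.} The integrand is nonnegative, so we may drop the contribution from $A_\eta$ and keep only its complement $A_\eta^c=\{s>1-\eta\}$. There $s^n\ge(1-\eta)^n$, so
\[
\int_{[0,1]} s^n\,\Pi(ds)\ge \int_{A_\eta^c} s^n\,\Pi(ds)\ge (1-\eta)^n\,\bigl(1-\Pi(A_\eta)\bigr).
\]
This lower bound is strictly positive because $\eta<1$ and, by hypothesis, $\Pi(A_\eta)<1$. In particular $\PP(H_n(\vx^*))>0$, so the conditional probability is well defined.

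Dividing the numerator bound by the denominator bound gives the claim. There is no real obstacle. The only points needing care are the positivity of the marginal likelihood, which is exactly what the hypothesis $\Pi(A_\eta)<1$ supplies, and the remark that no ordering between $\eta$ and $\varepsilon$ is required for the inequality itself. The condition $\eta<\varepsilon$ matters only later, when this bound is used to obtain exponential decay in Proposition~\ref{prop:eta-bound}.
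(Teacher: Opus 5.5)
Your proof is correct and matches the paper's argument: bound $\PP(H_n(\vx^*)\cap A_\varepsilon)$ above by $(1-\varepsilon)^n\Pi(A_\varepsilon)$, and bound $\PP(H_n(\vx^*))$ below by restricting to $A_\eta^c$. Your remark that this lower bound also shows $\PP(H_n(\vx^*))>0$, so the conditional probability is well defined, is a small point the paper leaves implicit.
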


\begin{proof}
By definition of $A_\varepsilon$, on $A_\varepsilon$ we have
$s_{\vx^*}\le 1-\varepsilon$, so
\[
\PP(H_n(\vx^*)\cap A_\varepsilon)
\le (1-\varepsilon)^n\Pi(A_\varepsilon).
\]
Also, on $A_\eta^c$ we have $s_{\vx^*}>1-\eta$, hence
\[
\begin{aligned}
\PP(H_n(\vx^*))
&\ge \PP(H_n(\vx^*)\cap A_\eta^c)\\
&\ge (1-\eta)^n(1-\Pi(A_\eta)).
\end{aligned}
\]
Substituting these two bounds into
\[
\Pi(A_\varepsilon\mid H_n(\vx^*))
=\frac{\PP(H_n(\vx^*)\cap A_\varepsilon)}{\PP(H_n(\vx^*))}
\]
gives the result.
\end{proof}

\begin{lemma}[Support near one]\label{lem:supp_tail}
If $1 \in \operatorname{supp}(\Pi)$, then for every $\varepsilon\in(0,1)$
there exists $\eta\in(0,\varepsilon)$ such that $\Pi(A_\eta)<1$.
\end{lemma}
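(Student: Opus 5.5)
The plan is to unpack the definition of the support of a measure on $[0,1]$ and translate the conclusion into a statement about the prior mass near $1$. Recall that $1\in\operatorname{supp}(\Pi)$ means that every open neighborhood of $1$ in $[0,1]$ has positive $\Pi$-mass. In particular, for each $\eta\in(0,1)$ the relatively open set $(1-\eta,1]$ is such a neighborhood, so
\[
\Pi\bigl(s_{\vx^*}>1-\eta\bigr)>0 .
\]
Since $A_\eta=\{s_{\vx^*}\le 1-\eta\}$ is exactly the complement of $\{s_{\vx^*}>1-\eta\}$, this is the same as $1-\Pi(A_\eta)>0$, that is, $\Pi(A_\eta)<1$.

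The steps, in order, are as follows. First, fix $\varepsilon\in(0,1)$ and choose any $\eta\in(0,\varepsilon)$, for concreteness $\eta=\varepsilon/2$. Second, observe that $(1-\eta,1]$ is open in the relative topology of $[0,1]$ and contains $1$. Third, invoke the support hypothesis to conclude that this set has positive mass. Fourth, pass to complements to obtain $\Pi(A_\eta)<1$.

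In fact this shows more: every $\eta\in(0,1)$ works, not just some $\eta\in(0,\varepsilon)$. The restriction $\eta<\varepsilon$ is only needed later, to make the ratio $(1-\varepsilon)/(1-\eta)$ in Lemma~\ref{lem:eta_bound} strictly less than one.

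I do not expect a real obstacle. The only point needing care is the topological one. The support should be understood relative to $[0,1]$, where $s_{\vx^*}$ lives, so that half-open intervals $(1-\eta,1]$ count as neighborhoods of $1$. If one instead works with the support of $\Pi$ viewed as a measure on $\R$, the argument is unchanged: the open set $(1-\eta,1+\eta)$ has positive mass, and because $\Pi$ puts no mass above $1$, that mass sits in $(1-\eta,1]$.
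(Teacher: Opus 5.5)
Your proposal is correct and matches the paper's proof essentially line for line: pick any $\eta\in(0,\varepsilon)$, use $1\in\operatorname{supp}(\Pi)$ to get $\Pi((1-\eta,1])>0$, and pass to the complement $A_\eta^c$ to conclude $\Pi(A_\eta)<1$. Your remark on the topology of $[0,1]$ versus $\R$ is a harmless clarification that the paper leaves implicit.
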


\begin{proof}
Choose any $\eta\in(0,\varepsilon)$. Since $1\in\operatorname{supp}(\Pi)$,
we have $\Pi((1-\eta,1])>0$. Since
$(1-\eta,1]\subseteq A_\eta^c$, it follows that
$\Pi(A_\eta^c)>0$, hence $\Pi(A_\eta)<1$.
\end{proof}

\phantomsection\label{proof:eta-bound}
\propEtaBound*
\begin{proof}[Proof of Proposition~\ref{prop:eta-bound}]
By Lemma~\ref{lem:supp_tail}, choose $\eta\in(0,\varepsilon)$ with
$\Pi(A_\eta)<1$. Lemma~\ref{lem:eta_bound} then gives
\[
\Pi(A_\varepsilon\mid H_n(\vx^*))
\le
\frac{(1-\varepsilon)^n\Pi(A_\varepsilon)}
{(1-\eta)^n(1-\Pi(A_\eta))}.
\]
Since $\eta<\varepsilon$, the ratio $(1-\varepsilon)/(1-\eta)$ is strictly
less than one, so the right-hand side decays exponentially in $n$.
\end{proof}

\propBasinPosteriorSize*
\begin{proof}\label{proof:basin-posterior-size}
We start from Lemma~\ref{lem:eta_bound}, which states that for any $\eta\in(0,1)$ with
$\Pi(A_\eta)<1$,
\begin{equation}\label{eq:eta_lemma}
\Pi(A_\varepsilon\mid H_n(\vx^*))
\;\le\;
\frac{(1-\varepsilon)^{n}\,\Pi(A_\varepsilon)}{(1-\eta)^{n}\,(1-\Pi(A_\eta))}.
\end{equation}

We lower bound the denominator. Since $A_\eta=\{s_{\vx^*}\le 1-\eta\}$, we have
\[
\begin{aligned}
1-\Pi(A_\eta)
&=\Pi(s_{\vx^*}>1-\eta)\\
&=\int_{1-\eta}^{1} f_s(s)\,ds.
\end{aligned}
\]
For $\eta\in(0,\delta)$, Assumption~\ref{ass:poly_density} yields
\[
\begin{aligned}
1-\Pi(A_\eta)
&\ge \int_{1-\eta}^{1} c(1-s)^{\kappa}\,ds\\
&=c\int_{0}^{\eta} u^{\kappa}\,du\\
&=\frac{c}{\kappa+1}\,\eta^{\kappa+1}.
\end{aligned}
\]
Plugging this into \eqref{eq:eta_lemma} gives, for $\eta\in(0,\delta)$,
\begin{equation}\label{eq:pre_choice_eta}
\Pi(A_\varepsilon\mid H_n(\vx^*))
\;\le\;
\frac{\kappa+1}{c}\,
\frac{(1-\varepsilon)^{n}\,\Pi(A_\varepsilon)}{(1-\eta)^{n}\,\eta^{\kappa+1}}
\end{equation}

Now choose $\eta=1/(n)$. For all $n$ such that $1/(n)<\delta$, we can apply
\eqref{eq:pre_choice_eta}. As $(1-1/n)^n$ is increasing, we can lower-bound it by its value for n=2.
\[
\begin{aligned}
\left(1-\frac{1}{n}\right)^{n}
&\ge \left(1-\frac{1}{2}\right)^{2}
=\frac{1}{4},\\
\text{equivalently}\qquad
\left(1-\frac{1}{n}\right)^{-n}
&\le 4.
\end{aligned}
\]
we obtain from \eqref{eq:pre_choice_eta}
\[
\begin{aligned}
\Pi(A_\varepsilon\mid H_n(\vx^*))
&\le \frac{\kappa+1}{c}\,
\left(1-\frac{1}{n}\right)^{-n}\\
&\qquad{}\cdot n^{\kappa+1}(1-\varepsilon)^{n}\Pi(A_\varepsilon)\\
&\le \frac{4(\kappa+1)}{c}\,
n^{\kappa+1}(1-\varepsilon)^{n}\Pi(A_\varepsilon).
\end{aligned}
\]
Finally, since $(1-\varepsilon)^{n} \le e^{-(n)\varepsilon}$ for $\varepsilon\in(0,1)$,
the second inequality in \eqref{eq:drop_eta_poly_bound} follows.
\end{proof}

\propBetaPrior*
\phantomsection\label{proof:beta_prior}
\begin{proof}[Proof of Proposition~\ref{prop:beta_prior}]
Under the $\operatorname{Beta}(\alpha,\beta)$ prior, the prior density of $s_{\vx^*}$ is
$p(s)\propto s^{\alpha-1}(1-s)^{\beta-1}$ on $(0,1)$.
The likelihood of $H_n$ given $s_{\vx^*}=s$ is $\PP(H_n\mid s)=s^n$.
Therefore the posterior density satisfies
$p(s\mid H_n)\propto s^n p(s)\propto s^{\alpha+n-1}(1-s)^{\beta-1}$, i.e.\
$s_{\vx^*}\mid H_n\sim\operatorname{Beta}(\alpha+n,\beta)$ with density
\[
\begin{aligned}
p(s\mid H_n)
&=\frac{1}{B(\alpha+n,\beta)}\,
s^{\alpha+n-1}(1-s)^{\beta-1},\\
&\hspace{8em}s\in(0,1).
\end{aligned}
\]
Hence
\[
\begin{aligned}
&\Pi\!\left(s_{\vx^*}\le 1-\varepsilon\mid H_n\right)\\
&\qquad=\int_{0}^{1-\varepsilon}
\frac{s^{\alpha+n-1}(1-s)^{\beta-1}}
{B(\alpha+n,\beta)}\,ds.
\end{aligned}
\]

Make the change of variables $u=1-s$. Then
\[
\begin{aligned}
&\Pi\!\left(s_{\vx^*}\le 1-\varepsilon\mid H_n\right)\\
&\qquad=\int_{\varepsilon}^{1}
\frac{u^{\beta-1}(1-u)^{\alpha+n-1}}
{B(\beta,\alpha+n)}\,du.
\end{aligned}
\]
For $u\in[\varepsilon,1]$ we have $(1-u)^{\alpha+n-1}\le (1-\varepsilon)^{\alpha+n-1}$, so
\[
\begin{aligned}
&\Pi\!\left(s_{\vx^*}\le 1-\varepsilon\mid H_n\right)\\
&\qquad\le \frac{(1-\varepsilon)^{\alpha+n-1}}
{B(\beta,\alpha+n)}
\int_{\varepsilon}^{1}u^{\beta-1}\,du.
\end{aligned}
\]
Evaluating the integral yields
\[
\begin{aligned}
&\Pi\!\left(s_{\vx^*}\le 1-\varepsilon\mid H_n\right)\\
&\qquad\le (1-\varepsilon)^{\alpha+n-1}\,
\frac{1-\varepsilon^{\beta}}{\beta\,B(\beta,\alpha+n)}.
\end{aligned}
\]

Using the identity
\[
\frac{1}{\beta\,B(\beta,\alpha+n)}
=\frac{\Gamma(\alpha+n+\beta)}{\Gamma(\alpha+n)\Gamma(\beta+1)},
\]
and Gautschi's inequality \citep{gautschi1959some} we may bound the Gamma ratio polynomially: for fixed
\(\beta>0\) there exists a constant \(C_\beta>0\) such that for all \(n\ge 1\)
\[
\frac{\Gamma(\alpha+n+\beta)}{\Gamma(\alpha+n)}\le C_\beta(\alpha+n)^{\beta}.
\]
Hence we obtain
\[
\begin{aligned}
&\Pi\!\left(s_{\vx^*}\le 1-\varepsilon\mid H_n\right)\\
&\qquad\le \frac{(\alpha+n+\beta)^{\beta}}{\Gamma(\beta+1)}\,
(1-\varepsilon)^{\alpha+n-1}.
\end{aligned}
\]
Finally, since $(1-\varepsilon)^{k}\le e^{-k\varepsilon}$ for all $k>0$,
the exponential bound in \eqref{eq:basin_tail} follows.
\end{proof}

\subsection{Spike-and-Slab Prior}
This subsection collects the moment bound used to prove the spike-and-slab
posterior rate in Proposition~\ref{prop:spike_slab_rate}.
Under the spike-and-slab prior, write
\[
\Pi = p\delta_1+(1-p)W,\qquad p>0,
\]
and let
\[
q_n:=\EE_W[s_{\vx^*}^n].
\]
Then Bayes' rule gives
\[
\Pi(s_{\vx^*}=1\mid H_n(\vx^*))
=
\frac{p}{p+(1-p)q_n}.
\]
The remaining task is to bound $q_n$ under a tail condition on $W$ near $1$.

\begin{lemma}[Moment bound from near--$1$ tail]\label{lem:master_moment}
Let $S$ be a random variable taking values in $[0,1]$ with law $W$, and define
\[
F(u) := W([1-u,1]) = \PP(S \ge 1-u)
\]
where $u \in [0,1]$. Then for any $\delta \in (0,1)$ and any integer $n \ge 1$,
\[
\EE_W[S^n]
\;\le\;
n \int_0^\delta e^{-(n-1)u} F(u)\,du
\;+\;
(1-\delta)^n.
\]
\end{lemma}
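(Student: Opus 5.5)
We plan to prove Lemma~\ref{lem:master_moment} through the layer-cake (tail-integral) representation of $\EE_W[S^n]$ written in the variable $u=1-s$. The key steps are an exact representation, a split of the $u$-integral at $\delta$, and two elementary bounds.

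First, the representation. Since $S^n\in[0,1]$, we have $\EE_W[S^n]=\int_0^1 \PP(S^n> t)\,dt$. Substituting $t=(1-u)^n$, so that $dt=-n(1-u)^{n-1}du$ and $u$ runs from $0$ to $1$, gives
\[
\EE_W[S^n]=n\int_0^1 (1-u)^{n-1}\,\PP\big(S>1-u\big)\,du .
\]
The event $\{S^n>(1-u)^n\}$ equals $\{S>1-u\}$ for $u\in[0,1)$. Moreover, $\PP(S>1-u)\le \PP(S\ge 1-u)=F(u)$, so we may replace the strict inequality by $F(u)$ and retain an upper bound. Equivalently, one can write $S^n=\int_0^1 \mathbb{I}\{t<S^n\}\,dt$, apply Fubini, and change variables. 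This route avoids any subtlety about atoms of $W$, because we only ever need an inequality.

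Second, we split the integral at $\delta$. On $[0,\delta]$ we use $1-u\le e^{-u}$, which gives $(1-u)^{n-1}\le e^{-(n-1)u}$. This produces the first term $n\int_0^\delta e^{-(n-1)u}F(u)\,du$ exactly.

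On $[\delta,1]$ we do not bound $F$ by $1$ and then integrate crudely, since that would yield $n(1-\delta)^{n-1}$ rather than $(1-\delta)^n$. Instead we bound $\PP(S>1-u)\le 1$ and integrate exactly:
\[
n\int_\delta^1 (1-u)^{n-1}\,du=(1-\delta)^n .
\]
This is precisely the second term. Adding the two pieces gives the claimed inequality.

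The main obstacle is the first step: justifying the tail representation cleanly and handling the strict-versus-nonstrict inequality and the endpoint $u=1$, where $t=0$. We would write the argument via Fubini on the indicator. Any boundary points form a Lebesgue-null set in $u$ and do not affect the integral, so the inequality $\PP(S>1-u)\le F(u)$ suffices throughout. The remaining steps are elementary calculus. The bound is then used in Proposition~\ref{prop:spike_slab_rate} by inserting $F(u)\le Cu^\gamma$ and applying the Gamma integral $\int_0^\infty u^\gamma e^{-(n-1)u}\,du=\Gamma(\gamma+1)(n-1)^{-\gamma-1}$.
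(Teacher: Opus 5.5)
Your proposal is correct and follows essentially the same route as the paper. Both arguments use the layer-cake representation $\EE_W[S^n]=n\int_0^1(1-u)^{n-1}F(u)\,du$, apply the bound $(1-u)^{n-1}\le e^{-(n-1)u}$ only on $[0,\delta]$, and evaluate $n\int_\delta^1(1-u)^{n-1}\,du=(1-\delta)^n$ exactly after bounding the tail probability by $1$. Your use of the strict tail $\PP(S>1-u)\le F(u)$ in place of the paper's non-strict $\PP(S^n\ge t)$ is a harmless variant, since the two tails differ for at most countably many $u$.
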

\phantomsection\label{proof:master_moment}
\begin{proof}[Proof of Lemma~\ref{lem:master_moment}]
Since $0 \le S \le 1$, we may write
\[
\EE[S^n]
=\int_0^1 \PP(S^n \ge t)\,dt.
\]
Because the map $x \mapsto x^n$ is increasing on $[0,1]$, we have
\[
\PP(S^n \ge t) = \PP(S \ge t^{1/n}).
\]
Making the change of variables $t = u^n$ (so $dt = n u^{n-1} du$) yields
\[
\EE[S^n]
=\int_0^1 n u^{n-1} \PP(S \ge u)\,du.
\]
Now set $u = 1 - r$ to obtain
\[
\begin{aligned}
\EE[S^n]
&=n \int_0^1 (1-r)^{n-1}
\PP(S \ge 1-r)\,dr\\
&=n \int_0^1 (1-r)^{n-1}F(r)\,dr.
\end{aligned}
\]
Using the elementary bound $(1-r)^{n-1} \le e^{-(n-1)r}$ for $r \in [0,1]$ and splitting the
integral at $\delta \in (0,1)$, we obtain
\[
\begin{aligned}
\EE[S^n]
&\le n \int_0^\delta e^{-(n-1)r}F(r)\,dr\\
&\qquad{}+n \int_\delta^1 (1-r)^{n-1}dr.
\end{aligned}
\]
The second term can be computed explicitly:
\[
n \int_\delta^1 (1-r)^{n-1} dr = (1-\delta)^n.
\]
Combining the bounds completes the proof.
\end{proof}

\begin{corollary}\label{cor:poly_tail}
Suppose there exist constants $C>0$, $\gamma>0$, and $\delta>0$ such that
\[
F(u) \le C u^\gamma \qquad \forall\, u \in (0,\delta].
\]
Then there exists $C'>0$ such that for all $n$ large enough,
\[
\EE_W[S^n] \le C' n^{-\gamma}.
\]
\end{corollary}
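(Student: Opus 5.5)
Starting from Lemma~\ref{lem:master_moment}, we have, for any $\delta'\in(0,1)$,
\[
\EE_W[S^n]\le n\int_0^{\delta'} e^{-(n-1)u}F(u)\,du+(1-\delta')^n .
\]
The plan is to take $\delta':=\min(\delta,1/2)$, so that $\delta'<1$ and the tail hypothesis $F(u)\le Cu^\gamma$ holds on the whole range of integration. This turns the first term into
\[
nC\int_0^{\delta'}u^\gamma e^{-(n-1)u}\,du .
\]
The second term, $(1-\delta')^n$, decays geometrically, so it is $o(n^{-\gamma})$.

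For the integral, assume $n\ge2$ and extend the range to $[0,\infty)$. The Gamma-integral identity then gives
\[
\int_0^\infty u^\gamma e^{-(n-1)u}\,du=\frac{\Gamma(\gamma+1)}{(n-1)^{\gamma+1}} ,
\]
so the first term is at most
\[
C\,\Gamma(\gamma+1)\,\frac{n}{(n-1)^{\gamma+1}}\le C\,\Gamma(\gamma+1)\,2^{\gamma+1}\,n^{-\gamma},
\]
where we use $n-1\ge n/2$ for $n\ge 2$.

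It remains to absorb the geometric term. Since $n^{\gamma}(1-\delta')^n\to0$, there is an $n_0$ beyond which $(1-\delta')^n\le n^{-\gamma}$. Taking $C':=C\,\Gamma(\gamma+1)\,2^{\gamma+1}+1$ then gives $\EE_W[S^n]\le C'n^{-\gamma}$ for all $n\ge\max(2,n_0)$.

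None of these steps is a real obstacle. The only points needing care are two. First, $\delta$ must be shrunk below $1$ so that Lemma~\ref{lem:master_moment} applies; this is harmless because the tail bound continues to hold on the smaller interval. Second, the factor $n/(n-1)^{\gamma+1}$ has to be handled cleanly, which is what the restriction $n\ge2$ is for.
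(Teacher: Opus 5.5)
Your proof is correct and follows the paper's argument: apply Lemma~\ref{lem:master_moment}, insert the tail bound, extend to the full Gamma integral, use $n/(n-1)^{\gamma+1}\le 2^{\gamma+1}n^{-\gamma}$ for $n\ge 2$, and absorb the geometric term into $n^{-\gamma}$ for large $n$. Replacing $\delta$ by $\min(\delta,1/2)$ is harmless extra care that the paper omits, since the lemma is stated only for $\delta\in(0,1)$.
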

\phantomsection\label{proof:cor_poly_tail}
\begin{proof}[Proof of Corollary~\ref{cor:poly_tail}]
By Lemma~\ref{lem:master_moment},
\[
\begin{aligned}
\EE_W[S^n]
&\le n\int_0^\delta e^{-(n-1)u}F(u)\,du\\
&\qquad{}+(1-\delta)^n.
\end{aligned}
\]
Using the assumed tail bound $F(u)\le C u^\gamma$ on $(0,\delta]$, we obtain
\[
\begin{aligned}
\EE_W[S^n]
&\le Cn\int_0^\delta e^{-(n-1)u}u^\gamma\,du\\
&\qquad{}+(1-\delta)^n\\
&\le Cn\int_0^\infty e^{-(n-1)u}u^\gamma\,du\\
&\qquad{}+(1-\delta)^n.
\end{aligned}
\]
The remaining integral is the Gamma integral:
\[
\begin{aligned}
&\int_0^\infty e^{-(n-1)u}u^\gamma\,du\\
&\qquad=(n-1)^{-(\gamma+1)}
\int_0^\infty e^{-v}v^\gamma\,dv\\
&\qquad=(n-1)^{-(\gamma+1)}\Gamma(\gamma+1),
\end{aligned}
\]
where we used the change of variables $v=(n-1)u$. Hence
\[
\begin{aligned}
\EE_W[S^n]
&\le C\,\Gamma(\gamma+1)
\frac{n}{(n-1)^{\gamma+1}}\\
&\qquad{}+(1-\delta)^n.
\end{aligned}
\]
For all $n\ge 2$, $\frac{n}{(n-1)^{\gamma+1}}\le 2^{\gamma+1}n^{-\gamma}$, and also
$(1-\delta)^n\le e^{-\delta n}\le n^{-\gamma}$ for all $n$ large enough. Therefore there exists
a constant $C'>0$ such that $\EE_W[S^n]\le C' n^{-\gamma}$ for all sufficiently large $n$.
\end{proof}

\phantomsection\label{proof:spike_slab_rate}
\propSpikeSlabRate*
\begin{proof}[Proof of Proposition~\ref{prop:spike_slab_rate}]
Let $q_n:=\EE_W[s_{\vx^*}^n]$. By Corollary~\ref{cor:poly_tail}, the stated
tail condition on $W$ implies $q_n=O(n^{-\gamma})$. From the posterior formula
\eqref{eq:z=1_posterior},
\[
\begin{aligned}
1-\Pi(s_{\vx^*}=1\mid H_n(\vx^*))
&=1-\frac{p}{p+(1-p)q_n}\\
&=\frac{(1-p)q_n}{p+(1-p)q_n}\\
&\le \frac{1-p}{p}\,q_n,
\end{aligned}
\]
where $p>0$ by the spike-and-slab prior definition. Hence
$1-\Pi(s_{\vx^*}=1\mid H_n(\vx^*))=O(n^{-\gamma})$.
\end{proof}

\subsection{Mixture of Finite Models}
This subsection collects the MFM bounds used to prove
Theorem~\ref{thm:K1_tight_rate_fixed}.

\begin{lemma}[Two-sided bounds for $L_k(n)$ (fixed $\vx^*$)]\label{lem:Lk_twosided_fixed}
Fix $\alpha>0$ as part of the data-generating model $\PP$, and let
$\delta_k=\alpha(1-\tfrac1k)$. For each $k\ge 1$ and $n\ge 1$, define
\[
L_k(n)\;:=\;\PP(H_n(\vx^*)\mid K=k).
\]
This is well defined: under the symmetric prior, $L_k(n)$ does not depend on
the choice of $\vx^*$. Then for all $n\ge 1$,
\begin{align}
\frac1k\left(\frac{\alpha/k}{\,n-1+\alpha/k\,}\right)^{\delta_k}
\;\le\;
L_k(n)
\;\le\;
\frac1k\left(\frac{1+\alpha}{\,n+\alpha\,}\right)^{\delta_k}.
\label{eq:Lk_twosided_fixed}
\end{align}
\end{lemma}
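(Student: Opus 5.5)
Conditional on $K=k$, the basin vector is $s\sim\operatorname{Dirichlet}(\alpha/k,\dots,\alpha/k)$, and the observed RTO $\vx^*$ must be one of the $k$ labelled components. The plan is to compute $L_k(n)$ in closed form and then bound the Gamma ratios.

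\textbf{Step 1: closed form for $L_k(n)$.} I read the event $H_n(\vx^*)$ as ``all $n$ draws land in one specified component''. Averaging over which of the $k$ exchangeable labels is $\vx^*$ gives the factor $\frac1k$ in the statement, so
\[
L_k(n)=\frac1k\,\EE[s_1^n].
\]
By aggregation, $s_1\sim\operatorname{Beta}(a,b)$ with $a=\alpha/k$ and $b=\alpha-\alpha/k=\delta_k$. Hence
\[
\EE[s_1^n]=\frac{\Gamma(a+n)\,\Gamma(\alpha)}{\Gamma(a)\,\Gamma(\alpha+n)}
=\prod_{j=0}^{n-1}\frac{a+j}{a+b+j}.
\]
For $k=1$ we have $b=0$, so the product equals $1$ and both bounds hold trivially. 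From now on assume $b>0$.

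\textbf{Step 2: upper bound.} Write $R(x)=\Gamma(x)/\Gamma(x+b)$, so that $\EE[s_1^n]=R(a+n)/R(a)$. Gautschi/Wendel-type inequalities (log-convexity of $\Gamma$) give, for $b\in(0,1]$,
\[
x^{-b}\le R(x)\le (x+b)^{-b}\cdot(\text{sharp form}).
\]
The form I need is $R(x+n)/R(x)\le\bigl((x+b)/(x+n)\bigr)^{b}$, up to the precise offsets. Note $a+b=\alpha$, so $\bigl(\frac{1+\alpha}{n+\alpha}\bigr)^{\delta_k}$ suggests an alternative route. Use $1-\frac{b}{a+b+j}\le\exp\!\bigl(-\frac{b}{\alpha+j}\bigr)$ to get
\[
\prod_{j=0}^{n-1}\Bigl(1-\frac{b}{\alpha+j}\Bigr)\le\exp\!\Bigl(-b\sum_{j=0}^{n-1}\frac{1}{\alpha+j}\Bigr),
\]
and then compare the sum with an integral:
\[
\sum_{j=0}^{n-1}\frac{1}{\alpha+j}\ge\int_0^{n}\frac{dx}{\alpha+x}.
\]
This yields $\bigl(\frac{\alpha}{n+\alpha}\bigr)^{b}$, which is even stronger than $\bigl(\frac{1+\alpha}{n+\alpha}\bigr)^{b}$ and valid for every $b>0$. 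So the upper bound follows without Gautschi.

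\textbf{Step 3: lower bound.} Write each factor as
\[
\frac{a+j}{a+b+j}=\Bigl(1+\frac{b}{a+j}\Bigr)^{-1}\ge\exp\!\Bigl(-\frac{b}{a+j}\Bigr).
\]
Now bound the sum from above: the $j=0$ term is $1/a$, and for the rest
\[
\sum_{j=1}^{n-1}\frac{1}{a+j}\le\int_0^{n-1}\frac{dx}{a+x}=\log\frac{a+n-1}{a}.
\]
This gives $\EE[s_1^n]\ge e^{-b/a}\bigl(\frac{a}{n-1+a}\bigr)^{b}$. The stray factor $e^{-b/a}=e^{-(k-1)}$ is the obstacle, because the target has no such constant.

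To remove it, I would handle the $j=0$ term multiplicatively instead of through the exponential bound. The goal is to show that
\[
\frac{a}{a+b}\prod_{j\ge1}(\cdots)\ge\Bigl(\frac{a}{a+n-1}\Bigr)^{b}.
\]
The approach is an induction on $n$, checking at each step that the ratio of successive right-hand sides,
\[
\Bigl(\frac{a+n-1}{a+n}\Bigr)^{b},
\]
is at most $\frac{a+n}{a+b+n}$. That inequality, $(1+\frac{1}{a+n-1})^{-b}\le(1+\frac{b}{a+n})^{-1}$, follows from Bernoulli's inequality when $b\le1$, but may fail when $b>1$.

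This base-case and large-$b$ regime is the hardest step. If the induction fails there, I would fall back on the Wendel inequality $\Gamma(x+b)/\Gamma(x)\le x^{b}$ for $b\le1$ combined with the recursion $\Gamma(x+1)=x\Gamma(x)$ for larger $b$, accepting possible adjustments to the constants in the stated bounds.
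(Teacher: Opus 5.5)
\textbf{Verdict: there is a genuine gap, and it starts in Step 1.} You identify $\vx^*$ with label $1$. Conditional on $K=k$ and on the basin vector, the event $H_n(\vx^*)$ then has probability exactly $s_1^n$, so $L_k(n)=\EE[s_1^n]$. Averaging over which exchangeable label $\vx^*$ carries gives $\frac1k\sum_{i=1}^k\EE[s_i^n]=\EE[s_1^n]$, not $\frac1k\EE[s_1^n]$. The $\frac1k$ in the statement is not an averaging factor. It is the $j=0$ term of the product $\EE[s_1^n]=\frac{a}{\alpha}\prod_{j=1}^{n-1}\frac{a+j}{\alpha+j}$, where $a/\alpha=1/k$. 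This error breaks both bounds.

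\textbf{Upper bound.} In Step 2 you apply $1-t\le e^{-t}$ to the $j=0$ factor $1-\delta_k/\alpha=1/k$, which discards exactly the constant you need. What you actually prove is $L_k(n)\le(\frac{\alpha}{n+\alpha})^{\delta_k}$. This exceeds the claimed $\frac1k(\frac{1+\alpha}{n+\alpha})^{\delta_k}$ by the factor $k(1+1/\alpha)^{-\delta_k}>k\,e^{-(1-1/k)}>1$ for every $k\ge2$. So it is weaker, not ``even stronger''.

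\textbf{Lower bound.} In Step 3 the spurious prefactor turns your target into $\EE[s_1^n]\ge(\frac{a}{a+n-1})^{\delta_k}$, which is false: at $n=1$ the left side is $\frac1k$ and the right side is $1$. That is why your induction has no valid base case. No Wendel or Gautschi fallback can repair a false inequality, and conceding ``adjustments to the constants'' means the stated lemma is not proved.

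\textbf{The repair.} Keep the $j=0$ factor exactly and use your exponential estimates only for $j\ge1$. For the upper bound, $\prod_{j=1}^{n-1}(1-\frac{\delta_k}{\alpha+j})\le\exp(-\delta_k\sum_{j=1}^{n-1}\frac{1}{\alpha+j})\le(\frac{1+\alpha}{n+\alpha})^{\delta_k}$, by comparison with $\int_1^n\frac{dx}{x+\alpha}$. For the lower bound, $\prod_{j=1}^{n-1}(1+\frac{\delta_k}{a+j})^{-1}\ge\exp(-\delta_k\sum_{j=1}^{n-1}\frac{1}{a+j})\ge(\frac{a}{a+n-1})^{\delta_k}$, by your own comparison with $\int_0^{n-1}\frac{dx}{x+a}$. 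Multiplying both by $\frac1k$ gives the lemma. This is precisely the paper's proof; its inequality $\log(1-t)\ge -t/(1-t)$ is your $(1+x)^{-1}\ge e^{-x}$.

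\textbf{Your induction also works once corrected.} Use the base case $L_k(1)=\frac1k$ and the target $\frac1k(\frac{a}{a+n-1})^{\delta_k}$. The step inequality $(1+\frac{1}{a+n-1})^{b}\ge 1+\frac{b}{a+n}$ holds for every $b>0$. Bernoulli gives it directly when $b\ge1$, and $(1+x)^b\ge e^{bx/(1+x)}\ge 1+\frac{bx}{1+x}$ with $\frac{x}{1+x}=\frac{1}{a+n}$ gives it for all $b>0$. So your worry that the step may fail for $b>1$ is backwards: the base case, not the step, was the obstruction.
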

\begin{proof}\label{proof:Lk_twosided_fixed}
Condition on $K=k$. Under the MFM prior, the basin-mass vector satisfies
\[
\begin{aligned}
\vbasin=(s_1,\dots,s_k)
&\sim\operatorname{Dirichlet}(a,\dots,a),\\
a&=\alpha/k,
\qquad \sum_{i=1}^k a=\alpha.
\end{aligned}
\]
Identifying $\vx^*$ with (without loss of generality) label $1$, conditional on $\vbasin$ we have
\[
\begin{aligned}
\PP(H_n(\vx^*)\mid \vbasin,K=k)
&=\PP(y_1=\cdots=y_n=1\mid \vbasin)\\
&=s_1^{n}.
\end{aligned}
\]
Taking expectation over $\vbasin$ gives
\[
L_k(n)=\PP(H_n(\vx^*)\mid K=k)=\EE[s_1^{n}].
\]
By the Dirichlet moment formula,
\[
\EE[s_1^{n}]
=\frac{\Gamma(\alpha)\Gamma(a+n)}{\Gamma(a)\Gamma(\alpha+n)}.
\]
Thus
\begin{equation}\label{eq:Lk_gamma_form_fixed}
L_k(n)
=\frac{\Gamma(\alpha)\Gamma(a+n)}{\Gamma(a)\Gamma(\alpha+n)}.
\end{equation}

\emph{Step 1: Gamma form to product form.}
Using $\Gamma(x+n)=\Gamma(x+1)\prod_{j=1}^{n-1}(x+j)$, we have
\[
\begin{aligned}
\Gamma(a+n)
&=\Gamma(a+1)\prod_{j=1}^{n-1}(a+j),\\
\Gamma(\alpha+n)
&=\Gamma(\alpha+1)\prod_{j=1}^{n-1}(\alpha+j).
\end{aligned}
\]
Substituting into \eqref{eq:Lk_gamma_form_fixed} yields
\[
L_k(n)
=\frac{\Gamma(\alpha)\Gamma(a+1)}{\Gamma(a)\Gamma(\alpha+1)}
\prod_{j=1}^{n-1}\frac{j+a}{j+\alpha}.
\]
Using $\Gamma(a+1)=a\Gamma(a)$ and $\Gamma(\alpha+1)=\alpha\Gamma(\alpha)$,
the prefactor simplifies to $\frac{a}{\alpha}=\frac1k$, hence
\begin{equation}\label{eq:Lk_product_form_fixed}
\begin{aligned}
L_k(n)
&=\frac1k\prod_{j=1}^{n-1}\frac{j+\alpha/k}{j+\alpha}\\
&=\frac1k\prod_{j=1}^{n-1}
\left(1-\frac{\delta_k}{j+\alpha}\right),\\
\delta_k&:=\alpha\Bigl(1-\frac1k\Bigr).
\end{aligned}
\end{equation}

\emph{Upper bound.}
Using $\log(1-t)\le -t$ for $t\in(0,1)$ and \eqref{eq:Lk_product_form_fixed},
\[
\begin{aligned}
\log L_k(n)
&=-\log k
+\sum_{j=1}^{n-1}
\log\!\left(1-\frac{\delta_k}{j+\alpha}\right)\\
&\le -\log k
-\delta_k\sum_{j=1}^{n-1}\frac{1}{j+\alpha}.
\end{aligned}
\]
Since $x\mapsto 1/(x+\alpha)$ is decreasing,
\[
\sum_{j=1}^{n-1}\frac{1}{j+\alpha}\ge \int_{1}^{n}\frac{dx}{x+\alpha}
=\log\!\left(\frac{n+\alpha}{1+\alpha}\right).
\]
Exponentiating gives
\[
L_k(n)\le \frac1k\left(\frac{1+\alpha}{n+\alpha}\right)^{\delta_k}.
\]

\emph{Lower bound.}
Using $\log(1-t)\ge -\frac{t}{1-t}$ for $t\in(0,1)$,
\[
\log\!\left(1-\frac{\delta_k}{j+\alpha}\right)
\ge -\frac{\delta_k}{j+\alpha-\delta_k}
=-\frac{\delta_k}{j+\alpha/k}.
\]
Summing and using \eqref{eq:Lk_product_form_fixed} yields
\[
\log L_k(n)\ge -\log k-\delta_k\sum_{j=1}^{n-1}\frac{1}{j+\alpha/k}.
\]
Since $x\mapsto 1/(x+\alpha/k)$ is decreasing,
\[
\begin{aligned}
\sum_{j=1}^{n-1}\frac{1}{j+\alpha/k}
&\le \int_{0}^{n-1}\frac{dx}{x+\alpha/k}\\
&=\log\!\left(\frac{n-1+\alpha/k}{\alpha/k}\right).
\end{aligned}
\]
Exponentiating gives
\[
L_k(n)\ge \frac1k\left(\frac{\alpha/k}{n-1+\alpha/k}\right)^{\delta_k}.
\]
\end{proof}

\begin{lemma}[Monotonicity of $L_k(n)$]\label{lem:Lk_monotone_fixed}
For fixed $\alpha>0$ and $n\ge 1$, the map $k\mapsto L_k(n)$ is decreasing on
$\{1,2,\dots\}$.
\end{lemma}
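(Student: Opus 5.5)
I will work directly from the product representation \eqref{eq:Lk_product_form_fixed} obtained in the proof of Lemma~\ref{lem:Lk_twosided_fixed}:
\[
L_k(n)=\frac1k\prod_{j=1}^{n-1}\frac{j+\alpha/k}{j+\alpha}.
\]
The plan is to show that every factor of this expression is nonincreasing in $k$. Since all factors are positive, the product is then nonincreasing in $k$.

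\textbf{Step 1 (factors).} The prefactor $1/k$ is strictly decreasing in $k$. For each fixed $j\ge1$, the numerator $j+\alpha/k$ is strictly decreasing in $k$ because $\alpha>0$. The denominator $j+\alpha$ does not depend on $k$. Hence each ratio $(j+\alpha/k)/(j+\alpha)$ is a positive, strictly decreasing function of $k$.

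\textbf{Step 2 (conclusion).} A product of positive decreasing functions is decreasing. Therefore $k\mapsto L_k(n)$ is strictly decreasing on $\{1,2,\dots\}$ for every $n\ge1$. When $n=1$ the product is empty and $L_k(1)=1/k$, which is consistent with the claim.

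An alternative route works from the Gamma form \eqref{eq:Lk_gamma_form_fixed}. Since $\Gamma(\alpha)/\Gamma(\alpha+n)$ is constant in $k$, it suffices to note that $a\mapsto \Gamma(a+n)/\Gamma(a)=\prod_{j=0}^{n-1}(a+j)$ is increasing in $a$ and that $a=\alpha/k$ decreases in $k$. This amounts to the same observation.

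There is no real obstacle here. The only point that needs care is relying on the exact product identity rather than on the two-sided bounds of Lemma~\ref{lem:Lk_twosided_fixed}. Those bounds are not sharp enough to compare adjacent values of $k$, so comparing the bounds would not establish monotonicity.
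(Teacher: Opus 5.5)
Your proof is correct and is essentially the paper's argument. The paper works with the Gamma form and shows $a\mapsto\Gamma(a+n)/\Gamma(a)$ is increasing via $\psi(a+n)-\psi(a)=\sum_{j=0}^{n-1}\frac{1}{a+j}>0$, which is just the logarithmic derivative of the finite product $\prod_{j=0}^{n-1}(a+j)$ that you read off factor by factor (your prefactor $1/k$ is the $j=0$ factor $a/\alpha$), so your version simply avoids the calculus and gives strict monotonicity directly.
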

\begin{proof}
Write $a_k=\alpha/k$. By the Dirichlet moment formula,
\[
L_k(n)=\frac{\Gamma(\alpha)\Gamma(a_k+n)}{\Gamma(a_k)\Gamma(\alpha+n)}.
\]
The map $a\mapsto \Gamma(a+n)/\Gamma(a)$ is increasing on $(0,\infty)$, since
\[
\begin{aligned}
\frac{d}{da}\log\!\left(\frac{\Gamma(a+n)}{\Gamma(a)}\right)
&=\psi(a+n)-\psi(a)\\
&=\sum_{j=0}^{n-1}\frac{1}{a+j}>0.
\end{aligned}
\]
Since $k\mapsto a_k$ is decreasing, $k\mapsto L_k(n)$ is decreasing.
\end{proof}

\thmKOneTightRateFixed*
\begin{proof}\label{proof:K1_tight_fixed}
Write $N_n:=\sum_{k\ge 2}\pi_k L_k(n)$, where $L_k(n)=\PP(H_n(\vx^*)\mid K=k)$ is as in
Lemma~\ref{lem:Lk_twosided_fixed}. Then
\[
\begin{aligned}
1-\PP(K=1\mid H_n(\vx^*))
&=\frac{N_n}{\pi_1+N_n}.
\end{aligned}
\]

\emph{Upper bound.}
Since $\pi_1+N_n\ge \pi_1$,
\[
\begin{aligned}
1-\PP(K=1\mid H_n(\vx^*))
&\le \frac{N_n}{\pi_1}\\
&=\sum_{k\ge 2}\frac{\pi_k}{\pi_1}L_k(n).
\end{aligned}
\]
By Lemma~\ref{lem:Lk_monotone_fixed}, $L_k(n)\le L_2(n)$ for all $k\ge 2$, so
\[
1-\PP(K=1\mid H_n(\vx^*))\le \frac{1-\pi_1}{\pi_1}\,L_2(n).
\]
Apply the upper bound on $L_2(n)$ from Lemma~\ref{lem:Lk_twosided_fixed} with $k=2$:
since $\delta_2=\alpha/2$,
\[
L_2(n)\le \frac12\left(\frac{1+\alpha}{n+\alpha}\right)^{\alpha/2}.
\]

\emph{Lower bound.}
Since $\pi_1+N_n\le \pi_1+\sum_{k\ge 2}\pi_k=1$, we have
\[
\begin{aligned}
1-\PP(K=1\mid H_n(\vx^*))
&=\frac{N_n}{\pi_1+N_n}\\
&\ge N_n\\
&\ge \pi_2 L_2(n).
\end{aligned}
\]
Apply the lower bound on $L_2(n)$ from Lemma~\ref{lem:Lk_twosided_fixed} with $k=2$:
\[
L_2(n)\ge \frac12\left(\frac{\alpha/2}{n-1+\alpha/2}\right)^{\alpha/2}.
\]

Finally, both bounds are of order $n^{-\alpha/2}$ as $n\to\infty$, hence
$1-\PP(K=1\mid H_n(\vx^*))=\Theta(n^{-\alpha/2})$.
\end{proof}

\section{Application}\label{appendix:application}

This appendix collects two constructions used in Section~\ref{sec:applications}: a smooth compact state space containing the relevant price box and a Banach structure on the parametrized class of dynamics.

\subsection{A smooth compact price domain}\label{appendix:application-smooth-domain}

Let $0<m<M<\infty$ be lower and upper bounds such that the equilibrium prices lie in $[m,M]^d\subset(0,\infty)^d$. Such bounds are supplied by \citet{skerlos2010fixed} for the mixed-logit pricing problem considered in Section~\ref{sec:applications}. We construct a compact smooth superset of this box that remains inside the positive orthant.

Fix an integer $p\ge 1$, let $c=(m+M)/2$, and define
\[
r=\frac{M-m}{2}\,d^{1/(2p)}.
\]
Set
\[
X_p=\left\{\vx\in\R^d:\;\sum_{i=1}^d \left(\frac{\vx_i-c}{r}\right)^{2p}\le 1\right\}.
\]
Then $X_p$ is compact and convex, and $[m,M]^d\subseteq X_p$. Indeed, if $\vx\in[m,M]^d$, then $|\vx_i-c|\le (M-m)/2$ for every $i$, so
\[
\begin{aligned}
\sum_{i=1}^d \left(\frac{\vx_i-c}{r}\right)^{2p}
&\le d\left(\frac{(M-m)/2}{r}\right)^{2p}\\
&=1.
\end{aligned}
\]

Moreover, $X_p$ has $C^\infty$ boundary. Its boundary is the level set of
\[
h(\vx)=\sum_{i=1}^d \left(\frac{\vx_i-c}{r}\right)^{2p},
\]
and $\nabla h(\vx)\ne 0$ on $h^{-1}(1)$ because $h(\vx)=1$ implies at least one coordinate satisfies $\vx_i\ne c$. Finally, for large enough $p$, $c-r>0$, since $d^{1/(2p)}\to 1$ as $p\to\infty$ and $c>(M-m)/2$. Hence $X_p\subset(0,\infty)^d$ for such $p$.

\subsection{A transported Banach structure}\label{appendix:application-banach}

Let
\[
\mathcal{Q}
=\left\{\vQ_{(\vtheta,\vomega)}:(\vtheta,\vomega)\in
\R^{d_C}\times\R^{d_\zeta}\right\}
\]
be the class of dynamics generated by the chosen parametrization of marginal costs and markups,
\[
\vQ_{(\vtheta,\vomega)}(\vx)
=\vc_{\vtheta}+\vzeta_{\vomega}(\vx)-\vx.
\]
Assume the parametrization $(\vtheta,\vomega)\mapsto \vQ_{(\vtheta,\vomega)}$ is injective. Transport the vector-space operations from $\R^{d_C}\times\R^{d_\zeta}$ to $\mathcal{Q}$ by defining
\[
\begin{aligned}
\vQ_{(\vtheta_1,\vomega_1)}\oplus \vQ_{(\vtheta_2,\vomega_2)}
&:=\vQ_{(\vtheta_1+\vtheta_2,\vomega_1+\vomega_2)},\\
\lambda\odot \vQ_{(\vtheta,\vomega)}
&:=\vQ_{(\lambda\vtheta,\lambda\vomega)}.
\end{aligned}
\]
Equip $\mathcal{Q}$ with the transported norm
\[
\|\vQ_{(\vtheta,\vomega)}\|_{\mathcal Q}
=\|(\vtheta,\vomega)\|_2.
\]
The injectivity assumption makes these operations and this norm well defined. With this structure, the parametrization is a linear isometry from $\R^{d_C}\times\R^{d_\zeta}$ onto $\mathcal{Q}$. Since $\R^{d_C}\times\R^{d_\zeta}$ is finite-dimensional and complete, $\mathcal{Q}$ is a Banach space.

\end{document}